\documentclass[USenglish]{lmcs}
\pdfoutput=1

\usepackage{lastpage}
\lmcsdoi{16}{1}{24}
\lmcsheading{}{\pageref{LastPage}}{}{}%
{Mar.~01,~2018}{Feb.~21,~2020}{}

\usepackage[utf8]{inputenc}
\usepackage[T1]{fontenc}
\usepackage{datetime}

\usepackage{amsmath}
\usepackage{mytikz}
\usepackage{amssymb}
\usepackage{stmaryrd}
\usepackage[multiuser,layout=margin]{fixme}
\usepackage{babel}
\usepackage{xspace}
\usepackage{hyperref}
\usepackage{url}
\usepackage{microtype}
\usepackage{multirow,tabulary}

\input{define.tex}

\FXRegisterAuthor{oc}{aoc}{OC}
\FXRegisterAuthor{mc}{amc}{MC}
\FXRegisterAuthor{cp}{acp}{CP}
\revision$LastChangedRevision: 1620 $

\keywords{Transducers, rational functions, language varieties, continuity}

\title[Continuity of Functional Transducers]{Continuity of Functional
  Transducers:\texorpdfstring{\\}{} A Profinite Study of Rational Functions}

\begin{document}

\titlecomment{Extended version of the paper ``Continuity and
  Rational Functions'' appearing in \emph{Proceedings ICALP'17}.}

\author[M. Cadilhac]{Michaël Cadilhac\rsuper{a}}
\address{\lsuper{a}DePaul University}
\email{michael@cadilhac.name}

\author[O. Carton]{Olivier Carton\rsuper{b}}
\address{\lsuper{b}IRIF, Université Paris Diderot}
\email{olivier.carton@irif.fr}

\author[C. Paperman]{Charles Paperman\rsuper{c}}
\address{\lsuper{c}Université de Lille}
\email{charles.paperman@univ-lille.fr}

\begin{abstract}
  A word-to-word function is continuous for a class of languages~$\cV$ if its
  inverse maps $\cV$_languages to~$\cV$.  This notion provides a basis for an
  algebraic study of transducers, and was integral to the characterization of
  the sequential transducers computable in some circuit complexity classes.

  Here, we report on the decidability of continuity for functional transducers
  and some standard classes of regular languages.  To this end, we develop a
  robust theory rooted in the standard profinite analysis of regular languages.

  Since previous algebraic studies of transducers have focused on the sole
  structure of the underlying input automaton, we also compare the two algebraic
  approaches.  We focus on two questions: When are the automaton structure and
  the continuity properties related, and when does continuity propagate to
  superclasses?
\end{abstract}

\maketitle

\setcounter{tocdepth}{2}

\section{Introduction}

The algebraic theory of regular languages is tightly interwoven with fundamental
questions about the computing power of Boolean circuits and logics.  The most
famous of these braids revolves around $\vA$, the class of \emph{aperiodic} or
\emph{counter-free} languages.  Not only is it expressed using the logic
$\text{FO}[<]$, but it can be seen as the basic building block of $\ACz$, the
class of languages recognized by circuit families of polynomial size and
constant depth.  This class is in turn expressed by the logic
$\text{FO}[\text{arb}]$ (see~\cite{straubing94} for a lovely account).  This
pervasive interaction naturally suggests lifting this study to the functional
level, hence to \emph{rational functions}.  This was started
in~\cite{cadilhac-krebs-ludwig-paperman15}, where it was shown that a
subsequential (i.e., input-deterministic) transducer computes an $\ACz$ function
iff it preserves the regular languages of $\ACz$ by inverse image.  Buoyed by
this clean, semantic characterization, we wish to further investigate this
latter property for different classes: say that a function $f\colon A^* \to B^*$
is $\cV$_continuous, for a class of languages $\cV$, if for every language
$L \subseteq B^*$ of $\cV$, the language $f^{-1}(L)$ is also a language of
$\cV$.  Our main focus will be on deciding $\cV$_continuity for rational
functions; before listing our main results, we emphasize two additional
motivations.

First, there has been some historical progression towards this goal.  Noting,
in~\cite{pin-sakarovitch82}, that inverse rational functions provide a uniform
and compelling view of a wealth of natural operations on regular languages, Pin
and Sakarovitch initiated in~\cite{pin-sakarovitch85} a study of
regular-continuous functions.  It was already known at the time, by a result of
Choffrut~(see~\cite[Theorem 2.7]{berstel79}), that regular-continuity together
with some uniform continuity property \emph{characterize} functions computed by
subsequential transducers.  This characterization was instrumental in the study
of \reutschu~\cite{reutenaeur-schutzenberger95}, who already noticed the
peculiar link between uniform continuity for some distances on words and
continuity for certain classes of languages.  This link was tightened by Pin and
Silva~\cite{pin-silva05} who formalized a topological approach and generalized
it to rational relations.  More recently~\cite{pin-silva11}, the same authors
made precise the link unveiled by \reutschu, and developed a fascinating and
robust framework in which language continuity has a topological interpretation
(see the beginning of Section~\ref{sec:contapp}, as we build upon this theory).
Pin~and~Silva~\cite{pin-silva17} notably proposed thereafter a study of
functions for which continuity for a class is propagated to subclasses.  In
addition, Daviaud et al.\ \cite{DaviaudRT16,DaviaudJRV17} recently explored
continuity notions in the spirit of Choffrut's characterization to study
weighted automata and cost-register automata.

Second, the interweaving between languages, circuits, and logic that was alluded
to previously can in fact be formally stated (see
again~\cite{straubing94,tesson-therien07}).  A central property towards this
formalization is the correspondence between ``cascade products'' of automata,
stacking of circuits, and nesting of formulas, respectively.  Strikingly, these
operations can all be seen as inverse rational
functions~\cite{tesson-therien07}.  These operations are intrinsic in the
construction of complex objects: languages, circuits, and formulas are often
given as a sequence of simple objects to be composed (see, e.g.,~\cite[Section
5.5]{schneider04}).  We remark that a sufficient condition for the result of the
composition to be in some given class (of languages, circuits, or logic
formulas), is that each rational function be continuous for that class.  Hence
deciding continuity allows to give a sufficient condition for this membership
question \emph{without} computing the result of the composition, which is
subject to combinatorial blowup.

Here, we report on three questions, the first two relating continuity to the
other main algebraic approach to transducers, while allowing a more gentle
introduction to the evaluation of \emph{profinite words} by transducers:
\begin{itemize}
\item When does the transducer \emph{structure} (i.e., its so-called
  \emph{transition monoid}) impact its continuity?  The results of
  \reutschu~\cite{reutenaeur-schutzenberger95} can indeed be seen as the
  starting point of two distinct algebraic theories for rational functions; on
  the one hand the study of continuity, and on the other the study of the
  transition monoid of the transducer (disregarding the output).  This latter
  avenue was explored by~\cite{filiot-gauwin-lhote16}.
  We show in Section~\ref{sec:struct}:
  \begin{thm}
    Let $\cV$ be a variety of languages among $\vJ$, $\vR$, $\vL$, $\vDA$,
    $\vA$, $\vCom$, $\vAb$, $\vGnil$, $\vGsol$, or $\vG$.
    \begin{itemize}
    \item The statement ``Any rational function \emph{structurally} in $\cV$ is
      continuous for $\cV$'' holds for $\cV \in \{\vA, \vGsol, \vG\}$ and does
      not otherwise;
    \item The statement ``Any rational function continuous for $\cV$ is
      \emph{structurally} in $\cV$'' holds for $\cV \in \{\vGnil, \vGsol, \vG\}$
      and does not otherwise.
    \end{itemize}
  \end{thm}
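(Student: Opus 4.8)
The plan is to reduce both statements to a single structural dichotomy: the first implication is governed by closure of $\cV$ under wreath product, the second by $\cV$ being a variety of groups. I write ``$f$ structurally in $\cV$'' for ``the transition monoid of the canonical transducer of $f$ lies in $\cV$'', and I use freely that the pseudovariety of monoids associated with a variety of languages is closed under division and finite products.

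\emph{First statement, the cases $\vA$, $\vGsol$, $\vG$.} These are exactly the listed varieties whose associated pseudovariety is closed under wreath (equivalently semidirect) product: aperiodicity, solvability of a group, and finiteness of a group each survive wreath products, whereas the semidirect-product closure of any of $\vJ$, $\vR$, $\vL$, $\vDA$ is already $\vA$, while $\vCom$ and $\vAb$ fail since $\mathbb{Z}/2\wr\mathbb{Z}/2$ is non-commutative, and $\vGnil$ fails since $\mathbb{Z}/2\wr\mathbb{Z}/3$ is not nilpotent. Given closure, I would argue: if $f$ is structurally in $\cV$, realised by a transducer with transition monoid $M\in\cV$, and $L$ is a $\cV$-language recognised by $N\in\cV$, then $f^{-1}(L)$ is recognised by the cascade automaton running the minimal automaton of $L$ on top of the input automaton of the transducer (feeding it, at each step, the word the transducer outputs); its transition monoid divides the wreath product $N\wr M$, hence lies in $\cV$, so $f^{-1}(L)\in\cV$ and $f$ is $\cV$-continuous. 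The one place needing care is the cascade bookkeeping when the transducer is only unambiguous, so that the transition monoids are monoids of relations on the state set.

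\emph{First statement, the other seven varieties.} Here one exhibits, for each $\cV$, a transducer structurally in $\cV$ that is not $\cV$-continuous. A single example covers $\vCom$, $\vAb$ and $\vGnil$ at once: over the input alphabet $\{a,b\}$ take the input-deterministic transducer with three states tracking $\#_a$ modulo $3$, where $a$ outputs nothing and $b$ read in the state ``$\#_a\equiv i$'' outputs a fresh letter $b_i$; its transition monoid is $\mathbb{Z}/3$, which is commutative, an abelian group, and nilpotent. For $L$ the group language $\{w:\#_{b_0}(w)\equiv 0\bmod 2\}$ one checks that the syntactic monoid of $f^{-1}(L)$ is $\mathbb{Z}/2\wr\mathbb{Z}/3$ — the three phases carry independent parity bits that the $a$-action rotates cyclically — which is neither commutative, nor abelian, nor nilpotent. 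For $\vJ$, $\vR$, $\vL$ and $\vDA$ the witnesses are counter-free transducers whose output duplicates or rearranges portions of the input so that $f^{-1}(L)$, for a suitable piecewise-testable (hence $\cV$-) language $L$, realises a monoid inside the semidirect-product closure of $\cV$ but outside $\cV$. I expect this last family to be the main obstacle of the whole theorem: with the smallest non-trivial transition monoids the inverse images tend to remain well-behaved, and each construction must be pushed just far enough to leave $\vJ$, $\vR$, $\vL$, $\vDA$ respectively (recalling that these four are pairwise incomparable apart from $\vJ$ sitting below the other three).

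\emph{Second statement.} For $\cV\in\{\vGnil,\vGsol,\vG\}$ I would first show, assuming $f$ is $\cV$-continuous, that the input automaton of the canonical transducer is a permutation automaton, i.e.\ its transition monoid is a group: a letter acting non-injectively on states would merge two states with genuinely distinct output-futures, and since any two distinct words are separated already by a $\vGnil$-language (the free group being residually finite nilpotent), separating those futures by a $\cV$-language $L$ would exhibit in $f^{-1}(L)$ the tell-tale ``threshold'' behaviour (a two-element submonoid $\{1,e\}$ with $e^2=e\ne 1$) of a non-group monoid, contradicting $f^{-1}(L)\in\cV$. Once the transition monoid is a group $G$, a second round of the same idea — now pulling back $\cV$-languages that separate outputs along the loops of the permutation automaton, exploiting that $G$ acts by permutations so that no ``counting collapse'' obscures its composition factors — would give $G\in\cV$; pinpointing, when $G\notin\cV$, a language $L\in\cV$ whose pre-image inherits a non-$\cV$ composition factor is the delicate step, and it works precisely because group varieties can detect the cyclic behaviour that aperiodic varieties cannot. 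That is also why the statement fails for the rest: the function $a^n\mapsto a^{\lfloor n/2\rfloor}$, realised by a two-state transducer with transition monoid $\mathbb{Z}/2$, is $\cV$-continuous for every $\cV\in\{\vJ,\vR,\vL,\vDA,\vA\}$ — over a one-letter output alphabet all four varieties collapse to ``finite or cofinite'', and the pre-image of such a language is a threshold condition on $\#_a$ — yet it is structurally in none of them; and the function sending a word's first letter to its end is $\vCom$- and $\vAb$-continuous, since a commutative language sees only the multiset of letters, which rotation preserves, while its transition monoid — a left-zero semigroup with an identity adjoined — is non-commutative and not a group.
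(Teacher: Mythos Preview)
Your high-level dichotomy is right---wreath-product closure governs the first implication, group structure the second---and your positive argument for $\vA,\vGsol,\vG$ via cascade products matches the paper's (which invokes Sakarovitch's block-product characterisation). Your $\mathbb{Z}/2\wr\mathbb{Z}/3$ example for $\vCom,\vAb,\vGnil$ is correct and a little different from the paper's, which splits these cases. But two genuine gaps remain.

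For the aperiodic negative cases of the first statement you give no construction and flag this as ``the main obstacle of the whole theorem''. It is not: the paper's witnesses are tiny. The unambiguous two-state transducer that outputs only the last input letter is a $\vJ$-transducer (the two letters act identically and idempotently), and $\tau^{-1}(\{a\})=A^*a\notin\vR$; this handles $\vJ$ and $\vR$, and $\vCom$ as well. The $\vL$-transducer that deletes the first letter of each maximal block has $\tau^{-1}(\{1\})=(ab)^*\notin\vDA$, handling $\vL$ and $\vDA$.

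The serious gap is the positive direction of the second statement. Your argument rests on ``the canonical transducer of $f$'', but no such object exists for arbitrary rational functions; only subsequential functions admit a minimal transducer with the minimality property you implicitly use, and the functions here need not be subsequential a priori. Moreover, the claim that a non-bijective letter ``would merge two states with genuinely distinct output-futures'' presupposes a deterministic input automaton: in an unambiguous transducer a letter acts as a \emph{relation}, and neither non-injectivity nor the ``threshold'' idempotent you want in $f^{-1}(L)$ follows from that alone. The paper proceeds quite differently: it first puts the transducer into a specific normal form (product with a tailored codeterministic automaton, then output-earliest normalisation, then state merging), and then uses the Syncing Lemma together with cancellation in the free group to rule out proper forks and reverse forks, deduce plurisubsequentiality, and finally verify $q.u=q.v$ for every defining equation of $\cV$. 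Your ``second round of the same idea'' is exactly where this nontrivial machinery is hidden.
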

\item What is the impact of \emph{variety inclusion} on the inclusion of the
  related classes of continuous rational functions?  When focusing on transducer
  structure alone, there is a natural propagation to superclasses; when is it
  the case for continuity?  We show in Section~\ref{sec:inc}:
  \begin{thm}
    Let $\cV$ and $\cW$ be two different varieties of languages among $\vJ$,
    $\vR$, $\vL$, $\vDA$, $\vA$, $\vCom$, $\vAb$, $\vGnil$, $\vGsol$, or $\vG$.
    The statement ``all rational functions continuous for $\cV$ are continuous
    for $\cW$'' holds only when one of these properties is satisfied:
    \begin{itemize}
    \item $\cV, \cW \in \{\vGnil, \vGsol, \vG\}$ and $\cV \subseteq \cW$;
    \item $\cV = \vAb$ and $\cW = \vCom$;
    \item $\cV = \vDA$ and $\cW = \vA$.
    \end{itemize}
  \end{thm}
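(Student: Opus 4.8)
The plan is to prove the characterisation by elimination, writing $\cV\rightsquigarrow\cW$ for ``every rational function continuous for $\cV$ is continuous for $\cW$''. First, since $\mathrm{id}^{-1}(L)=L$, the identity function is continuous for every variety, so $\cV\rightsquigarrow\cW$ can hold only when $\cV\subseteq\cW$; it therefore suffices to examine the finitely many strict inclusions among the ten listed varieties. For $\cV\subsetneq\cW$ with $\cV,\cW\in\{\vGnil,\vGsol,\vG\}$ I would invoke the preceding theorem directly: there $\cW\in\{\vGsol,\vG\}$, so being structurally in $\cW$ implies continuity for $\cW$, while continuity for $\cV$ implies being structurally in $\cV$; since a transition monoid in $\cV$ lies a fortiori in $\cW$, continuity for $\cV$ entails continuity for $\cW$.

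For $\vDA\rightsquigarrow\vA$ the plan is to show that $\vDA$-continuity already forces a structurally aperiodic transducer --- a nontrivial group in the transition monoid would be exposed by a $\vDA$-language (e.g.\ one whose syntactic monoid lies in $\vDA$, such as ``the output begins with a fixed word'') whose preimage detects the modular counting and so fails to be aperiodic, contradicting $\vDA$-continuity --- and then to conclude continuity for $\vA$ from the preceding theorem. For $\vAb\rightsquigarrow\vCom$ the plan is to use that every commutative language is a Boolean combination of modular Parikh conditions (syntactic monoid abelian, hence in $\vAb$) and threshold Parikh conditions (syntactic monoid $\mathcal{J}$-trivial, hence in $\vJ$); since preimage commutes with Boolean operations, the modular part is immediate from $\vAb$-continuity, and the remaining point is that for a \emph{rational} $\vAb$-continuous $f$ the preimage of a threshold condition is commutative. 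The crux is that $\vAb$-continuity makes $|f(u)|_b\bmod m$ factor through $\mathrm{Parikh}(u)\bmod m'$ for all $m$ and all output letters $b$, which for rational $f$ pins $|f(u)|_b$ down to a Parikh-determined quantity whose superlevel sets are finite Boolean combinations of Parikh thresholds.

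For the negative cases one exhibits separating rational functions. For $\vAb$ against $\vGnil$, $\vGsol$ and $\vG$ simultaneously, I would take $f\colon\{a,b,c\}^*\to\{a,b\}^*$ that copies $a$ and $b$ and maps each $c$ to $ab$ if a $b$ has already been read and to $ba$ otherwise: then $|f(u)|_a=|u|_a+|u|_c$ and $|f(u)|_b=|u|_b+|u|_c$ depend only on $\mathrm{Parikh}(u)$, so $f^{-1}$ of any abelian-group language is again one, giving continuity for $\vAb$. For the language $L=\{w\in\{a,b\}^*: \#\{i<j : w_i=a,\ w_j=b\}\text{ is even}\}$, which is recognised by a nilpotent group of order $8$ and lies in $\vGnil\subseteq\vGsol\subseteq\vG$ but not in $\vAb$, the preimage $f^{-1}(L)$ must record whether a $b$ has already been read --- this changes the output of a subsequent $c$, e.g.\ $f(c)=ba\in L$ but $f(bbc)=bbab\notin L$, so $c\in f^{-1}(L)$ while $bbc\notin f^{-1}(L)$ --- and this feature is not reversible, so the syntactic monoid of $f^{-1}(L)$ is not a group; hence $f^{-1}(L)\notin\vG$, a fortiori $f^{-1}(L)\notin\vGsol,\vGnil$. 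For the aperiodic hierarchy --- $\vJ$ against $\vR,\vL,\vDA,\vA$ and $\vR,\vL$ against $\vA$ --- I would build, for each smaller variety $\cV$, a small transducer tuned so that preimages of $\cV$-languages stay in $\cV$ (verified through the characterisation of $\cV$-continuity), yet whose preimage of one carefully chosen target in the larger variety realises, via the wreath-product principle, a syntactic monoid outside $\cV$; this uses that $\vA$ is closed under wreath product while $\vJ,\vR,\vL,\vDA$ are not, and for $\cW=\vA$ it requires a transition monoid with a nontrivial group whose modular counting is inert on $\cV$-targets but active on one chosen aperiodic target.

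The routine parts are the reduction and the group pairs. The hard part will be, on the positive side, the proof of $\vAb\rightsquigarrow\vCom$ --- showing that the purely modular hypothesis secretly controls the threshold behaviour of a rational function's output --- and, on the negative side, the aperiodic-hierarchy separators: these are small transducers, but establishing simultaneously their continuity for the smaller variety and their non-continuity for the larger one needs a delicate bookkeeping of which wreath-product factors survive into the syntactic monoid of a preimage, and in particular, for the targets in $\vA\setminus\vR$ and $\vA\setminus\vL$, the construction of a group inside the transducer that is hidden from all smaller-variety targets yet detectable by one aperiodic target.
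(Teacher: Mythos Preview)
Two genuine gaps.

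First, your reduction step does not work: the full identity is continuous for \emph{every} variety, so its being simultaneously $\cV$- and $\cW$-continuous is vacuous and says nothing about $\cV\subseteq\cW$. What is needed (and what the paper does) is the identity \emph{with domain restricted to some} $L\in\cV(A^*)\setminus\cW(A^*)$: then $f^{-1}(K)=K\cap L$, so $f$ is $\cV$-continuous, while $f^{-1}(A^*)=L\notin\cW(A^*)$ shows it is not $\cW$-continuous.

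Second, and more seriously, your route to $\vDA\rightsquigarrow\vA$ is blocked. You claim that $\vDA$-continuity forces an aperiodic transition monoid, but this is false: over a unary alphabet, the transducer erasing every second $a$ is continuous for all of $\vJ,\vR,\vL,\vDA,\vA$ (every aperiodic unary language is finite or cofinite, and so is every preimage), yet its transition monoid is $\mathbb{Z}/2\mathbb{Z}$ --- this is precisely the example the paper uses to show that $\vA$-continuity does not imply $\vA$-realizability. The paper's actual proof of $\vDA\rightsquigarrow\vA$ bypasses structure entirely: it combines the Syncing Lemma with the identity $\equ_{\vA}(x^\omega,y^\omega)=\equ_{\vDA}(x^\omega,y^\omega)$ for finite words $x,y$. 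Since $\vA$ is defined by the equations $x^{\omega+1}=x^\omega$, which are already $\vDA$-equations, the Syncing-Lemma conditions for $\vA$-continuity are contained in those for $\vDA$-continuity.

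On the parts where your plan is not wrong, it is heavier than necessary. For $\vAb\rightsquigarrow\vCom$ the paper does not decompose commutative languages into modular and threshold parts; it just observes that $\vAb$ and $\vCom$ separate exactly the same pairs of \emph{finite} words, so $u=_{\vAb}v$ implies $u=_{\vCom}v$ for words $u,v$, and since $\vCom$ is defined by word equations the Preservation Lemma concludes immediately. For the aperiodic separators the paper also avoids wreath products and any syntactic-monoid bookkeeping: it works over input alphabet $\{a\}$, takes $\tau(a^{2n})=(ab)^n$, $\tau(a^{2n+1})=(ab)^n a$, and checks via the Preservation Lemma that $\tau(a^\omega)=(ab)^\omega$ and $\tau(a^{\omega+1})=(ab)^\omega a$ are equal in $\vJ$ and $\vR$ but not in $\vL$, $\vDA$, or $\vA$ --- a profinite two-liner rather than an analysis of which group factors survive in a preimage.
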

\item When is $\cV$_continuity decidable for rational functions?  We show in
  Section~\ref{sec:dec}:
  \begin{thm}
    Let $\cV$ be a variety of languages among $\vJ$, $\vR$, $\vL$, $\vDA$,
    $\vA$, $\vCom$, $\vAb$, $\vGsol$, or $\vG$.  It is decidable, given an
    unambiguous rational transducer, whether it realizes a function continuous
    for $\cV$.
  \end{thm}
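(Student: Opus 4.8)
The plan is to reduce $\cV$-continuity to a finite verification problem on the behaviour of the transducer on \emph{profinite words}, following the topological framework of Pin and Silva that the excerpt announces it builds upon. Recall that a class of regular languages $\cV$ (a variety) corresponds to the free pro-$\mathbf{V}$ monoid $\widehat{B^*}_{\cV}$, and that a language $L \subseteq B^*$ is in $\cV$ iff its topological closure $\overline{L}$ in $\widehat{B^*}_{\cV}$ is open-closed and $L = \overline{L} \cap B^*$. A function $f\colon A^* \to B^*$ is $\cV$-continuous precisely when $f$ extends to a continuous map $\widehat{f}\colon \widehat{A^*}_{\cV} \to \widehat{B^*}_{\cV}$; this is the standard topological reformulation, and it is the engine of the whole argument. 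So the first step is to set up, for an unambiguous transducer $\transT$ realising $f$, a description of when such a continuous extension exists, purely in terms of the syntactic data of $\transT$.

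Next I would exploit the fact that an unambiguous (equivalently functional) rational transducer $\transT$ has a finite transition structure and that its output behaviour along a run can be read off via a monoid morphism into a finitely-generated structure — the transition monoid of $\transT$ acting together with the output. The key observation is that continuity of $\widehat{f}$ is equivalent to the following local condition: for every pair of profinite words $u,v \in \widehat{A^*}_{\cV}$ that are ``$\transT$-equivalent'' in the sense that they induce the same run-behaviour in the transition monoid, the outputs $\widehat{f}(u)$ and $\widehat{f}(v)$ coincide in $\widehat{B^*}_{\cV}$. Since the transition monoid is finite, there are finitely many such behaviours, and each is witnessed by ultimately-periodic or, more generally, by \emph{$\omega$-term} / word-pattern witnesses; thus the quantification over profinite words collapses to a decidable quantification over a finite set of equations in the free pro-$\cV$ monoid. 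Deciding whether two explicitly-given pseudowords (given by $\omega$-terms, i.e.\ finite expressions built from concatenation and the $\omega$-power) are equal in $\widehat{B^*}_{\cV}$ is exactly the \emph{$\omega$-term equivalence problem} (or word problem for the free pro-$\cV$ monoid), which is known to be decidable for each of $\vJ, \vR, \vL, \vDA, \vA, \vCom, \vAb, \vGsol, \vG$; I would cite these results (e.g.\ from Almeida's work and its successors) rather than reprove them. Combining: enumerate the finitely many behaviour-pairs, for each produce the finitely many defining equations the continuous extension would have to satisfy on the output side, and check each using the appropriate decidable word problem.

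There are two places where care is needed. First, translating ``same run-behaviour'' into a \emph{finite} set of pseudoidentities requires choosing the right notion of behaviour: it is not merely the transition-monoid image of the input, because the output matters, so one must track pairs (transition, output factor) and argue that the resulting combined monoid — or the relevant quotient thereof, after forgetting everything the output topology cannot see — is still finite and effectively computable from $\transT$. This is routine for functional transducers because the Lipschitz/bounded-variation property of rational functions (Choffrut's condition, mentioned in the excerpt) bounds the output delay, so the ``output increment'' along a fixed transition-monoid element lives in a finite set. Second, one must handle the asymmetry between the three group varieties and the rest cleanly; for $\vG$ this is essentially a statement about regular languages closed under inverse morphism and the argument is lightest there, whereas for $\vAb$ and $\vCom$ one is working with commutative structures where the word problem is linear-algebraic. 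The \textbf{main obstacle} I anticipate is precisely the reduction of the ``for all profinite $u,v$'' condition to a finite, effective list of output equations: making sure that the pseudowords $\widehat{f}(u)$ that arise are themselves presented by explicit $\omega$-terms (so that the decidable word problems are actually applicable), rather than by some abstract limit, is the technical heart of the proof, and it is where the finiteness of the transducer and the bounded-delay property of rational functions must be used most carefully.
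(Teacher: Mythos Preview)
Your ``key observation'' is not correct as stated, and the gap it leaves is exactly the technical heart of the paper. You claim continuity amounts to: whenever two profinite words have the same run-behaviour in the transition monoid, their outputs agree in the pro-$\cV$ monoid. But the transition monoid of $\tau$ need not lie in $\mathbf{V}$ at all---the paper exhibits $\vA$-continuous rational functions that are \emph{not} $\vA$-realizable---so there is no reason the pro-$\cV$ quotient of $A^*$ factors through the transition monoid, and ``same run-behaviour'' neither implies nor is implied by $u =_\cV v$. The correct condition (the paper's Preservation Lemma) is: for every defining equation $u = v$ of $\cV$ and every pair of \emph{context words} $s,t \in A^*$, one has $\widehat{\tau}(s u t) =_\cV \widehat{\tau}(s v t)$. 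This is an infinite family of checks parametrised by $s,t$, and the outputs produced on the $s,t$ portions are unbounded words that cannot be packaged as a single $\omega$-term. Your appeal to Choffrut's bounded-variation property does not rescue this: that property characterises \emph{subsequential} functions, whereas the theorem concerns arbitrary unambiguous transducers, for which the output along a fixed transition-monoid class is genuinely infinite.

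What the paper actually does is split by variety type and never reduces to a bare $\omega$-term word problem. For the group varieties $\vGsol,\vG$ it proves a normal-form lemma showing that any $\cV$-continuous unambiguous transducer can be rewritten as a plurisubsequential $\cV$-transducer, so continuity coincides with realizability and one just tests the structure. For the aperiodic varieties $\vJ,\vR,\vL,\vDA,\vA$ the Syncing Lemma decouples the context $(s,t)$ from the equation part: the contexts contribute a pair of \emph{rational relations} (input synchronizations $\Tif{\tau,\nop,p}\sync\Tif{\tau,\nop,q}$ and their duals), and the check becomes whether these rational relations are included in an explicitly described ``equalizer set'' of the form $\Id\cdot((x^*,x^*)\rho^{-1})$. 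The decidability then rests on a bespoke result about inclusion of rational relations in such sets, together with a finite enumeration of ``pertaining'' state-triples---not on the $\omega$-term word problem. For $\vCom,\vAb$ the Syncing Lemma reduces directly to inclusion of semilinear sets via Parikh images. So the finiteness you need comes from rational-relation technology and structural normal forms, not from the transition monoid collapsing the profinite quantifier.
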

  \noindent%
  This constitutes our main contribution; note that the case $\vGnil$ is left
  open.
\end{itemize}

\tableofcontents

\section{Preliminaries}

We assume some familiarity with the theory of automata and transducers, and
concepts related to metric spaces (see, e.g.,~\cite{berstel79,pin} for
presentations pertaining to our topic).  We first settle the notation for these
prerequisites.

We will use $A$ and $B$ for alphabets, and $A^*$ for words over~$A$, with $1$
the empty word.  For each word $u$, there is a smallest $v$, called the
\emph{primitive root} of $u$, such that $u=v^c$ for some $c$; if $c=1$, then $u$
is itself \emph{primitive}.  We write $|u|$ for the length of a word $u \in A^*$
and $\cts(u)$ for the set of letters that appear in $u$.

Let \(L \subseteq A^*\) be a language.  We write $L\comp$ for the complement of
$L$.  For a word $u \in A^*$, we write $u^{-1}L$ for
$\{v \mid u\cdot v \in L\}$, and symmetrically for $Lu^{-1}$, these two
operations being called the left and right quotients of $L$ by $u$,
respectively.  We naturally extend concatenation and quotients to binary
relations, in a component-wise fashion, e.g., for $R \subseteq A^* \times A^*$
and a pair $\rho \in A^* \times A^*$, we may use $\rho^{-1}R$ and $R\rho^{-1}$.
More generally, with \(\Id\) the identity relation, we will write, e.g.,
\(\Id \cdot ((x^*, x^*)\rho^{-1})\) for the pairs \((ww_1, ww_2)\) of words such
that \(w_i \in x^*\rho_i^{-1}\), \(i = 1,2\).

A \emph{variety} is a mapping $\cV$ which associates with each alphabet $A$ a
set $\cV(A^*)$ of regular languages closed under the Boolean operations and
quotient, and such that for any morphism $h\colon A^* \to B^*$ and any
$L \in \cV(B^*)$, we have that $h^{-1}(L) \in \cV(A^*)$.  Reg is the variety
that maps every alphabet $A$ to the set $\text{Reg}(A^*)$ of regular languages
over $A$.

Given two languages $K, L \subseteq A^*$, we say that they are |$\cV$_separable|
if there is a $S \in \cV(A^*)$ such that $K \subseteq S$ and
$L \cap S = \emptyset$.  Since \(\cV\) is closed under complement, \(K, L\) are
\(\cV\)_separable iff \(L, K\) are.  Naturally, \(L \in \cV(A^*)\) iff \(L\) is
\(\cV\)_separable from \(L\comp\).

\subsubsection*{Transducers.}  A transducer $\tau$ is a 9-tuple
$(Q, A, B, \delta, I, F, \lambda, \mu, \rho)$ where $(Q, A, \delta, I, F)$ forms
a nondeterministic automaton (i.e., $Q$ is a state set, $A$ an input alphabet,
$\delta \subseteq Q \times A \times Q$ a transition set, $I \subseteq Q$ a set
of initial states, and $F \subseteq Q$ a set of final states), and additionally,
$B$ is an output alphabet and
$\lambda\colon I \to B^*, \mu\colon \delta \to B^*, \rho\colon F\to B^*$ are the
output functions.  We write $\Tif{\tau, q, q'}$ for $\tau$ with $I := \{q\}$ and
$F := \{q'\}$, adjusting $\lambda$ and $\rho$ to output $1$ if they were
undefined on these states.  Similarly, $\Tif{\tau, q, \nop}$ is $\tau$ with
$I := \{q\}$ and $F$ unchanged, and symmetrically for~$\Tif{\tau, \nop, q}$.
For $q \in Q$ and $u \in A^*$, we write $q.u$ for the set of states reached from
$q$ by reading $u$.  We assume that all the transducers and automata under study
have no useless state, that is, all states appear in some accepting path.

With $w \in A^*$, let $t_1t_2\cdots t_{|w|} \in \delta^*$ be an accepting path
for $w$, starting in a state $q \in I$ and ending in some $q' \in F$.  The
output of this path is $\lambda(q)\mu(t_1)\mu(t_2)\cdots\mu(t_n)\rho(q')$, and
we write $\tau(w)$ for the set of outputs of such paths.  We use $\tau$ for both
the transducer and its associated partial function from $A^*$ to subsets of
$B^*$.  Relations of the form $\{(u, v) \mid v \in \tau(u)\}$ are called
\emph{rational relations}.

The transducer $\tau$ is \emph{unambiguous} if there is at most one accepting
path for each word.  In that case $\Tif{\tau, q, q'}$ is also an unambiguous
transducer for any states $q, q'$.  When $\tau$ is unambiguous, it realizes a
(partial) word-to-word function: the set of functions computed by unambiguous
transducers is the set of \emph{rational functions}.  Further restricting, if
the underlying automaton is deterministic, we say that $\tau$ is a
\emph{subsequential} transducer.  If $\tau$ is a finite union of subsequential
transducers of disjoint domains, we say that $\tau$ is
\emph{plurisubsequential}.

\subsubsection*{Word distances, profinite words.}  For a variety $\cV$ of regular
languages, we define a distance between words for which, intuitively, two words
are close if it is hard to separate them with $\cV$ languages.  Define
$d_\cV(u, v)$, for words $u, v \in A^*$, to be $2^{-r}$ where $r$ is the size of
the smallest automaton that recognizes a language of $\cV(A^*)$ that separates
$\{u\}$ from $\{v\}$; if no such language exists, then $d_\cV(u, v) = 0$.  It can
be shown that this distance is a \emph{pseudo-ultrametric}~\cite[Section
VII.2]{pin}; we make only implicit and innocuous use of this fact.

The complete metric space that is the completion of $(A^*, d_{\text{Reg}})$ is
denoted $\pro{A^*}$ and is called the \emph{free profinite monoid}, its elements
being the \emph{profinite words}, and the concatenation being naturally
extended.  By definition, if ${(u_n)}_{n>0}$ is a Cauchy sequence, it should hold
that for any regular language $L$, there is a $N$ such that either all $u_n$
with $n > N$ belong to $L$, or none does.  For any $x \in A^*$, define the
profinite word $x^\omega = \lim x^{n!}$, and more generally, for any \(c > 0\),
$x^{\omega-c} = \lim x^{n!-c}$.  That ${(x^{n!})}_{n>0}$ is a Cauchy sequence is a
starting point of the profinite theory~\cite[Proposition~VI.2.10]{pin}; it is
also easily checked that $x^{c \times \omega} = \lim x^{c\times n!}$ is equal to
$x^\omega$ for any integer $c \geq 1$.  Given a language $L \subseteq A^*$, we
write $\clos{L} \subseteq \pro{A^*}$ for its closure, and we note that if $L$ is
regular, $\clos{L}\comp = \clos{L\comp}$---the complement being taken in
\(\pro{A^*}\) in the left-hand side and in \(A^*\) in the right-hand side.
Furthermore, for $L'$ regular, $\clos{L \cup L'} = \clos{L} \cup \clos{L'}$, and
similarly for intersection (see~\cite[Theorem VI.3.15]{pin}).

\subsubsection*{Equations.}  For $u, v \in \pro{A^*}$, a language $L \subseteq A^*$
\emph{satisfies the (profinite) equation} $u = v$ if for any words
$s, t \in A^*$,
$[s\cdot u\cdot t \in \clos{L} \Leftrightarrow s \cdot v \cdot t \in \clos{L}]$.
Similarly, a class of languages satisfies an equation if all the languages of
the class satisfy it.  For a variety $\cV$, we write $u =_\cV v$, and say that
$u$ is equal to $v$ in $\cV$, if $\cV(A^*)$ satisfies $u = v$.  For a partial
function $f$, $f(u) =_\cV f(v)$ means that either both $f(u)$ and $f(v)$ are
undefined, or they are both defined and equal in $\cV$.

Given a set $E$ of equations over $\pro{A^*}$, the class of languages |defined|
by $E$ is the class of languages over $A^*$ that satisfy all the equations of
$E$.  Reiterman's theorem shows in particular that for any variety $\cV$ and any
alphabet $A$, $\cV(A^*)$ is defined by a set of equations (the precise form of
which being studied in~\cite{gehrke-grigorieff-pin08}).

\subsubsection*{More on varieties.}  Borrowing from Almeida and
Costa~\cite{almeida-costa17}, we say that a variety $\cV$ is
\emph{supercancellative} when for any alphabet $A$, any $u, v \in \pro{A^*}$ and
$x, y \in A$, if $u\cdot x =_\cV v \cdot y$ or $x \cdot u =_\cV y \cdot v$, then
$u =_\cV v$ and $x = y$.  This implies in particular that for any word
$w \in A^*$, both $w\cdot A^*$ and $A^* \cdot w$ are in $\cV(A^*)$.  We further
say that a variety $\cV$ \emph{separates words} if for any $s, t \in A^*$,
$\{s\}$ and $\{t\}$ are $\cV$_separable.

Our main applications revolve around some classical varieties, that we define
over any possible alphabet $A$ as follows, where $x, y$ range over all of $A^*$,
and $a, b$ over $A$:

\vspace{1em}
\def\dby{def.~by~}%
\noindent\begin{minipage}{.49\textwidth}\small\begin{itemize}
  \item $\vJ$, \dby ${(xy)}^\omega \cdot x = y \cdot {(xy)}^\omega = {(xy)}^\omega$
  \item $\vR$, \dby ${(xy)}^\omega \cdot x = {(xy)}^\omega$
  \item $\vL$, \dby $y \cdot {(xy)}^\omega = {(xy)}^\omega$
  \item $\vDA$, \dby $x^\omega\cdot z \cdot x^\omega = x^\omega$ for all
    $z \in {\cts(x)}^*$
  \item $\vA$, \dby $x^{\omega +1} = x^\omega$
  \end{itemize}
\end{minipage}\;
\rule[-1cm]{.1pt}{2cm}\;\;%
\begin{minipage}{.45\linewidth}
  \small
  \begin{itemize}[nosep]
  \item $\vCom$, \dby $ab = ba$
  \item $\vAb$, \dby $ab = ba$ and $a^\omega = 1$
  \item $\vGnil$, the languages rec.~by nilpotent groups
  \item $\vGsol$, the languages rec.~by solvable groups
  \item $\vG$, the languages rec.~by groups
  \end{itemize}
\end{minipage}
\vspace{.7em}

\noindent
The varieties included in $\vA$ are called \emph{aperiodic varieties} and those
in $\vG$ are called \emph{group varieties}.  Precise definitions, in particular
for the group varieties, can be found in~\cite{straubing94,pin-weil96}; we
simply note that in group varieties, $x^\omega$ equals $1$ for all $x \in A^*$.
All these varieties except for $\vAb$ and $\vCom$ separate words, and only
$\vDA$ and $\vA$ are supercancellative.  They satisfy:

\vspace{.7em}
\centerline{%
  \begin{tikzpicture}[baseline={(a.east)}]
  \node (a) {$\vJ=\vR\cap\vL$};
    \node [anchor=west, above right=0.1 of a] {$\vR$};
    \node at ($(a)+(0.75,0.5)$) [,rotate=35] {$\subsetneq$};
    \node at ($(a)+(0.75,-0.5)$) [,rotate=-35] {$\subsetneq$};
    \node [anchor=west, below right=0.1 of a] {$\vL$};
    \node at ($(a)+(1.85,-0.5)$) [,rotate=35] {$\subsetneq$};
    \node at ($(a)+(1.85,0.5)$) [,rotate=-35] {$\subsetneq$};
    \node at ($(a)+(2.9,0)$) [] {$\subsetneq$};
    \node [anchor=west, right=0.85 of a] {$\vDA$};
    \node [anchor=west, right=2 of a] {$\vA$};
  \end{tikzpicture}
  \qquad
  \begin{tikzpicture}[baseline=(a.east)]
    \node (a) {$\vAb=\vG \cap \vCom\subsetneq\vGnil\subsetneq\vGsol\subsetneq \vG$};
    \node at ($(a.north west)+(2.2,0.4)$) {$\vCom$};
    \node at ($(a.north west)+(1.5,0.25)$)[rotate=35] {$\subsetneq$};
  \end{tikzpicture}
}

\subsubsection*{On transducers and profinite words.}  For a profinite word $u$ and a
state $q$ of an unambiguous transducer $\tau$, the set $q.u$ is well defined;
indeed, with $u = \lim u_n$, the set $q.u_n$ is eventually constant, as
otherwise for some state $q'$, the domain of $\Tif{\tau, q, q'}$ would be a
regular language that separates infinitely many $u_n$'s.

A transducer $\tau\colon A^* \to B^*$ is a \emph{$\cV$_transducer},\footnote{The
  usual definition of $\cV$_transducer is based on the so-called transition
  monoid of~$\tau$, see, e.g.,~\cite{reutenaeur-schutzenberger95}; the
  definition here is easily seen to be equivalent
  by~\cite[Lemma~3.2]{almeida99}
  and~\cite[Lemma~1]{cadilhac-krebs-ludwig-paperman15}.} for a variety $\cV$, if
for some set of equations~$E$ defining~$\cV(A^*)$, for all $(u = v) \in E$ and
all states $q$ of $\tau$, the equality $q.u = q.v$ holds.  A rational function is
\emph{$\cV$_realizable} if it is realizable by a $\cV$_transducer.

\subsubsection*{Continuity.}  For a variety $\cV$, a function
$f\colon A^* \to B^*$ is $\cV$_continuous\footnote{A note on terminology: There
  has been some fluctuation on the use of the term ``continuous'' in the
  literature, mostly when a possible incompatibility arises with topology.
  In~\cite{pin-silva17}, the authors use the term ``preserving'' in the more
  general context of functions from monoids to monoids.  In our study, we focus
  on word to word functions, in which the natural topological context provides a
  solid basis for the use of ``continuous,'' as used
  in~\cite{pin-silva05,cadilhac-krebs-ludwig-paperman15}.}  iff for any
$L \in \cV(B^*)$, $f^{-1}(L) \in \cV(A^*)$.  We mostly restrict our attention to
rational functions.  Since they are computed by transducers, there are countably
many such functions.  We note that many more Reg_continuous functions exist, in
particular uncomputable ones:
\begin{prop}
  There are uncountably many Reg_continuous functions.
\end{prop}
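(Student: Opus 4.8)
The plan is to exhibit an uncountable family of Reg-continuous functions by a cardinality argument: the space of all word-to-word functions is uncountable, and being Reg-continuous is not restrictive enough to cut this down. The cleanest route is to find a single uncountable family where Reg-continuity is essentially automatic. First, I would restrict attention to functions $f\colon A^* \to B^*$ that are \emph{length-preserving and defined letter-by-letter in a position-independent-but-not-finite-state way}, or more simply, functions whose image is ``too sparse to interact with regular languages.'' The key observation is the following: if $f$ has the property that for every word $w$, $f(w)$ lies outside every nontrivial regular language in a controlled way — for instance, if $f$ maps $A^*$ into a fixed language $K \subseteq B^*$ all of whose elements are pairwise ``Reg-inseparable from each other'' — then $f^{-1}(L)$ is either $\emptyset$ or $A^*$ for every $L$, hence trivially regular.

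Concretely, I would take $B = \{b\}$ a one-letter alphabet and $A = \{a\}$ as well, so that $A^* \cong \bbn \cong B^*$, and consider functions $f\colon \bbn \to \bbn$. Here regular languages over a unary alphabet are exactly the \emph{eventually periodic} subsets of $\bbn$. For $f^{-1}(L)$ to be regular for every eventually periodic $L$, it suffices that $f$ send every eventually periodic set to an eventually periodic set under preimage. A sufficient condition is that $f$ eventually dominates every linear function and is monotone — e.g., any strictly increasing $f$ with $f(n+1) - f(n) \to \infty$: then the preimage of any finite set is finite, and the preimage of any arithmetic progression is again finite (since $f$ hits each residue class only finitely often once the gaps exceed the modulus), so $f^{-1}(L)$ is finite whenever $L$ is eventually periodic and coinfinite, and cofinite otherwise — in both cases regular. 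There are uncountably many such $f$: for instance, fix $f(n) = \sum_{i \le n} g(i)$ where $g\colon \bbn \to \bbn$ is any function with $g(i) \to \infty$, and such $g$ (even restricted to, say, $g(i) \in \{i, i+1\}$) already form an uncountable family, all giving distinct $f$.

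The main obstacle is verifying that the preimage of an arbitrary arithmetic progression under such an $f$ is genuinely eventually periodic (here: finite or cofinite), rather than some more complicated set — this requires the gap condition $g(i)\to\infty$ to be used correctly, ruling out that $f$ revisits a fixed residue class infinitely often. One must also confirm that the map from the family of $g$'s to the family of $f$'s is injective, which is immediate since $g(n) = f(n) - f(n-1)$ recovers $g$ from $f$. No other subtlety arises: unariness makes ``regular = eventually periodic'' the only fact needed, and the construction does not need the $f$ to be computable — indeed the point of the proposition is precisely that most Reg-continuous functions are \emph{not} rational (or even computable), justifying the restriction to rational functions in the rest of the paper.
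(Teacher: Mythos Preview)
Your overall strategy --- work over a unary alphabet, where regular languages are exactly the eventually periodic sets, and produce uncountably many functions whose preimages of such sets are always finite or cofinite --- is the same as the paper's. The execution, however, has a genuine gap.

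The step ``$f$ hits each residue class only finitely often once the gaps exceed the modulus'' is false. Large consecutive gaps say nothing about the behaviour of $f(n)$ modulo a fixed $m$. Worse, your concrete family $g(i)\in\{i,i+1\}$ actively fails: since $\{i \bmod 2,\;(i+1)\bmod 2\}=\{0,1\}$ for every $i$, the parity of $f(n)=\sum_{i\le n}g(i)$ can be toggled freely at each step by the choice of $g(n)$. Hence $\{n : f(n)\text{ even}\}$ can be an \emph{arbitrary} subset of $\bbn$, and for almost every choice of $g$ the preimage $f^{-1}((aa)^*)$ is not eventually periodic. So most of the functions in your family are \emph{not} Reg-continuous.

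The paper's fix is to replace ``growing gaps'' by ``growing divisibility'': set $f(a^n)=a^{g(n)!}$ for a strictly increasing $g$. Then for every modulus $j$ one has $g(n)!\equiv 0\pmod j$ as soon as $g(n)\ge j$, so $f^{-1}(a^i(a^j)^*)$ is cofinite when $j\mid i$ and finite otherwise. This is exactly the property you were aiming for, and it is what your gap condition does not deliver.
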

\begin{proof}
  Consider a strictly increasing function $g\colon \bbn \to \bbn$.  Define
  $f\colon {\{a\}}^* \to {\{a\}}^*$ by $f(a^n) = a^{g(n)!}$.  Recall that any
  regular language over a unary alphabet is a finite union of languages of the
  form $a^i{(a^j)}^*$.  Moreover, we have that $f^{-1}(a^i{(a^j)}^*)$ is finite when
  $i \not\equiv 0 \bmod j$, and cofinite otherwise, thus $f$ is Reg_continuous.
  There are however uncountably many increasing functions $g$, hence uncountably many
  Reg_continuous functions~$f$.
\end{proof}

Continuity is a formal notion of ``functions being compatible with a class of
languages.''  An equally valid notion could be to consider classes of functions
that contain the characteristic functions of the languages, and closed under
composition; it turns out that the largest such class coincides with the class
of continuous functions.  Indeed, writing $\chi_L\colon A^* \to \{0, 1\}$ for
the characteristic function of a language $L \subseteq A^*$:
\begin{prop}
  Let $\cV$ be a variety such that $\{1\} \in \cV({\{0, 1\}}^*)$.  Let
  $\cF$ be the \emph{largest} class of functions such that:
  \begin{enumerate}
  \item For any alphabet $A$,
    $\cF \cap {\{0, 1\}}^{A^*} = \{\chi_L \mid L \in \cV(A^*)\}$;
  \item $\cF$ is closed under composition.
  \end{enumerate}
  The class $\cF$ is well defined and it coincides with the class of
  $\cV$_continuous functions.
\end{prop}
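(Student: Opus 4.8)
The plan is to let $\cC$ denote the class of all $\cV$_continuous functions, and to prove two things about it: that $\cC$ \emph{itself} satisfies conditions~(1) and~(2), and that \emph{every} class $\cG$ satisfying~(1) and~(2) is contained in $\cC$. The first fact shows that classes satisfying~(1) and~(2) exist at all; the two facts together show that $\cC$ is the greatest such class, so that $\cF$ is well defined and in fact $\cF=\cC$, which is exactly the claim. I do not expect any real obstacle; the one point deserving care is what ``largest'' means, discussed at the end.

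I would first check that $\cC$ satisfies~(1). For the inclusion ``$\supseteq$'': if $L\in\cV(A^*)$ then, since $\chi_L(w)$ is always one of the one-letter words $0$ or $1$, for any $M\subseteq\{0,1\}^*$ the set $\chi_L^{-1}(M)$ depends only on whether $M$ contains $0$ and whether it contains $1$, hence equals one of $\emptyset$, $L$, $L\comp$, $A^*$, all of which lie in $\cV(A^*)$ because varieties are closed under the Boolean operations; so $\chi_L$ is $\cV$_continuous. For ``$\subseteq$'': a $\cV$_continuous $f\colon A^*\to\{0,1\}$ is the characteristic function of $L:=f^{-1}(\{1\})$, and $L\in\cV(A^*)$ because $\{1\}\in\cV(\{0,1\}^*)$ by hypothesis and $f$ is $\cV$_continuous — this is the only place the hypothesis on $\cV$ is used. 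Next I would check that $\cC$ satisfies~(2): this is the routine identity $(g\circ f)^{-1}(L)=f^{-1}\bigl(g^{-1}(L)\bigr)$, which immediately makes the composite of two $\cV$_continuous functions $\cV$_continuous.

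It then remains to show $\cG\subseteq\cC$ for an arbitrary class $\cG$ satisfying~(1) and~(2). Take $f\colon A^*\to B^*$ in $\cG$ and $L\in\cV(B^*)$. By~(1), $\chi_L\in\cG$; by~(2), $\chi_L\circ f\in\cG$; and since $\chi_L\circ f$ has codomain $\{0,1\}$, applying~(1) once more forces $\chi_L\circ f=\chi_K$ for some $K\in\cV(A^*)$. But $\chi_L\circ f=\chi_{f^{-1}(L)}$, so $f^{-1}(L)=K\in\cV(A^*)$; as $L$ was arbitrary, $f$ is $\cV$_continuous. Finally, on the word ``largest'': one cannot simply take the union of all classes satisfying~(1) and~(2), since such a union need not be closed under composition; instead, the argument above exhibits $\cC$ as a member of that family which also bounds it from above, so $\cC$ is the maximum. (Throughout, I would read $\{0,1\}^{A^*}$ as the set of \emph{total} functions $A^*\to\{0,1\}$ and work with total functions, which keeps the composition identities clean; the partial case is identical with the usual conventions.)
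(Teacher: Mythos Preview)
Your proposal is correct and follows essentially the same approach as the paper: define ``good'' classes as those satisfying~(1) and~(2), show the $\cV$_continuous functions form a good class, and show every good class is contained in the $\cV$_continuous functions. If anything you are slightly more thorough than the paper, which omits the explicit verification that each $\chi_L$ with $L\in\cV(A^*)$ is $\cV$_continuous and does not dwell on why ``largest'' is meaningful; your remarks on both points are welcome but do not constitute a different route.
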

\begin{proof}
  We say that a class of functions is \emph{good} if it satisfies
  properties (1) and (2).  We show that the $\cV$_continuous functions form
  a good class, and that any good class is included in the $\cV$_continuous
  functions.  This implies that there \emph{is} a largest good class, and that
  it coincides with the class of $\cV$_continuous functions, as claimed.

  \proofstep{Continuous functions form a good class}%
  Clearly, the class of $\cV$_continuous functions is closed under composition.
  Now consider a $\cV$_continuous function $f \colon A^* \to \{0, 1\}$.  By
  continuity, $L = f^{-1}(\{1\})$ is in $\cV(A^*)$, since by hypothesis
  $\{1\} \in \cV({\{0,1\}}^*)$.  Hence $f = \chi_L$ for some $L \in \cV(A^*)$,
  concluding this step.

  \proofstep{Functions in good classes are continuous}%
  Let $f\colon A^* \to B^*$ be in a good class, and let $L \in \cV(B^*)$; we
  ought to show that $f^{-1}(L)$ is in $\cV(A^*)$.  We have:
  \begin{align*}
    f^{-1}(L) & = f^{-1}(\chi_L^{-1}(1))\\
              & = {(\chi_L \circ f)}^{-1}(1)\\
              & = g^{-1}(1)\enspace. \tag{with $g = \chi_L \circ f$} 
  \end{align*}
  Note that $\chi_L$ is by hypothesis in the good class, and it being closed
  under composition, $g$ also belongs to the good class.  Since
  $g \in {\{0, 1\}}^{A^*}$, there is a $L' \in \cV(A^*)$ such that
  $g = \chi_{L'}$.  This implies that $f^{-1}(L) = L'$, and it thus belongs to
  $\cV(A^*)$.
\end{proof}

\section{Continuity: The profinite approach}\label{sec:contapp}

We build upon the work of Pin and Silva~\cite{pin-silva05} and develop tools
specialized to rational functions.  In Section~\ref{sec:preslem}, we present a
lemma asserting the equivalence between $\cV$_continuity and the
``preservation'' of the defining equations for $\cV$.  In the sections
thereafter, we specialize this approach to rational functions.  As noted
in~\cite{pin-silva05}, it often occurs that results about rational functions can
be readily applied to the larger class of Reg_continuous functions; here, this
is in particular the case for the Preservation Lemma of
Section~\ref{sec:preslem}.

The connection to the classical notion of continuity is given by the next
Theorem.
\begin{thmC}[{\cite[Theorem 4.1]{pin-silva11}}]\label{thm:ucont}
  Let $f\colon A^* \to B^*$.  It holds that $f$ is $\cV$_continuous iff $f$
  is uniformly continuous for the distance $d_\cV$.
\end{thmC}

Consequently, if $f$ is Reg_continuous then it has a unique continuous extension
to the free profinite monoid with domain \(\clos{f^{-1}(B^*)}\), written
$\ext{f}\colon \pro{A^*} \to \pro{B^*}$.
The salient property of this mapping is that it is continuous in the
\emph{topological sense} (see, e.g.,~\cite{pin}).  For our specific needs, we
simply mention that it implies that for any regular language $L$, we have that
$\ext{f}^{-1}(\clos{L})$ is closed (that is, it is the closure of some set).


\subsection{The Preservation Lemma: Continuity is equivalent to preserving
  equations}\label{sec:preslem}

The Preservation Lemma gives us a key characterization in our study: it ties
together continuity and some notion of preservation of equations.  This can be
seen as a generalization for functions of the notion of equation satisfaction
for languages.  We will need the following technical lemma that extends~\cite[Proposition~VI.3.17]{pin} from morphisms to arbitrary Reg_continuous
functions; interestingly, this relies on a quite different proof.
\begin{lem}\label{lem:pinext}
  Let $f\colon A^* \to B^*$ be a Reg_continuous function and $L$ a regular
  language.  The equality $\ext{f}^{-1}(\clos{L}) = \clos{f^{-1}(L)}$ holds.
\end{lem}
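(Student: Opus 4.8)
The plan is to prove the two inclusions $\ext{f}^{-1}(\clos{L}) \subseteq \clos{f^{-1}(L)}$ and $\clos{f^{-1}(L)} \subseteq \ext{f}^{-1}(\clos{L})$ separately, using throughout that $\ext{f}$ is (topologically) continuous on the compact metric space $\pro{A^*}$ and that $f$ agrees with $\ext{f}$ on $A^*$.

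For the easy inclusion $\clos{f^{-1}(L)} \subseteq \ext{f}^{-1}(\clos{L})$: the set $\ext{f}^{-1}(\clos{L})$ is closed, since $\clos{L}$ is closed (it is the closure of $L$) and $\ext{f}$ is continuous. Moreover it contains $f^{-1}(L)$: if $w \in A^*$ with $f(w) \in L$, then $\ext{f}(w) = f(w) \in L \subseteq \clos{L}$. A closed set containing $f^{-1}(L)$ must contain its closure, giving this direction.

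For the reverse inclusion, I would take $x \in \ext{f}^{-1}(\clos{L})$, so $\ext{f}(x) \in \clos{L}$, and I must produce a sequence $(w_n)$ in $f^{-1}(L)$ converging to $x$. Write $x = \lim u_n$ with $u_n \in A^*$. Since $\ext{f}$ is continuous, $\ext{f}(u_n) = f(u_n) \to \ext{f}(x)$, and $\ext{f}(x) \in \clos{L}$ means there is a sequence of honest words in $L$ converging to $\ext{f}(x)$. The difficulty is that $f(u_n)$ itself need not lie in $L$: the fibre $f^{-1}(L)$ might be ``thin'' near $x$, and I cannot simply perturb $u_n$ arbitrarily because I have no control over how $\ext f$ moves under perturbation beyond continuity. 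The key tool to circumvent this is a diagonal/density argument combined with the fact that the preimage of a neighbourhood basis is again determined by regular languages: for each radius $2^{-k}$, the ``ball'' structure in $\pro{B^*}$ around $\ext f(x)$ is cut out by a regular language (a union of Myhill--Nerode classes), and $\clos{L}$ meeting every such ball means $L$ meets every such ball. Pulling these back under $\ext f$ (using continuity to get that $\ext f^{-1}$ of a ball contains a ball around $x$) and intersecting with the closure of $f^{-1}(L)$, one shows every ball around $x$ meets $f^{-1}(L)$, hence $x \in \clos{f^{-1}(L)}$.

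The main obstacle is precisely this reverse inclusion — making rigorous the claim that if $\ext f(x)$ is a limit of words in $L$ then $x$ is a limit of words mapped by $f$ into $L$. The clean way I would organise it: fix a regular language $L$ with syntactic morphism $\eta\colon B^* \to M$; for $m = \eta(\ext f(x))$ (well-defined since $\eta$ extends continuously to $\pro{B^*}$ and $\ext f(x) \in \clos L$ forces $m \in \eta(L)$, i.e.\ $m$ is an accepting value), consider $K = (\eta \circ \ext f)^{-1}(m) \cap A^*$, which is a regular language since $\eta \circ f\colon A^* \to M$ factors through a finite monoid; show $x \in \clos K$ because $\eta \circ \ext f$ is continuous and constant equal to $m$ on a neighbourhood's worth of approximants — more carefully, $x$ lies in the closure of $(\eta\circ f)^{-1}(m)$ since $\widehat{\eta\circ f}(x) = m$ and closures of regular languages behave well under continuous extensions of morphisms to finite monoids (this is exactly the morphism case, \cite[Proposition~VI.3.17]{pin}); and finally observe $K \subseteq f^{-1}(L)$ because $w \in K$ gives $\eta(f(w)) = m \in \eta(L)$ with $L$ recognised by $\eta$, whence $f(w) \in L$. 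Then $x \in \clos K \subseteq \clos{f^{-1}(L)}$, completing the proof.
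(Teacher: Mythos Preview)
Your easy inclusion matches the paper's. For the hard inclusion your route is genuinely different from the paper's, and it works, but the crucial step is underspecified and could be read as circular.

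You want $x\in\clos K$ from $\widehat{\eta\circ f}(x)=m$, and you justify this by ``the morphism case, \cite[Proposition~VI.3.17]{pin}''. Applied to which morphism? If to $\eta$, you only get $\ext f(x)\in\clos{\eta^{-1}(m)}$, hence $x\in\ext f^{-1}\!\big(\clos{\eta^{-1}(m)}\big)$, and passing from there to $\clos{f^{-1}(\eta^{-1}(m))}$ is precisely the lemma you are proving (for the regular language $\eta^{-1}(m)$). The non-circular reading is the one your phrase ``factors through a finite monoid'' suggests but does not spell out: since $f$ is Reg\_continuous, each fibre $f^{-1}(\eta^{-1}(m'))$ is regular, so there is a \emph{morphism} $h\colon A^*\to N$ onto a finite monoid and a map $\phi\colon N\to M$ with $\eta\circ f=\phi\circ h$ on the domain of $f$. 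Now apply \cite[Proposition~VI.3.17]{pin} to $h$: $\hat h^{-1}(n)=\clos{h^{-1}(n)}$ for each $n$, hence $\widehat{\eta\circ f}^{-1}(m)=\bigcup_{n\in\phi^{-1}(m)}\clos{h^{-1}(n)}=\clos{(\eta\circ f)^{-1}(m)}=\clos K$. That closes the gap. (Note also that $\eta\circ f$ is not itself a morphism, so ``factors through a finite monoid'' must mean ``through a morphism to one''; and the partition is of the domain $D=f^{-1}(B^*)$, not of $A^*$.)

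By contrast the paper avoids the syntactic morphism entirely: it observes that the easy inclusion applied to $L\comp$ gives $\clos{f^{-1}(L\comp)}\subseteq\ext f^{-1}(\clos{L\comp})$, and then uses that closures of regular languages are clopen (so complement and finite Boolean operations commute with closure) together with $D=f^{-1}(B^*)$ regular to rewrite this as the hard inclusion for $L$. That complementation trick is shorter and uses nothing beyond the clopen property; your argument trades brevity for a more explicit picture of why the preimage closure is what it is, via the finite partition of $\clos D$ into the clopen sets $\clos{(\eta\circ f)^{-1}(m')}$.
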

\begin{proof}
  First note that $f^{-1}(L) \subseteq \ext{f}^{-1}(\clos{L})$, and that the
  right-hand side of this inclusion is closed.  Hence
  $\clos{f^{-1}(L)} \subseteq \ext{f}^{-1}(\clos{L})$.

  For the converse inclusion, first write $D = f^{-1}(B^*)$, a regular language
  by hypothesis.  We have that
  $\ext{f}^{-1}(\clos{L}) = (\ext{f}^{-1}(\clos{L}\comp))\comp \cap \clos{D}$,
  and similarly, $f^{-1}(L) = (f^{-1}(L\comp))\comp \cap D$.  This latter
  equality implies that
  $\clos{f^{-1}(L)} = \clos{f^{-1}(L\comp)}\comp \cap \clos{D}$, since
  $f^{-1}(L\comp)$ and $D$ are regular.

  Hence the inclusion to be shown, that is,
  $\ext{f}^{-1}(\clos{L}) \subseteq \clos{f^{-1}(L)}$, is equivalent to:
  \begin{align*}
    (\ext{f}^{-1}(\clos{L}\comp))\comp \cap \clos{D} \subseteq
    \clos{f^{-1}(L\comp)}\comp \cap \clos{D}\enspace,\\
    \intertext{or equivalently,}
    \clos{f^{-1}(L\comp)} \cup \clos{D}\comp \subseteq
    \ext{f}^{-1}(\clos{L}\comp) \cup \clos{D}\comp
    \enspace.
  \end{align*}

  The inclusion to be shown is thus implied by
  $\clos{f^{-1}(L\comp)} \subseteq \ext{f}^{-1}(\clos{L}\comp)$, that is, since
  $L$ is regular, by
  $\clos{f^{-1}(L\comp)} \subseteq \ext{f}^{-1}(\clos{L\comp})$.  As in the
  proof of the converse inclusion, the right-hand side being closed, this
  inclusion holds.
\end{proof}

\begin{lem}[Preservation Lemma]
  Let $f\colon A^* \to B^*$ be a Reg_continuous function and $E$ a set of
  equations that defines $\cV(A^*)$.  The function $f$ is $\cV$_continuous iff
  for all $(u = v) \in E$ and words $s, t \in A^*$,
  $\ext{f}(s\cdot u \cdot t) =_\cV \ext{f}(s \cdot v \cdot t)$.
\end{lem}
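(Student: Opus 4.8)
The plan is to prove the Preservation Lemma by unwinding the definitions of $\cV$-continuity and equation satisfaction, using the characterization of $\cV$-continuity as topological continuity of $\ext f$ (via Theorem~\ref{thm:ucont}) together with Lemma~\ref{lem:pinext}, which lets us translate between $\clos{f^{-1}(L)}$ and $\ext f^{-1}(\clos L)$.

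For the forward direction, suppose $f$ is $\cV$-continuous; take $(u=v)\in E$ and $s,t\in A^*$. I want to show $\ext f(s u t) =_\cV \ext f(s v t)$, i.e.\ that every language of $\cV(B^*)$ fails to separate these two profinite words. Fix $L\in\cV(B^*)$. Then $f^{-1}(L)\in\cV(A^*)$, so $f^{-1}(L)$ satisfies the equation $u=v$; hence $s u t\in\clos{f^{-1}(L)} \Leftrightarrow s v t\in\clos{f^{-1}(L)}$. By Lemma~\ref{lem:pinext}, $\clos{f^{-1}(L)} = \ext f^{-1}(\clos L)$, so $s u t\in\ext f^{-1}(\clos L)\Leftrightarrow s v t\in\ext f^{-1}(\clos L)$, i.e.\ $\ext f(s u t)\in\clos L\Leftrightarrow \ext f(s v t)\in\clos L$. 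Since this holds for arbitrary $s,t$ and arbitrary $L\in\cV(B^*)$, by the definition of equation satisfaction (applied with $L$ playing the role of the separating language, and noting that $s',t'$ in that definition can be absorbed) we get $\ext f(s u t) =_\cV \ext f(s v t)$. One subtlety here: the equality $=_\cV$ is about profinite words $x,y$ with $\clos L$ membership tested after prepending/appending \emph{words} $s',t'\in B^*$; this matches because $\clos{L}$-membership of $x$ determines, and is determined by, membership of all finite approximants, and quotients of $L$ by words remain in $\cV(B^*)$. I should also handle the undefined case: if $\ext f(s u t)$ is undefined then $s u t\notin\clos D$ with $D=f^{-1}(B^*)$ regular and in $\cV(A^*)$, and since $D$ satisfies $u=v$, $\ext f(s v t)$ is undefined too, so $\ext f(s u t) =_\cV \ext f(s v t)$ holds vacuously.

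For the converse, suppose that for all $(u=v)\in E$ and all $s,t\in A^*$ we have $\ext f(s u t) =_\cV \ext f(s v t)$; I must show $f^{-1}(L)\in\cV(A^*)$ for every $L\in\cV(B^*)$. By Reiterman-style reasoning it suffices to check that $f^{-1}(L)$ satisfies every equation of $E$. Fix $(u=v)\in E$ and $s,t\in A^*$; I need $s u t\in\clos{f^{-1}(L)}\Leftrightarrow s v t\in\clos{f^{-1}(L)}$. By Lemma~\ref{lem:pinext} this is $s u t\in\ext f^{-1}(\clos L)\Leftrightarrow s v t\in\ext f^{-1}(\clos L)$, i.e.\ $\ext f(s u t)\in\clos L\Leftrightarrow\ext f(s v t)\in\clos L$ (with the convention that an undefined value is in no closed set). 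But the hypothesis $\ext f(s u t) =_\cV \ext f(s v t)$ says exactly that no $\cV(B^*)$-language — in particular $L$ — separates $\ext f(s u t)$ from $\ext f(s v t)$, which gives precisely this equivalence (and covers the undefined case by the convention built into $=_\cV$). Hence $f^{-1}(L)$ satisfies $E$ and lies in $\cV(A^*)$.

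The main obstacle I anticipate is bookkeeping around the definition of $=_\cV$ and the profinite equation: the equation $u=v$ is stated with context \emph{words} $s,t\in A^*$, but $\ext f(s u t)$ is itself a profinite word, and $=_\cV$ then quantifies over a \emph{further} pair of context words in $B^*$. I need to make sure these two layers of context compose correctly — concretely, that ``$s u t\in\clos L\Leftrightarrow s v t\in\clos L$ for all $s,t$'' for all $L\in\cV$ is genuinely equivalent to ``$f^{-1}(L)$ satisfies $u=v$ for all $L\in\cV$,'' which is where the closure/complement/union facts about $\clos{\,\cdot\,}$ for regular languages quoted in the preliminaries, plus Lemma~\ref{lem:pinext}, do the real work. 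I expect no difficulty from the topology beyond invoking Theorem~\ref{thm:ucont} to know $\ext f$ exists and is continuous (so that $\ext f^{-1}(\clos L)$ is closed, already used inside Lemma~\ref{lem:pinext}).
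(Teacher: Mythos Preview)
Your proposal is correct. The ``If'' direction is essentially identical to the paper's proof: both show that $f^{-1}(L)$ satisfies each equation of $E$ by translating through Lemma~\ref{lem:pinext} and using the hypothesis $\ext f(s\cdot u\cdot t) =_\cV \ext f(s\cdot v\cdot t)$ with trivial $B^*$-context.

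Your ``Only if'' direction, however, takes a genuinely different route from the paper's. The paper argues via Theorem~\ref{thm:ucont}: from $s\cdot u\cdot t =_\cV s\cdot v\cdot t$ it writes $u=\lim u_n$, $v=\lim v_n$, observes $d_\cV(s u_n t, s v_n t)\to 0$, and then uses uniform continuity of $f$ for $d_\cV$ to conclude $d_\cV(f(s u_n t), f(s v_n t))\to 0$, hence $\ext f(s u t)=_\cV \ext f(s v t)$. You instead stay on the language side: for each $L\in\cV(B^*)$, $f^{-1}(L)\in\cV(A^*)$ satisfies $u=v$, and Lemma~\ref{lem:pinext} turns this into $\ext f(s u t)\in\clos L \Leftrightarrow \ext f(s v t)\in\clos L$; you then recover the full $=_\cV$ statement by absorbing the $B^*$-contexts into quotients $s'^{-1}L\,t'^{-1}\in\cV(B^*)$. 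Your approach is more symmetric (Lemma~\ref{lem:pinext} is the only analytic ingredient, used in both directions) and avoids the limit computation; the paper's approach makes explicit the role of uniform continuity and is slightly more direct once Theorem~\ref{thm:ucont} is in hand. Both arguments handle the undefined case the same way, via $D=f^{-1}(B^*)\in\cV(A^*)$.
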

\begin{proof}
  \proofstep{Only if}%
  Suppose $f$ is $\cV$_continuous.  Let $u, v \in \pro{A^*}$ such that
  $u =_\cV v$, and $s, t \in A^*$.  Since by $\cV$_continuity
  $f^{-1}(B^*) \in \cV(A^*)$, either both $s\cdot u\cdot t$ and
  $s \cdot v\cdot t$ belong to the closure of this language, or they both do
  not.  The latter case readily yields the result, hence suppose we are in the
  former case.

  By definition, $u = \lim u_n$ and $v = \lim v_n$ for some Cauchy sequences
  of words ${(u_n)}_{n> 0}$ and ${(v_n)}_{n>0}$.  Since
  $s \cdot u \cdot t =_\cV s \cdot v \cdot t$, the hypothesis yields that
  $d_\cV(s\cdot u_n\cdot t, s\cdot v_n\cdot t)$ tends to $0$.  By
  Theorem~\ref{thm:ucont}, $f$ is uniformly continuous for $d_\cV$, hence
  $d_\cV(f(s\cdot u_n\cdot t), f(s\cdot v_n\cdot t))$ also tends to $0$ (note
  that both $f(s\cdot u_n\cdot t)$ and $f(s\cdot v_n\cdot t)$ are defined for
  all $n$ big enough).  This shows that
  $\ext{f}(s\cdot u\cdot t) =_\cV \ext{f}(s\cdot v\cdot t)$.

  \proofstep{If}%
  Suppose that $f$ preserves the equations of $E$ as in the statement.  Let
  $L \in \cV(B^*)$, we wish to verify that $L' = f^{-1}(L) \in \cV(A^*)$, or
  equivalently by definition, that $L'$ satisfies all the equations of $E$.  Let
  $(u = v) \in E$ be one such equation, and $s, t \in A^*$; we must show that
  $s\cdot u\cdot t \in \clos{L'} \Leftrightarrow s \cdot v \cdot t \in
  \clos{L'}$.


  By Lemma~\ref{lem:pinext}, since $f$ is Reg_continuous,
  \(\ext{f}(\clos{L'}) = \ext{f}(\ext{f}^{-1}(\clos{L})) \subseteq \clos{L}\).
  Now let $s \cdot u \cdot t \in \clos{L'}$, we thus have that
  $\ext{f}(s\cdot u \cdot t) \in \clos{L}$ (observe that
  $\ext{f}(s\cdot u\cdot t)$ is indeed defined).  By hypothesis,
  $\ext{f}(s\cdot u \cdot t) =_\cV \ext{f}(s\cdot v \cdot t)$; now since
  $L \in \cV(B^*)$, it must hold that $\ext{f}(s\cdot v \cdot t) \in \clos{L}$.
  Taking the inverse image of $\ext{f}$ on both sides, it thus holds that
  $s \cdot v \cdot t \in \ext{f}^{-1}(\clos{L})$, and Lemma~\ref{lem:pinext}
  then shows that $s \cdot v \cdot t \in \clos{L'}$.  As the argument works both
  ways, this shows that
  $s\cdot u\cdot t \in \clos{L'} \Leftrightarrow s \cdot v \cdot t \in
  \clos{L'}$, concluding the proof.
\end{proof}

Continuity can be seen as preserving \emph{membership} in $\cV$ (by inverse
image); this is where the nomenclature ``$\cV$_preserving function''
of~\cite{pin-silva17} stems from.  Strikingly, this could also be worded as
preserving \emph{nonmembership} in~$\cV$:

\begin{prop}
  A Reg_continuous total$\,$\footnote{In all the varieties we are interested in,
    one can easily modify any partial function into a total function while
    preserving its continuity properties.} function $f\colon A^* \to B^*$ is
  $\cV$_continuous iff for all $L \subseteq A^*$ that do \emph{not} belong to
  $\cV(A^*)$, $f(L)$ and $f(L\comp)$ are not $\cV$_separable.
\end{prop}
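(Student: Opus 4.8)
The plan is to prove the contrapositive-style equivalence by relating both sides to the Preservation Lemma, or more directly to the definition of $\cV$-continuity via inverse images. The key observation is that $\cV$-continuity of a total $f$ fails exactly when there is some $L \in \cV(B^*)$ with $f^{-1}(L) \notin \cV(A^*)$; and on the other side, the separation condition fails exactly when there is some $K \subseteq A^*$ with $K \notin \cV(A^*)$ but $f(K)$ and $f(K\comp)$ \emph{are} $\cV$-separable. So I would show that each witness for one failure can be converted into a witness for the other.

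\textbf{From non-continuity to non-separation.} Suppose $f$ is not $\cV$-continuous: fix $L \in \cV(B^*)$ with $K := f^{-1}(L) \notin \cV(A^*)$. Since $f$ is total, $K\comp = f^{-1}(L\comp)$, so $f(K) \subseteq L$ and $f(K\comp) \subseteq L\comp$; hence $L$ itself is a language of $\cV(B^*)$ containing $f(K)$ and disjoint from $f(K\comp)$, witnessing that $f(K)$ and $f(K\comp)$ are $\cV$-separable while $K \notin \cV(A^*)$. This direction is essentially immediate and uses only that $f$ is total.

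\textbf{From non-separation to non-continuity.} Conversely, suppose there is $K \subseteq A^*$ with $K \notin \cV(A^*)$, yet $f(K)$ and $f(K\comp)$ are $\cV$-separable by some $S \in \cV(B^*)$, i.e.\ $f(K) \subseteq S$ and $f(K\comp) \cap S = \emptyset$. Then, because $f$ is total, $f^{-1}(S) \supseteq K$ and $f^{-1}(S) \cap K\comp = \emptyset$, so $f^{-1}(S) = K$ exactly. Thus $f^{-1}(S) = K \notin \cV(A^*)$ while $S \in \cV(B^*)$, contradicting $\cV$-continuity; so $f$ is not $\cV$-continuous. (If one wants a cleaner logical contrapositive: this shows ``$f$ $\cV$-continuous'' implies ``for all $K \notin \cV(A^*)$, $f(K)$ and $f(K\comp)$ not $\cV$-separable,'' and the first direction gives the reverse implication.)

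\textbf{Main obstacle.} The content of the statement is almost entirely bookkeeping with totality; the one point requiring care is the interplay between images and preimages ($f(K) \subseteq S \Leftrightarrow K \subseteq f^{-1}(S)$ uses totality, and $f(K\comp) \cap S = \emptyset \Leftrightarrow K\comp \cap f^{-1}(S) = \emptyset$ as well), so the genuinely delicate issue is that the equivalence can fail for partial $f$ — which is exactly why the hypothesis of totality appears, and the footnote reassures the reader that for the varieties at hand totality is harmless. Reg-continuity as such is not even needed for this proposition beyond knowing $f$ is a function; I would still state it to match the surrounding framework. The whole argument fits in a few lines.
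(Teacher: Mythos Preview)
Your proof is correct and in fact more elementary than the paper's. The ``If'' direction (your \emph{from non-continuity to non-separation}) is essentially identical to the paper's argument, just phrased as a contrapositive rather than a contradiction. The ``Only if'' direction, however, differs: the paper invokes Almeida's profinite characterization of $\cV$-separability (two languages are $\cV$-separable iff no $u \in \clos{K}$, $v \in \clos{L}$ satisfy $u =_\cV v$) together with the Preservation Lemma, producing profinite witnesses $\ext{f}(u) =_\cV \ext{f}(v)$ in the closures of $f(L)$ and $f(L\comp)$. Your argument bypasses all of this by observing directly that if $S \in \cV(B^*)$ separates $f(K)$ from $f(K\comp)$ with $f$ total, then $f^{-1}(S) = K$ on the nose, immediately falsifying $\cV$-continuity. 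This is strictly simpler; as you note, it does not even use Reg-continuity, whereas the paper's route requires it (for $\ext{f}$ to exist). The paper's approach illustrates the profinite machinery developed in the surrounding section, but yours is the cleaner proof of this particular proposition.
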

\begin{proof}
  We rely on a characterization due to Almeida~\cite[Lemma~3.2]{almeida99}:
  two languages $K$ and $L$ are $\cV$_separable iff for
  all $u \in \clos{K}, v \in \clos{L}$, we have that $u \neq_\cV v$.

  \proofstep{Only if}%
  Suppose $f$ is $\cV$_continuous, and let $L \subseteq A^*$ be a language
  outside $\cV(A^*)$.  There must be two profinite words $u, v \in \pro{A^*}$
  such that $u =_\cV v$, $u \in \clos{L}$ and $v \in \clos{L}\comp$.  By
  $\cV$_continuity and the Preservation Lemma, $\ext{f}(u) =_\cV \ext{f}(v)$,
  and moreover, $\ext{f}(u) \in \clos{f(L)}$ and
  $\ext{f}(v) \in \clos{f(L\comp)}$.  The characterization above thus implies
  that $f(L)$ and $f(L\comp)$ are not $\cV$_separable.

  \proofstep{If}%
  Assume that for all \(L \subseteq A^*\), if \(f(L)\) and \(f(L\comp)\) are
  \(\cV\)_separable, then \(L \in \cV(A^*)\).  For all $K \in \cV(B^*)$, we show
  that $L = f^{-1}(K) \in \cV(A^*)$.  Now \(f(L) \subseteq K\), and
  \(f(L\comp) = f(f^{-1}(K)\comp) = f(f^{-1}(K\comp)) \subseteq K\comp\).  Since
  \(K \in \cV(B^*)\), it is \(\cV\)_separable from its complement, hence
  \(f(L)\) and \(f(L\comp)\) are \(\cV\)_separable, and our assumption implies
  that \(L \in \cV(A^*)\).
\end{proof}


\subsection{The profinite extension of rational functions}

The Preservation Lemma already hints at our intention to see transducers as
computing functions from and to the free profinite monoids.  Naturally, if
$\tau$ is a rational function, its being Reg_continuous allows us to do so (by
Theorem~\ref{thm:ucont}).  For $u = \lim u_n$ a profinite word, we will write
$\tau(u)$ for $\ext{\tau}(u)$, i.e., the limit $\lim \tau(u_n)$, which exists by
continuity.  In this section, we develop a slightly more combinatorial approach
to the evaluation of \(\ext{\tau}\), and address two classes of profinite words:
those expressed as $s\cdot u\cdot t$ for $s, t$ words and $u$ a profinite word,
and those expressed as $x^\omega$ for $x$ a word.

Let \(\tau\) be an unambiguous transducer.  Recall that for any state $q$ of
\(\tau\) and any profinite word $u$, $q.u$ is well defined.  As a consequence, if
$s$ and $t$ are words, then there is at most one initial state~$q_0$, one
$q \in q_0.s$ and one $q' \in q.u$ such that $q'.t$ is final, and these states
exist iff $\tau(s\cdot u\cdot f)$ is defined.  Thus:

\begin{lem}\label{lem:eval}
  Let $\tau$ be an unambiguous transducer from $A^*$ to $B^*$, $s, t \in A^*$
  and $u \in \pro{A^*}$.  Suppose $\tau(s \cdot u \cdot t)$ is defined, and let
  $q_0, q, q'$ be the unique states such that $q_0$ is initial, $q \in q_0.s$,
  $q' \in q.u$, and $q'.t$ is final.  The following holds:
  \[\tau(s \cdot u \cdot t) = \Tif{\tau, \nop, q}(s) \cdot \Tif{\tau, q, q'}(u)
  \cdot \Tif{\tau, q', \nop}(t)\enspace.\]
\end{lem}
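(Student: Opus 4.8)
The plan is to reduce the profinite statement to the case of finite words by choosing a Cauchy sequence $u = \lim u_n$ in $A^*$ and passing to the limit, using that concatenation and the relevant profinite extensions are continuous. Note first that the three factors on the right-hand side are the profinite extensions of $\Tif{\tau, \nop, q}$, $\Tif{\tau, q, q'}$ and $\Tif{\tau, q', \nop}$ evaluated at $s$, $u$ and $t$; since $s$ and $t$ are ordinary words, only the middle one is genuinely profinite, and it is defined because $\Tif{\tau, q, q'}$ is again an unambiguous transducer, hence a rational function, hence Reg-continuous (Theorem~\ref{thm:ucont}).

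So fix a Cauchy sequence $(u_n)_n$ with $u = \lim u_n$. Since $q.u$ is well defined and $q' \in q.u$, the set $q.u_n$ equals $q.u$ for all large $n$, hence $q' \in q.u_n$ for all large $n$. For such $n$, concatenating a path from $q_0$ to $q$ reading $s$ (one exists as $q \in q_0.s$), a path from $q$ to $q'$ reading $u_n$, and a path from $q'$ reading $t$ to a final state (one exists as $q'.t$ is final) produces an accepting path of $\tau$ on $s \cdot u_n \cdot t$; by unambiguity it is the unique one. In particular $\tau(s \cdot u_n \cdot t)$ is defined and its accepting path passes through $q$ right after reading $s$ and through $q'$ right after reading $s \cdot u_n$.

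The key finite-word step is then
\[\tau(s \cdot u_n \cdot t) = \Tif{\tau, \nop, q}(s) \cdot \Tif{\tau, q, q'}(u_n) \cdot \Tif{\tau, q', \nop}(t)\enspace.\]
To see this, one checks that each of the three sub-transducers has a \emph{unique} accepting path on the relevant input: for $\Tif{\tau, q, q'}$ this is exactly unambiguity of $\Tif{\tau, q, q'}$; for $\Tif{\tau, \nop, q}$ on $s$ (\resp $\Tif{\tau, q', \nop}$ on $t$), any two such paths would, glued to the fixed middle and final (\resp initial and middle) segments from the previous paragraph, give two distinct accepting paths of $\tau$ on $s \cdot u_n \cdot t$, contradicting unambiguity. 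Splitting the label-output of the unique accepting path of $\tau$ on $s \cdot u_n \cdot t$ at the cut states $q$ and $q'$, and recalling that the adjusted output functions of the sub-transducers emit $1$ at their new endpoints, then yields the displayed identity.

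Finally, let $n \to \infty$. Concatenation is continuous on $\pro{A^*}$, so $s \cdot u_n \cdot t \to s \cdot u \cdot t$, and since $\tau$ is Reg-continuous, $\ext{\tau}$ is continuous, so the left-hand side tends to $\tau(s \cdot u \cdot t)$. On the right-hand side, $\Tif{\tau, \nop, q}(s)$ and $\Tif{\tau, q', \nop}(t)$ are constant, while $\Tif{\tau, q, q'}(u_n)$ tends to $\Tif{\tau, q, q'}(u)$ by Reg-continuity of the rational function $\Tif{\tau, q, q'}$; since concatenation is continuous on $\pro{B^*}$, the right-hand side tends to $\Tif{\tau, \nop, q}(s) \cdot \Tif{\tau, q, q'}(u) \cdot \Tif{\tau, q', \nop}(t)$, and uniqueness of limits gives the claim. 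The one genuinely fiddly point is the finite-word step: making sure unambiguity of $\tau$ forces each factor to be single-valued and that the three output pieces concatenate exactly; everything else is a routine limiting argument built on Theorem~\ref{thm:ucont} and the well-definedness of $q.u$.
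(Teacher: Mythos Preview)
Your argument is correct and is precisely the natural way to justify the statement. The paper itself gives no proof: the lemma is introduced by ``Thus:'' after the paragraph recalling that $q.u$ is well defined and that the states $q_0, q, q'$ are uniquely determined, so the authors treat the identity as immediate from that discussion. Your proof simply spells out what is left implicit there---the finite-word decomposition via unambiguity and the passage to the limit via Reg-continuity of $\tau$ and of $\Tif{\tau,q,q'}$---and there is no alternative route to speak of.
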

Let us now turn to the evaluation of $\omega$-terms:
\begin{lem}\label{lem:omega}
  Let $\tau$ be an unambiguous transducer from $A^*$ to $B^*$ and $x \in A^*$.
  If $\tau(x^\omega)$ is defined, then there are words $s, y, t \in B^*$ such
  that: \[\tau(x^\omega) = s\cdot y^{\omega - 1} \cdot t\enspace.\]
\end{lem}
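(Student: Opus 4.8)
The plan is to analyze the run of $\tau$ on $x^\omega$ combinatorially by looking at the runs on the finite approximants $x^{n!}$. Since $\tau(x^\omega)$ is defined, for $n$ large enough $\tau(x^{n!})$ is defined and the sequence of outputs converges to $\tau(x^\omega)$ in $\pro{B^*}$. Fix the unique initial state $q_0$; reading $x$ repeatedly traces a path $q_0 \xrightarrow{x} p_1 \xrightarrow{x} p_2 \xrightarrow{x} \cdots$ through the (finite) state set, so this sequence is eventually periodic: there are $i \ge 0$ and a period $d \ge 1$ with $p_{i+d} = p_i$, and we may as well take $d$ to divide all sufficiently large factorials. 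Write $q = p_i$, so that $q.x^d = q$, i.e.\ $q$ lies on a cycle labeled by $x^d$ reachable from $q_0$ by $x^i$. For $n$ large, the accepting run on $x^{n!}$ decomposes as: read $x^i$ from $q_0$ to $q$, then loop around the $x^d$-cycle $(n! - i)/d$ times back to $q$, then read the remaining suffix $x^r$ (where $r = (n!-i) \bmod d$, eventually constant, say $r = r_0$) to a final state $q'$ with $q'$ the unique such state. Since $\tau(x^{n!})$ is defined for large $n$ and the state $q'$ reached is the same for all large $n$, this decomposition is forced.

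Now translate this to outputs. Let $s = \Tif{\tau, \nop, q}(x^i)$ be the output of the prefix, let $y = \Tif{\tau, q, q}(x^d)$ be the output of one traversal of the $x^d$-cycle from $q$ to $q$, and let $t = \Tif{\tau, q', \nop}(x^{r_0})$ be the output of the suffix — but note the suffix path actually goes from $q$ through $x^{r_0}$ to $q'$, so I should set $t = \Tif{\tau, q, \nop}(x^{r_0})$ followed by $\rho(q')$, all of which is a fixed word in $B^*$. Then for each large $n$, $\tau(x^{n!}) = s \cdot y^{(n!-i-r_0)/d} \cdot t$. Writing $k_n = (n! - i - r_0)/d$, this is an integer sequence that, modulo the fixed constant $\lcm$ of finitely many numbers, agrees with $n!$ shifted by a constant, so $\tau(x^\omega) = \lim_n s\cdot y^{k_n}\cdot t = s \cdot (\lim_n y^{k_n}) \cdot t$, using that concatenation is continuous on $\pro{B^*}$ and $s,t$ are fixed. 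It remains to identify $\lim_n y^{k_n}$: since $k_n \equiv -c \pmod{m}$ for every fixed modulus $m$ (once $n$ is large), for a suitable constant $c$ depending on $i, r_0, d$, this limit is $y^{\omega - c}$ as defined in the preliminaries. Finally, $y^{\omega - c} = (y^c)^{\omega-1} \cdot (\text{correction})$, or more cleanly one rewrites $y^{\omega - c}$ directly: using $y^{\omega} = \lim y^{n!}$ and $y^{\omega - c} = \lim y^{n! - c}$, and the identity $y^{\omega - c} = y^{\omega-1}\cdot y^{1-c}$ when $c \le 1$, or simply absorb: set $y' = y^c$ if $c \ge 1$ and fold the remaining $y^{e}$ into $s$ or $t$; one gets $\tau(x^\omega) = s' \cdot (y')^{\omega - 1} \cdot t'$ for suitable fixed words $s', y', t' \in B^*$.

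The main obstacle I expect is the bookkeeping in the last paragraph: making sure the constant offset $c$ in the exponent lands exactly in the form $\omega - 1$ rather than $\omega - c$, and handling cleanly the boundary cases where the cycle $q \to q$ is trivial ($d$ and the corresponding output $y$ could be such that $y = 1$, in which case $\tau(x^\omega)$ is eventually a fixed finite word and one can take $y'$ arbitrary, e.g.\ $y' = 1$, with $s't' = \tau(x^{n!})$ for large $n$). One should also double-check that $q'$ genuinely stabilizes — this is where unambiguity is used: if two different final states were reachable for infinitely many $n$, the domains of the corresponding $\Tif{\tau,q,q'}$ would separate infinitely many $x^{n!}$, contradicting that $(x^{n!})$ is Cauchy. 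With these points dispatched, the statement follows.
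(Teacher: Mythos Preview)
Your proposal is correct and follows essentially the same approach as the paper: locate a loop in the accepting run of $x^{n!}$, extract prefix/loop/suffix outputs $s,z,t$, and then rewrite the resulting $z^{\omega-c}$ in the required form. The paper handles your two flagged obstacles more crisply than you do---it fixes one large $n$, finds a loop of length $k<n$ in \emph{that} accepting path, and uses unambiguity to conclude that the loop-extended path is the unique accepting path for every $x^{m!}$ with $m>n$ (so your ``eventually periodic state sequence'' is justified in one stroke), and it converts the exponent via the single identity $z^{\omega-c}=(z^c)^{\omega-1}$ rather than absorbing stray powers into $s$ or $t$.
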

\begin{proof}
  Consider a large value $n$; we study the behavior of $x^{n!}$ on $\tau$.
  There is an initial state $q_0$, a state $q$, and a final state $q_1$ such
  that $x^{n!}$ is accepted by a path going from $q_0$ to $q$ reading $x^i$,
  from $q$ to $q$ reading $x^k$ with $k < n$, and from $q$ to $q_1$ reading
  $x^j$.  Thus the accepting path for any word of the form $x^{m!}, m > n$ is
  similar to the one for $x^{n!}$: from $q_0$ to $q$, looping $(m! - n!)/k+1$
  times on $q$, and then from $q$ to $q_1$.  Let thus
  $s = \Tif{\tau, q_0, q}(x^i)$, $z = \Tif{\tau, q, q}(x^k)$, and
  $t = \Tif{\tau, q, q_1}(x^j)$.  It then holds that
  $\tau({(x^k)}^{m!}) = s\cdot z^{m! - (n!/k) + 1} \cdot t$.  Letting
  $c = n!/k - 1$, this shows that
  $\tau({(x^k)}^\omega) = s \cdot z^{\omega - c} \cdot t$.  Now on the one hand,
  ${(x^k)}^\omega = x^\omega$, and on the other hand, we similarly have that
  $z^{\omega - c} = y^{\omega - 1}$ by letting $y = z^c$.  We thus obtain
  that $\tau(x^\omega) = s \cdot y^{\omega - 1} \cdot t$.%
\end{proof}

These constitute our main ways to effectively evaluate the image of profinite
words through transducers.  Since they are ubiquitous in our study, we will
frequently apply these lemmas without explicitly citing them.

\subsection{The Syncing Lemma: Preservation Lemma applied to transducers}

We apply the Preservation Lemma on transducers and deduce a slightly more
combinatorial characterization of transducers describing continuous functions.
This does not provide an immediate decidable criterion, but our decidability
results will often rely on it.  The goal of the forthcoming lemma is to
decouple, when evaluating $s \cdot u \cdot t$ (with the notations of the
Preservation Lemma), the behavior of the $u$ part and that of the $s, t$ part.
This latter part will be tested against an \emph{equalizer} set:

\begin{defi}[Equalizer set]
  Let $u, v \in \pro{A^*}$.  The |equalizer set| of $u$ and $v$ in $\cV$ is:
  \[\equ_\cV(u, v) = \{ (s, s', t, t') \in {(A^*)}^4 \mid s\cdot u \cdot t
    =_\cV s' \cdot v \cdot t'\}\enspace.\]
\end{defi}

\begin{rem}\label{rk:almeida}
  The complexity of equalizer sets can be surprisingly high.  For instance,
  letting $\cV$ be the class of languages defined by
  $\{x^2 = x^3 \mid x \in A^*\}$, there is a profinite word $u$ for which
  $\equ_\cV(u, u)$ is undecidable (this relies on the existence of arbitrarily
  long square-free words).  On the other hand, equalizer sets quickly become
  less complex for common varieties; for instance, Lemma~\ref{lem:apeq} will
  provide a simple form for the equalizer sets of aperiodic supercancellative
  varieties.  \fullstop
\end{rem}

\begin{defi}[Input synchronization]
  Let $R, S \subseteq A^* \times B^*$.  The |input synchronization| of $R$ and
  $S$ is defined as the relation over $B^* \times B^*$ obtained by
  synchronizing the first component of $R$ and $S$:
  \[R \sync S = \{(u, v) \mid (\exists s)[ (s, u) \in R \land (s, v) \in
  S]\} \;\big(= S \circ R^{-1})\enspace.\]
\end{defi}

Naturally, the input synchronization of two rational functions is a rational
relation.

\begin{lem}[Syncing Lemma]
  Let $\tau$ be an unambiguous transducer from $A^*$ to $B^*$ and~$E$ a set
  of equations that defines $\cV(A^*)$.  The function $\tau$ is
  $\cV$_continuous iff:
  \begin{enumerate}
  \item $\tau^{-1}(B^*) \in \cV(A^*)$, and
  \item For any $(u = v) \in E$, any states $p, q$, any $p' \in p.u$, and any
    $q' \in q.v$, and letting $u' = \Tif{\tau,p,p'}(u)$ and
    $v' = \Tif{\tau, q, q'}(v)$:
    \[(\Tif{\tau, \nop, p} \sync \Tif{\tau, \nop, q}) \times (\Tif{\tau, p', \nop} \sync
    \Tif{\tau, q', \nop}) \subseteq \equ_\cV(u', v')\enspace.\]
  \end{enumerate}
\end{lem}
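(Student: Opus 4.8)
The plan is to derive the Syncing Lemma directly from the Preservation Lemma by unfolding the evaluation of profinite words $s \cdot w \cdot t$ through an unambiguous transducer, using Lemma~\ref{lem:eval}. Recall that the Preservation Lemma says $\tau$ is $\cV$_continuous iff $\tau^{-1}(B^*) \in \cV(A^*)$ (this is subsumed by the ``only if'' direction but must be stated as a separate hypothesis for the ``if'' direction, since the Preservation Lemma presupposes Reg_continuity, i.e.\ that $\tau$ is a rational function — which is given — but we still need the domain to be in $\cV$) and, for every $(u = v) \in E$ and all words $s, t$, $\ext{\tau}(s \cdot u \cdot t) =_\cV \ext{\tau}(s \cdot v \cdot t)$. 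So condition~(1) is clearly necessary and appears on both sides; the real content is to show condition~(2) is equivalent to the family of profinite equalities $\ext{\tau}(s \cdot u \cdot t) =_\cV \ext{\tau}(s \cdot v \cdot t)$ over all $(u=v) \in E$ and $s, t \in A^*$, \emph{assuming} (1).

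First I would fix an equation $(u = v) \in E$ and words $s, t \in A^*$, and apply Lemma~\ref{lem:eval} to each side. For $\ext\tau(s \cdot u \cdot t)$ to be defined we need a unique initial $q_0$, a unique $p \in q_0.s$, a unique $p' \in p.u$ with $p'.t$ final; then $\ext\tau(s\cdot u\cdot t) = \Tif{\tau,\nop,p}(s) \cdot \Tif{\tau,p,p'}(u) \cdot \Tif{\tau,p',\nop}(t)$, and symmetrically for $v$ with states $q_0', q, q'$. The subtlety here is that $q_0$ need not equal $q_0'$, and $\ext\tau(s\cdot u\cdot t)$ may be defined while $\ext\tau(s\cdot v\cdot t)$ is not — but condition~(1) together with $u =_\cV v$ forces $s\cdot u\cdot t \in \clos{\tau^{-1}(B^*)} \Leftrightarrow s\cdot v\cdot t \in \clos{\tau^{-1}(B^*)}$, so the two are simultaneously defined or simultaneously undefined; in the undefined case the equality $=_\cV$ holds vacuously. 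So I reduce to the case where both are defined, hence $q_0 = q_0'$ (uniqueness of the initial state on an accepting path), $p \in q_0.s$ is the state reached after $s$, and $q \in q_0.s$ likewise — but again $p$ and $q$ need not coincide as abstract states unless the transducer is such; in general $p$ is determined by $s$ but we are quantifying, so the clean way is: the set of pairs $(s, \mathsf{out})$ with $s \in A^*$ and $\mathsf{out}$ the prefix output leading to a state $p$ is exactly the relation $\Tif{\tau,\nop,p}$, and similarly $\Tif{\tau,\nop,q}$ for $q$; the requirement that \emph{the same} $s$ reaches both $p$ and $q$ is exactly membership of the output pair in $\Tif{\tau,\nop,p} \sync \Tif{\tau,\nop,q}$. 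Symmetrically for the suffixes $t$ and the states $p', q'$, giving $\Tif{\tau,p',\nop}\sync\Tif{\tau,q',\nop}$. Writing $u' = \Tif{\tau,p,p'}(u)$, $v' = \Tif{\tau,q,q'}(v)$ (well-defined profinite words, as $\Tif{\tau,p,p'}$ is an unambiguous transducer hence Reg_continuous), the equality $\ext\tau(s\cdot u\cdot t) =_\cV \ext\tau(s\cdot v\cdot t)$ becomes precisely $(\text{prefix out}) \cdot u' \cdot (\text{suffix out}) =_\cV (\text{prefix out}') \cdot v' \cdot (\text{suffix out}')$, i.e.\ the 4-tuple of outputs lies in $\equ_\cV(u', v')$. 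Quantifying over all $s, t$ achieving these states, over all choices of $p, q, p', q'$, and over all $(u=v) \in E$, this is exactly condition~(2).

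For the forward direction (continuity implies (1) and (2)): (1) is immediate from $\cV$_continuity applied to $B^* \in \cV(B^*)$, and (2) follows from the Preservation Lemma plus the unfolding just described — given any witnessing $s, t$ for a tuple in the left-hand product, $\ext\tau(s\cdot u\cdot t) =_\cV \ext\tau(s\cdot v\cdot t)$ expands to the membership in $\equ_\cV(u',v')$. For the backward direction: assume (1) and (2); I need to check the hypothesis of the Preservation Lemma, i.e.\ $\ext\tau(s\cdot u\cdot t) =_\cV \ext\tau(s\cdot v\cdot t)$ for all $(u=v) \in E$, $s, t$. As noted, (1) handles the definedness dichotomy; when both are defined, Lemma~\ref{lem:eval} expresses each side as a product through states $p, p', q, q'$ reached by $s$ and $t$ — these states exist and the relevant output 4-tuple lies in the left-hand side of the inclusion in (2), so by (2) it lies in $\equ_\cV(u', v')$, which is exactly the desired $=_\cV$ equality. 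Then the Preservation Lemma concludes $\cV$_continuity.

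The main obstacle I anticipate is bookkeeping the quantifier alternation and the definedness issue cleanly: one must be careful that in (2) the states $p, q$ are universally quantified but the input synchronization $\Tif{\tau,\nop,p}\sync\Tif{\tau,\nop,q}$ automatically restricts to those $s$ that actually reach \emph{both} $p$ and $q$ (being empty otherwise), so no spurious constraints are imposed; and dually, that a profinite equation $\ext\tau(s\cdot u\cdot t)=_\cV\ext\tau(s\cdot v\cdot t)$ with one side undefined is not automatically a consequence of (2) alone but needs (1). A secondary point worth a sentence is justifying that $q.u$ being a set rather than a single state (the transducer is unambiguous but not necessarily input-deterministic) does not break the argument: on an \emph{accepting} path the states are uniquely determined by unambiguity, which is exactly what Lemma~\ref{lem:eval} records, so the quantification ``any $p' \in p.u$'' in (2) correctly ranges over the possible continuations. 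Everything else is a routine translation between ``same input reaches configuration $X$ and configuration $Y$'' and the $\sync$ operation.
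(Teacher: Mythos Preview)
Your proposal is correct and follows essentially the same route as the paper: both directions are obtained from the Preservation Lemma, using Lemma~\ref{lem:eval} to decompose $\tau(s\cdot u\cdot t)$ and $\tau(s\cdot v\cdot t)$ along the states $p,p',q,q'$, with condition~(1) handling the simultaneous (un)definedness and the input synchronizations encoding ``same $s$ (resp.\ $t$) reaches both states.'' Your bookkeeping remarks on quantifier scope and on $p.u$ being a set are accurate and do not diverge from the paper's argument.
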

\begin{proof}
  We rely on the Preservation Lemma, since $\tau$ is Reg_continuous.

  \proofstep{Only if}%
  Suppose that $\tau$ is $\cV$_continuous, the first point is immediate.  For
  the second, we use the notation of the statement.  Let
  $(s, s', t, t') \in (\Tif{\tau, \nop, p} \sync \Tif{\tau, \nop, q}) \times
  (\Tif{\tau, p', \nop} \sync \Tif{\tau, q', \nop})$.  This implies that there
  are words $x, y \in A^*$ such that:
  \begin{itemize}
  \item $s = \Tif{\tau, \nop, p}(x), s' = \Tif{\tau, \nop, q}(x)$;
  \item $t = \Tif{\tau, p', \nop}(y), t' = \Tif{\tau, q', \nop}(y)$.
  \end{itemize}
  By Lemma~\ref{lem:eval}, we have that $\tau(x \cdot u
  \cdot y) = s \cdot u' \cdot t$ and $\tau(x \cdot v \cdot y) = s' \cdot v'
  \cdot t'$.  The Preservation Lemma then asserts that $s \cdot u' \cdot t =_\cV
  s' \cdot v' \cdot t'$, showing that $(s, s', t, t') \in \equ_\cV(u', v')$.

  \proofstep{If}%
  Let $(u = v) \in E$ and $x, y \in A^*$.  We must show that
  $\tau(x\cdot u \cdot y) =_\cV \tau(x \cdot v \cdot y)$.  Since
  $\tau^{-1}(B^*) \in \cV(A^*)$, either $\tau$ is defined on both
  $x\cdot u\cdot y$ and $x \cdot v \cdot y$, or on neither; in this latter case,
  the equality is satisfied by definition.  We thus suppose that both values are
  defined.  This implies that there are states $p,q,p',q'$ as in the statement,
  and using the same notation, letting $s, s', t, t'$ just as above, the
  hypothesis yields that $s \cdot u' \cdot t =_\cV s' \cdot v' \cdot t'$,
  showing the claim.
\end{proof}

\subsection{A profinite toolbox for the aperiodic setting}

In this section, we provide a few lemmas pertaining to our study of aperiodic
continuity.
We show that the equalizer sets of aperiodic supercancellative varieties are
well behaved.  Intuitively, the larger the varieties are, the more their
nonempty equalizer sets will be similar to the identity.  For instance, if
$s \cdot x^\omega =_\vA x^\omega$, for words $s$ and $x$, it should hold that
$s$ and $x$ have the same primitive root.  We first note the following easy fact
that will only be used in this section; it is reminiscent of the notion of
\emph{equidivisibility}, studied in the profinite context by
Almeida~and~Costa~\cite{almeida-costa17}.
\begin{lem}\label{lem:dance}
  Let $u, v$ be profinite words over an alphabet $A$ and $\cV$ be a
  supercancellative variety.  Suppose that there are $s, t \in A^*$ such that
  $u \cdot t =_\cV s \cdot v$, then there is a $w \in \pro{A^*}$ such that
  $u =_\cV s \cdot w$ and $v =_\cV w \cdot t$.  If moreover $u = v$ and $\cV$ is
  aperiodic, then $u =_\cV s \cdot u \cdot t$.
\end{lem}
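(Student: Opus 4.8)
The plan is to establish the two assertions in turn: the first by induction on $|s|$, peeling off one letter at a time with supercancellativity, and the second by bootstrapping the first and then taking a limit.

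For the first assertion I would actually prove the slightly stronger statement that $u =_\cV s\cdot u'$ for some $u' \in \pro{A^*}$ with $u'\cdot t =_\cV v$; then $w := u'$ satisfies $u =_\cV s\cdot w$ and $v =_\cV w\cdot t$, as wanted. The induction is on $|s|$. If $s = 1$ the hypothesis \emph{is} $u\cdot t =_\cV v$, so $u' := u$ works. If $s = a\cdot s'$ with $a \in A$, then $u\cdot t =_\cV a\cdot(s'\cdot v)$. Since $\cV$ is supercancellative, $aA^* \in \cV(A^*)$, and because $a\cdot s'\cdot v$ lies in $\clos{aA^*}$ so does $u\cdot t$; a short argument with supercancellativity then shows that $u$ itself must begin with $a$, that is, $u =_\cV a\cdot u_1$ for some $u_1 \in \pro{A^*}$. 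Substituting and cancelling the leading letter from $a\cdot u_1\cdot t =_\cV a\cdot s'\cdot v$ (supercancellativity, left version) gives $u_1\cdot t =_\cV s'\cdot v$, and the induction hypothesis yields $u'$ with $u_1 =_\cV s'\cdot u'$ and $u'\cdot t =_\cV v$; hence $u =_\cV a\cdot u_1 =_\cV a\cdot s'\cdot u' = s\cdot u'$.

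For the second assertion, assume $u = v$ and $\cV$ aperiodic. The first assertion gives $w \in \pro{A^*}$ with $u =_\cV s\cdot w =_\cV w\cdot t$, so $w\cdot t =_\cV s\cdot w$, which is again an instance of the hypothesis of the first assertion with $w$ in the role of both $u$ and $v$. Iterating, I would build $(w_n)_{n\ge 0}$ with $w_0 = w$, $w_n\cdot t =_\cV s\cdot w_n$, $w_n =_\cV s\cdot w_{n+1}$ and $w_n =_\cV w_{n+1}\cdot t$, so that $u =_\cV s^{n+1}\cdot w_n =_\cV w_n\cdot t^{n+1}$ for every $n$. By compactness of $\pro{A^*}$ the sequence $(w_{m!-1})_{m\ge1}$ has a convergent subsequence; call its limit $W$. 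Passing to the limit along that subsequence, using continuity of concatenation, the fact that $=_\cV$ is topologically closed (being an intersection of conditions each of which is clopen, since regular languages have clopen closures), and that $s^{m!}\to s^\omega$, $t^{m!}\to t^\omega$, I get $u =_\cV s^\omega\cdot W =_\cV W\cdot t^\omega$. Since $s^\omega$ and $t^\omega$ are idempotent, this gives $s^\omega\cdot u =_\cV s^\omega\cdot s^\omega\cdot W =_\cV s^\omega\cdot W =_\cV u$, and symmetrically $u\cdot t^\omega =_\cV u$. Finally, invoking the aperiodicity identities $s^{\omega+1}=_\cV s^\omega$ and $t^{\omega+1}=_\cV t^\omega$,
\[
s\cdot u\cdot t \;=_\cV\; s\cdot(s^\omega\cdot u)\cdot t \;=_\cV\; s^{\omega+1}\cdot u\cdot t \;=_\cV\; s^\omega\cdot u\cdot t \;=_\cV\; u\cdot t \;=_\cV\; (u\cdot t^\omega)\cdot t \;=_\cV\; u\cdot t^{\omega+1} \;=_\cV\; u\cdot t^\omega \;=_\cV\; u,
\]
which is the desired equality $u =_\cV s\cdot u\cdot t$.

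The step I expect to be the main obstacle is the limiting argument in the second assertion: one must set up the iterated application of the first assertion carefully enough that a single subsequence witnesses all the needed convergences, and then be careful about how the $\omega$-powers interact with $=_\cV$. The remaining manipulations are routine given supercancellativity and the standard identities for $\omega$-powers recalled in the preliminaries.
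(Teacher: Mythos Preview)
Your proposal is correct and follows essentially the same strategy as the paper: use the fact that $sA^*\in\cV$ to factor $u$ as $s\cdot w$ (you do this by induction on $|s|$, peeling off one letter at a time, whereas the paper does it in one step by working with an approximating sequence $u_n\to u$), then iterate and pass to a limit point to obtain $u =_\cV s^\omega\cdot W =_\cV W\cdot t^\omega$ and conclude by aperiodicity. Your explicit choice of the subsequence $(w_{m!-1})$ so that $s^{m!}\to s^\omega$ and $t^{m!}\to t^\omega$ simultaneously is a detail the paper leaves implicit, and note that, like the paper (which writes ``since $u$ is nonfinite''), your argument tacitly assumes $u\notin A^*$.
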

\begin{proof}
  Let $u = \lim u_n$; if \({(u_n)}_{n>0}\) is ultimately constant, then this is
  immediate, so we assume that \(|u_n|\) is unbounded.  From
  $u \cdot t =_\cV s \cdot v$, and the fact that $s\cdot A^* \in \cV(A^*)$ by
  supercancellativity, we obtain that for $n$ large enough,
  $u_n \cdot t \in s \cdot A^*$.  Since $u$ is nonfinite, $|u_n| > |s|$ for $n$
  large enough, in which case $u_n = s \cdot w_n$ for some sequence
  ${(w_n)}_{n > 0}$.  Let $w \in \pro{A^*}$ be a limit point of this sequence,
  that exists by compactness (this is an essential property of the free profinite
  monoid, see, e.g.,~\cite[Theorem VI.2.5]{pin}).  It holds that
  $u = s \cdot w$.  Replacing $u$ by this value in the equation of the
  hypothesis, we thus have that $s \cdot w \cdot t =_\cV s \cdot v$, and since
  $\cV$ is supercancellative, that $v =_\cV w \cdot t$.

  For the last point, with $u = v$, we iterate the previous construction on $w$,
  since in that case, $u =_\cV w \cdot t =_\cV s \cdot w$.  This provides a
  sequence $w = w_1, w_2, w_3, \ldots$ such that
  $u =_\cV s^n\cdot w_n =_\cV w_n \cdot t^n$.  Taking a limit point $x$ of
  ${(w_n)}_{n>0}$, it thus holds that
  $u =_\cV s^\omega \cdot x =_\cV x \cdot t^\omega$, showing, by aperiodicity,
  that $u =_\cV s \cdot u =_\cV u \cdot t$.
\end{proof}
\begin{lem}\label{lem:apeq}
  Let $u, v$ be profinite words over an alphabet $A$ and $\cV$ be an aperiodic
  supercancellative variety.  Suppose $\equ_\cV(u, v)$ is nonempty.  There are
  words $x, y \in A^*$ and two pairs $\rho_1, \rho_2 \in {(A^*)}^2$ such that:
  \[\equ_\cV(u, v) = \Big(\Id\cdot \big((x^*, x^*)\rho_1^{-1}\big)\Big)
  \times \Big(\big(\rho_2^{-1}(y^*, y^*)\big)\cdot
  \Id\Big)\enspace.\]
\end{lem}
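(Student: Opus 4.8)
The plan is to analyze the set $\equ_\cV(u, v) = \{(s, s', t, t') \mid s\cdot u \cdot t =_\cV s' \cdot v \cdot t'\}$ by first showing that its nonemptiness forces a rigid structure on the profinite words $u$ and $v$, then extracting the claimed product form. First I would observe that, since the set is nonempty and each coordinate can be varied somewhat independently, I want to reduce to a "canonical" witness. Suppose $(s_0, s_0', t_0, t_0') \in \equ_\cV(u, v)$. Applying Lemma~\ref{lem:dance} (with the supercancellative hypothesis) to the equation $s_0 \cdot u \cdot t_0 =_\cV s_0' \cdot v \cdot t_0'$, I expect to be able to peel off the longest common prefix and suffix, arriving at profinite words that satisfy an equation of the form $p \cdot u =_\cV v$ or $u =_\cV p \cdot v$ (and symmetrically on the right), and ultimately — using the aperiodicity through the last clause of Lemma~\ref{lem:dance}, which yields $u =_\cV s\cdot u\cdot t$ type idempotency — I want to pin down that $u$ and $v$ share a common "profinite period," i.e. there is a word $x$ such that prepending or appending powers of $x$ does not change $u$ modulo $=_\cV$, and likewise a word $y$ on the right; these $x, y$ are the ones appearing in the statement.

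The key steps, in order, would be: (1) Fix one witness $(s_0, s_0', t_0, t_0')$ and, via Lemma~\ref{lem:dance}, split it into a left part and a right part, so that the whole analysis factors as a statement about the first two coordinates $(s, s')$ (governing the prefix behavior of $u$ versus $v$) times a statement about the last two coordinates $(t, t')$ (governing the suffix behavior); this is what gives the outer product structure $(\cdots) \times (\cdots)$. (2) For the left part, show that the set of $(s, s')$ with $s \cdot u =_\cV s' \cdot v$ (after absorbing fixed words from the witness) is exactly $\Id \cdot ((x^*, x^*)\rho_1^{-1})$: the diagonal $\Id$ accounts for the fact that $s$ and $s'$ must agree on some common "stable" overhang, the $(x^*, x^*)$ factor accounts for being free to insert matching powers of the period $x$, and the quotient by the fixed pair $\rho_1$ records the unavoidable offset coming from the particular alignment of $u$ against $v$. (3) Dually for the right part with $y$ and $\rho_2$, where the quotient is a left quotient $\rho_2^{-1}(y^*, y^*)$ reflecting that periods are inserted on the right. (4) Conversely, check that every element of the claimed product set indeed lies in $\equ_\cV(u, v)$, which is the easy direction: inserting matching powers of the period and using that $=_\cV$ is a congruence.

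The main obstacle I anticipate is step (2)–(3), specifically proving that the left/right parts are *exactly* of the stated shape rather than merely contained in it or containing it. The containment $\subseteq$ is where supercancellativity and aperiodicity must be pushed hardest: given an arbitrary $(s, s')$ with $s\cdot u =_\cV s' \cdot v$, I must show $s$ and $s'$ differ only by a diagonal element post-composed with a power-of-$x$ shift, which requires carefully iterating Lemma~\ref{lem:dance} to compare $s$ and $s'$ against each other (not just against fixed words) and invoking aperiodicity to collapse $x^\omega$-type tails into ordinary powers $x^*$. I would also need to be careful that the periods $x$ and $y$ and the offsets $\rho_1, \rho_2$ extracted on the left and on the right are genuinely independent — that no cross-constraint links a choice of left shift to a choice of right shift — which is exactly where the ability to vary $s, t$ independently in the definition of $\equ_\cV$ is essential, and where I expect to lean again on supercancellativity to justify the clean splitting in step (1).
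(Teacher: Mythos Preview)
Your proposal is correct in spirit and uses the same core tools as the paper (supercancellativity to strip common prefixes/suffixes, Lemma~\ref{lem:dance} and aperiodicity to obtain $u =_\cV s\cdot u\cdot t$-type absorption, then extraction of primitive periods $x,y$).  The main structural difference is that the paper does \emph{not} attempt a direct left/right factorization of $\equ_\cV(u,v)$; instead it first reduces the general case to the diagonal case $\equ_\cV(u,u)$.  Concretely, from any witness one strips common prefixes/suffixes and lands in one of two situations ($s\cdot u\cdot t =_\cV v$ or $u\cdot t =_\cV s'\cdot v$), and in either situation Lemma~\ref{lem:dance} yields a single profinite $w$ with $\equ_\cV(u,v)$ equal to a fixed translate of $\equ_\cV(w,w)$.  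For $\equ_\cV(u,u)$ the periods $x,y$ are simply the primitive words with $x\cdot u =_\cV u$ and $u\cdot y =_\cV u$, the product structure is immediate (with $\rho_1=\rho_2=(1,1)$), and the hard inclusion is the short argument $s^{|x|}\cdot u =_\cV x^{|s|}\cdot u \Rightarrow s^{|x|}=x^{|s|}\Rightarrow s\in x^*$ by supercancellativity and primitivity.  Your step~(1) -- splitting into independent left and right analyses -- is exactly the place where the paper's reduction to the diagonal buys something: it makes the independence of the two factors automatic rather than something to be argued for, and it avoids having to track the offset pairs $\rho_1,\rho_2$ through the core combinatorial step (they appear only at the very end, when undoing the reduction).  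Your route would work, but the diagonal reduction is what keeps the argument short.
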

\begin{proof}
  Let us first establish the property for $u = v$.  Assume that there are
  \emph{nonempty primitive} words $x, y$ such that $x \cdot u \cdot y =_\cV u$;
  we show the statement of the lemma with these $x$ and $y$, and
  $\rho_1 = \rho_2 = (1, 1)$.  Note that
  $x^\omega \cdot u \cdot y^\omega =_\cV u$, hence, since
  $x^{\omega+1} = x^\omega$ and similarly for $y$, we have that
  $x \cdot u =_\cV u \cdot y =_\cV u$.  This and the fact that $\cV$ is
  supercancellative show the right-to-left inclusion.

  For the left-to-right inclusion, let $s, s', t, t'$ be such that
  $s\cdot u \cdot t =_\cV s' \cdot u \cdot t'$.  Since $\cV$ is supercancellative,
  this implies that the equation also holds if common prefixes of $s$ and $s'$
  and common suffixes of $t$ and $t'$ are removed.  We may thus assume that we
  are in two possible situations, by symmetry:
  \begin{enumerate}
  \item Suppose $s' = t' = 1$, that is, $s \cdot u \cdot t =_\cV u$, and that
    $s, t$ are nonempty.  By the same token as above, this shows that
    $s \cdot u =_\cV u \cdot t =_\cV u$.  In particular, this implies that:
    \[s^{|x|} \cdot u \cdot t^{|y|} =_\cV x^{|s|} \cdot u \cdot
      y^{|t|}\enspace,\]
    which implies, since $\cV$ is supercancellative, that $s^{|x|} = x^{|s|}$ and
    $t^{|y|} = y^{|t|}$.  As $x$ and $y$ are primitive, this shows that $s \in
    x^*$ and $t \in y^*$.  (Note that this holds even if one of $s$ or $t$ is
    empty.)
  \item Suppose $s = t' = 1$, that is, $u \cdot t =_\cV s' \cdot u$, and that
    $s', t$ are nonempty.  By
    Lemma~\ref{lem:dance}, we have that $u =_\cV s' \cdot u \cdot t$, and we can
    appeal to the previous situation, showing that $s' \in x^*$ and $t \in y^*$.
  \end{enumerate}
  (The cases where one of $s, t$ is empty, in the first point, or one of $s', t$
  is empty, in the second, are treated similarly.  Note that it is not possible
  for both $s$ and $s'$ to be nonempty, since that would imply that they start
  with different letters, falsifying the assumed equation by
  supercancellativity.)

  We assumed that the $x, y$ existed, we ought to show the other cases satisfy
  the claim.  The two situations above show that if $\equ_\cV(u,u)$ is nonempty,
  then such $x, y$ exist, although without the guarantee that they be nonempty.
  Now if $x \cdot u =_\cV u$ and there are no nonempty $y$ such that
  $x \cdot u \cdot y =_\cV u$, this implies that there are no nonempty $y$ such
  that $u \cdot y =_\cV u$.  Consequently, in the above case, $t = t' = 1$,
  and the analysis stands.  This concludes the proof for the case $u = v$.

  We will reduce the case $u \neq v$ to this one.  Indeed, suppose that
  $s \cdot u \cdot v = s' \cdot v \cdot t'$.  Again, by stripping away common
  prefixes and suffixes, we are faced with two cases:
  \begin{enumerate}
  \item Suppose $s' = t' = 1$, that is, $s \cdot u \cdot t =_\cV v$.  We have
    that $\equ_\cV(u, v) = \equ_\cV(u, s \cdot u \cdot t)$, hence
    $(m, m', n, n') \in \equ_\cV(u, v)$ iff
    $(m, m'\cdot s, n, t \cdot n') \in \equ_\cV(u, u)$, and the result follows.
  \item Suppose $s = t' = 1$, that is, $u \cdot t =_\cV s' \cdot v$.  By
    Lemma~\ref{lem:dance}, there is a profinite word $w$ such that $u =_\cV s'
    \cdot w$ and $v =_\cV w \cdot t$, hence $(m, m', n, n') \in \equ_\cV(u, v)$
    iff $(m \cdot s', m', n, t \cdot n') \in \equ_\cV(w, w)$, concluding the
    proof.\qedhere
  \end{enumerate}
\end{proof}


\begin{lem}\label{lem:apeqomega}
  Let $x, y$ be words.  For every aperiodic supercancellative variety $\cV$, the
  equality $\equ_\cV(x^\omega, y^\omega) = \equ_\vA(x^\omega, y^\omega)$ holds.
\end{lem}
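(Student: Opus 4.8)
The plan is to show both inclusions between $\equ_\cV(x^\omega, y^\omega)$ and $\equ_\vA(x^\omega, y^\omega)$. The inclusion $\equ_\cV(x^\omega, y^\omega) \subseteq \equ_\vA(x^\omega, y^\omega)$ should be immediate once we observe that $\vA$ is the \emph{largest} aperiodic variety, so any aperiodic variety $\cV$ satisfies $u =_\cV v \implies u =_\vA v$; hence if $s\cdot x^\omega\cdot t =_\cV s'\cdot y^\omega\cdot t'$ then the same equality holds in $\vA$. So the real content is the converse: if a $4$-tuple equalizes $x^\omega$ and $y^\omega$ in $\vA$, it does so in any aperiodic supercancellative $\cV$.

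For the converse inclusion, I would invoke Lemma~\ref{lem:apeq}, which applies to both $\cV$ and $\vA$ since both are aperiodic and supercancellative (and $x^\omega$, $y^\omega$ are profinite words). That lemma tells us that $\equ_\cV(x^\omega, y^\omega)$, if nonempty, has a rigid shape determined by primitive words and two pairs, and likewise for $\equ_\vA(x^\omega,y^\omega)$. The strategy is then to pin down these parameters explicitly for $x^\omega$ and $y^\omega$ and check they are the \emph{same} for $\cV$ and for $\vA$. The key point is that the parameters extracted in the proof of Lemma~\ref{lem:apeq} come from equations of the form $s\cdot u \cdot t =_\cV u$ (after stripping common affixes, possibly after the reduction to the $u=v$ case via Lemma~\ref{lem:dance}), and for $u = x^\omega$ such an equation forces $s$ and $t$ to lie in $(\text{prim.\ root of }x)^*$ — an argument that, crucially, only uses supercancellativity and aperiodicity, not any finer property of $\cV$. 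So the primitive words $x, y$ and the pairs $\rho_1, \rho_2$ that Lemma~\ref{lem:apeq} yields are actually computed from $x$, $y$ alone, independent of which aperiodic supercancellative variety we are in. Concretely: if $x$ has primitive root $x_0$ and $y$ has primitive root $y_0$, one shows $\equ_\cV(x^\omega,y^\omega) = \Id\cdot\big((x_0^*,x_0^*)\rho_1^{-1}\big) \times \big(\rho_2^{-1}(y_0^*,y_0^*)\big)\cdot\Id$ for pairs $\rho_1,\rho_2$ reflecting the relationship between $x^\omega$ and $y^\omega$ — the same description valid for $\vA$.

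One subtlety to handle is the possibility that one of the equalizer sets is empty while the other is not: I would rule this out by the first inclusion (if $\equ_\cV$ is nonempty then so is $\equ_\vA$) together with a direct check that nonemptiness of $\equ_\vA(x^\omega,y^\omega)$ — which happens exactly when $x^\omega$ and $y^\omega$ are "affine conjugate" in the appropriate sense, e.g.\ when $x_0$ and $y_0$ are conjugate — already implies the analogous relation holds in $\cV$, since $\cV$ satisfies $x_0^\omega = x_0^{\omega+1}$ and is supercancellative, which is all that was needed to witness membership of the relevant tuple. I expect the main obstacle to be bookkeeping: carefully re-running the case analysis of Lemma~\ref{lem:apeq} for the specific profinite words $x^\omega$, $y^\omega$ and verifying at each branch that the extracted data is variety-independent, rather than proving anything genuinely new. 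Once that is done, both equalizer sets admit the identical explicit description, and equality follows.
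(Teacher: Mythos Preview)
You have the easy and hard directions swapped. Since $\cV \subseteq \vA$ as varieties of languages, an equation that holds in $\vA$ (i.e., is satisfied by every $\vA$-language) is \emph{a fortiori} satisfied by every $\cV$-language; hence $u =_\vA v \Rightarrow u =_\cV v$, not the converse. Consequently the trivial inclusion is $\equ_\vA(x^\omega,y^\omega) \subseteq \equ_\cV(x^\omega,y^\omega)$, exactly as the paper states (``all equations true in $\vA$ hold in $\cV$''). The substantive direction is $\equ_\cV \subseteq \equ_\vA$: one must show that a tuple equalizing $x^\omega$ and $y^\omega$ in the \emph{coarser} relation $=_\cV$ already does so in the \emph{finer} relation $=_\vA$. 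Your handling of the emptiness subtlety inherits the same reversal: what is automatic is ``$\equ_\vA$ nonempty $\Rightarrow$ $\equ_\cV$ nonempty'', whereas you need the opposite implication --- and the Remark following the lemma in the paper shows that for general profinite $u,v$ this opposite implication \emph{fails} (take $u=(ab)^\omega$, $v=(ab)^\omega a(ab)^\omega$ with $\cV=\vDA$). So nothing here is free.

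Your backup strategy of re-running Lemma~\ref{lem:apeq} to extract a variety-independent explicit description could in principle be salvaged, but the paper does something more direct. Starting from $s\cdot x^\omega\cdot t =_\cV s'\cdot y^\omega\cdot t'$, supercancellativity forces $s\cdot x^{n|y|}$ and $s'\cdot y^{n|x|}$ to agree on arbitrarily long prefixes, so $x^{|y|}$ and $y^{|x|}$ are conjugate; hence their primitive roots are conjugate, say $zz'$ and $z'z$, giving $x^\omega=(zz')^\omega$ and $y^\omega=(z'z)^\omega$. After stripping common prefixes and suffixes one is reduced to an equation of the shape $s\cdot(zz')^\omega\cdot t =_\cV (z'z)^\omega$ or $(zz')^\omega\cdot t =_\cV s'\cdot(z'z)^\omega$, and an argument as in Lemma~\ref{lem:apeq} (using only supercancellativity and aperiodicity) pins $s,s',t$ down to explicit elements of $(zz')^*, (z'z)^*$ or shifts thereof --- from which the same equality is seen to hold in $\vA$. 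The point you should retain is that the combinatorics-on-words step (conjugacy of $x^{|y|}$ and $y^{|x|}$) is where the special form $x^\omega,y^\omega$ is used, and this is precisely what prevents the counterexample in the Remark.
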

\begin{proof}
  The inclusion from right to left is clear, since all equations true in \vA
  hold in \cV.

  \def\subp{_\text{\normalfont p}}\def\subs{_\text{\normalfont s}}
  In the other direction, let us write $u = x^{|y|}$ and $v = y^{|x|}$; we have
  that $u^\omega = x^\omega$ and $v^\omega = y^\omega$.  Suppose
  $s \cdot u^\omega t =_\cV s' \cdot v^\omega \cdot t'$.  In particular, since
  $\cV$ is supercancellative, this means that $s \cdot u^n$ is a prefix of
  $s' \cdot v^n$, or vice-versa, depending on whether $|s| > |s'|$ or the
  opposite.  This implies that $u \cdot u\subp = v\subs \cdot v$ for some prefix
  $u\subp$ of $u$ and suffix $v\subs$ of $v$.  Hence (by,
  e.g.,~\cite[Proposition~1.3.4]{lothaire97}) $u$ and $v$ are conjugate.  Their
  respective primitive roots are thus conjugate
  (by~\cite[Proposition~1.3.3]{lothaire97}); writing  $z\cdot z'$ and
  $z' \cdot z$ for them, we have that $u^\omega = {(z\cdot z')}^\omega$ and
  $v^\omega = {(z'\cdot z)}^\omega$.

  Thus the equation above reads:
  $s \cdot {(z \cdot z')}^\omega \cdot t =_\cV s' \cdot {(z' \cdot z)}^\omega \cdot
  t'$.  As in the proof of Lemma~\ref{lem:apeq}, removing the common prefixes and
  suffixes (which we can do both in $\cV$ and $\vA$), we are left with two
  possibilities:
  \begin{itemize}
  \item Suppose $s' = t' = 1$, that is, $s \cdot {(z \cdot z')}^\omega \cdot t =_\cV
    {(z' \cdot z)}^\omega$.  The same argument as in Lemma~\ref{lem:apeq} shows that
    $s \in z' \cdot {(z\cdot z')}^*$ and $t \in {(z \cdot z')}^*\cdot z$, and hence
    the equation holds in $\vA$ too;
  \item Suppose $s = t' = 1$, that is, ${(z \cdot z')}^\omega \cdot t =_\cV s'
    \cdot {(z' \cdot z)}^\omega$.  Similarly, as $\cV$ is supercancellative and
    aperiodic, this shows that $s' \in z \cdot {(z' \cdot z)}^*$ and $t \in {(z
    \cdot z')}^*\cdot z$, and the equation holds in $\vA$ too, concluding the
    proof.\qedhere
  \end{itemize}
\end{proof}

\begin{rem}
  For two aperiodic supercancellative varieties $\cV$ and $\cW$, we could
  further show that if both $\equ_\cV(u, v)$ and $\equ_\cW(u, v)$ are nonempty,
  then they are equal, for any profinite words $u, v$.  It may however happen
  that one equalizer set is empty while the other is not; for instance, with
  $u = {(ab)}^\omega$ and $v = {(ab)}^\omega\cdot a \cdot {(ab)}^\omega$, the
  equalizer set of $u$ and $v$ in $\vDA$ is nonempty, while it is empty in
  $\vA$.
  \fullstop
\end{rem}

\section{Intermezzos}\label{sec:intermezzos}

We present a few facts of independent interest on continuous rational functions.
Through this, we develop a few examples, showing in particular how the
Preservation and Syncing Lemmas can be used to show (non)continuity.  In a first 
part, we study when the structure of the transducer is relevant to continuity,
and in a second, when the (non)inclusion of variety relates to (non)inclusion of 
the class of continuous rational functions.

\subsection{Transducer structure and continuity}\label{sec:struct}




As noted by \reutschu~\cite[p.~231]{reutenaeur-schutzenberger95}, there exist
numerous natural varieties~$\cV$ for which any $\cV$_realizable rational
function is $\cV$_continuous.  Indeed:
\begin{prop}\label{prop:transtocont}
  Let $\cV$ be a variety of languages closed under inverse $\cV$_realizable
  rational function.  Any $\cV$_realizable rational function is
  $\cV$_continuous.  This holds in particular for the varieties $\vA$, $\vGsol$, and
  $\vG$.
\end{prop}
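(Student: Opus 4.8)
The general statement is a matter of unwinding the definition: to say that $\cV$ is closed under inverse $\cV$-realizable rational functions is to say exactly that every $\cV$-realizable rational function $g\colon A^*\to B^*$ satisfies $g^{-1}(L)\in\cV(A^*)$ for all $L\in\cV(B^*)$, i.e.\ that $g$ is $\cV$-continuous. So the real content is the last sentence, and the plan is to establish this closure property separately for the aperiodic variety $\vA$ and for the group varieties $\vGsol$ and $\vG$.

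\emph{The case $\vA$.} Let $\tau$ be a $\vA$-transducer realizing $f$, and apply the Preservation Lemma with the defining set $\{x^{\omega+1}=x^\omega\}$. Condition~(1) of the Syncing Lemma ($\tau^{-1}(B^*)\in\vA(A^*)$) is automatic, since the domain of a $\vA$-transducer satisfies the $\vA$-equations. For the rest I would show $\ext{\tau}(s\cdot x^{\omega+1}\cdot t)=_\vA\ext{\tau}(s\cdot x^\omega\cdot t)$ for all words $s,x,t$ by evaluating both sides via Lemmas~\ref{lem:eval} and~\ref{lem:omega}. Since $x^\omega$ and $x^{\omega+1}$ act identically on the states of $\tau$, the unique accepting path and its endpoint are the same for $s\cdot x^\omega\cdot t$ and $s\cdot x^{\omega+1}\cdot t$; reading $x^\omega$ along that path eventually loops on a fixed state with some loop output $z$, so $\ext{\tau}(s\cdot x^\omega\cdot t)$ factors as $\alpha\cdot z^{\omega-c}\cdot\beta$ for words $\alpha,\beta$ and an integer $c$, and $\ext{\tau}(s\cdot x^{\omega+1}\cdot t)$ is the same with one additional turn of that loop, i.e.\ with $z^{\omega-c}$ replaced by $z^{\omega-c+1}$. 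As $z^{\omega-c}=_\vA z^\omega=_\vA z^{\omega-c+1}$ (all equal the stable power of $z$ in every aperiodic monoid) and $=_\vA$ is a congruence of $\pro{B^*}$, the two images are $\vA$-equal. (When $\tau$ is subsequential this is immediate: $x$ acts as the identity on states of the form $p\cdot x^\omega$, so the extra $x$ merely appends one copy of the self-loop output, turning a trailing $z^\omega$ into $z^{\omega+1}$.)

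\emph{The cases $\vGsol$, $\vG$.} Write $\cV$ for either. A $\cV$-transducer has its transition monoid a finite solvable group, resp.\ an arbitrary finite group, so each letter acts as a permutation of the state set and the input automaton is a permutation automaton. Splitting $\tau$ over its initial states yields finitely many single-initial-state subtransducers; these are subsequential, and unambiguity of $\tau$ makes their domains pairwise disjoint, so (as $\cV$ is closed under union) I may assume $\tau$ is subsequential with single initial state $q_0$. Given $L=h^{-1}(P)\in\cV(B^*)$ with $h\colon B^*\to G$ onto a finite group $G$ of the corresponding type, form the deterministic automaton $\hat{\tau}$ on $Q\times G$ whose run on $w$ reaches $(q_0.w,\ h(\text{output of }\tau\text{ on }w))$, with final set $\{(q,g)\mid q\in F,\ h(\lambda(q_0))\,g\,h(\rho(q))\in P\}$; then $\hat{\tau}$ recognizes $f^{-1}(L)$. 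Each letter acts on $Q\times G$ by $(q,g)\mapsto(q.a,\ g\cdot h(\mu(q,a,q.a)))$, a permutation, so the transition monoid $N$ of $\hat{\tau}$ is a finite group. Its first-coordinate action gives a surjective morphism from $N$ onto the transition group $M$ of $\tau$, and the kernel embeds into $G^{Q}$ (an element fixing all first coordinates acts as $(q,g)\mapsto(q,g\,\psi(q))$ for a map $\psi\colon Q\to G$, and such elements compose by pointwise product of the $\psi$'s). Both $M$ and $G^{Q}$ lie in $\cV$, and $\cV$ is closed under extensions; hence $N$ lies in $\cV$ and $f^{-1}(L)\in\cV(A^*)$.

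\emph{Main obstacle.} The delicate part is the bookkeeping in the aperiodic case for a genuinely nondeterministic $\tau$: one must track, through the loop decomposition behind Lemma~\ref{lem:omega}, that going from $x^\omega$ to $x^{\omega+1}$ only changes the number of copies of one loop output $z$ sitting inside an $\omega$-power of $z$, so the change is absorbed by $z^\omega=_\vA z^{\omega+1}$. On the group side, the crucial ingredient is closure under extensions; this is exactly what fails for $\vGnil$ (an extension of a nilpotent group by a nilpotent group need not be nilpotent, e.g.\ $S_3$), which is why $\vGnil$ is absent from the statement.
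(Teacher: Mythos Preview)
Your approach differs substantially from the paper's: the paper simply cites a classical result of Sakarovitch that a variety is closed under inverse $\cV$-realizable rational functions iff it is closed under block product, and then records as folklore that $\vA$, $\vGsol$, and $\vG$ are block-product closed. You instead give direct arguments for each family.

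The group-variety argument is correct and more informative than a bare citation: the extension structure you exhibit (kernel embedding in $G^{Q}$, quotient the transition group of $\tau$) is precisely the mechanism, and your observation that $\vGnil$ is absent because nilpotent groups are not closed under extensions is exactly right. One small point: you do not actually need the single-initial-state pieces to have disjoint domains; since they agree on overlaps by functionality, $f^{-1}(L)$ is still their union, so the appeal to unambiguity of $\tau$ there is superfluous.

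The aperiodic case, however, has a genuine gap. Lemmas~\ref{lem:eval} and~\ref{lem:omega} are stated for \emph{unambiguous} transducers, and your phrase ``the unique accepting path'' presupposes this; but ``$\vA$-realizable'' only guarantees some $\vA$-transducer whose relation equals the graph of $f$, with no unambiguity promised. Your key assertion---that going from $x^\omega$ to $x^{\omega+1}$ merely replaces a $z^{\omega-c}$ by $z^{\omega-c+1}$ for one fixed loop output $z$---is exactly what needs proof: if the loop on which a given path stabilizes reads $x^k$ with $k>1$, a single extra $x$ is not one more turn of that loop, and with nondeterminism there may be several incomparable accepting paths for $s\cdot x^{n!}\cdot t$ and $s\cdot x^{n!+1}\cdot t$ whose loop structures differ. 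You flag this as the ``delicate part'' but do not resolve it. The paper's block-product route sidesteps this bookkeeping entirely; your route would need either a reduction to an unambiguous $\vA$-transducer (not obviously free) or a careful limit computation that exploits functionality to pin down the outputs independently of the chosen path.
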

\begin{proof}
  This is due to a classical result of Sakarovitch~\cite{sakarovitch79} (see
  also~\cite{pin-sakarovitch85}), stating, in modern parlance, that a variety
  $\cV$ is closed under \emph{block product} iff it is closed under inverse
  $\cV$_realizable rational functions (note that there has been some fluctuation
  on vocabulary, since wreath product was used at some point to mean block
  product).  That $\vA$, $\vGsol$, and $\vG$ are closed under block product is
  folklore.
\end{proof}

This naturally fails for all our other varieties, since they are not closed
under inverse $\cV$_realizable rational functions.  For completeness, we give
explicit constructions in the proof of the following Proposition.
\begin{prop}
  For $\cV \in \{\vJ, \vL, \vR, \vDA, \vAb, \vGnil, \vCom\}$, there are
  $\cV$_realizable rational functions that are not $\cV$_continuous.
\end{prop}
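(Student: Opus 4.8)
The plan is to handle each variety $\cV \in \{\vJ, \vL, \vR, \vDA, \vAb, \vGnil, \vCom\}$ by exhibiting a concrete small transducer $\tau$ that is a $\cV$-transducer (i.e.\ $\cV$-realizable, since its transition monoid satisfies the defining equations of $\cV$) but whose realized function $f$ fails to be $\cV$-continuous, the failure being detected via the Preservation Lemma or the Syncing Lemma. The recurring idea is that a $\cV$-transducer only constrains the \emph{input} behaviour, so we pick a transducer whose input automaton is trivial enough to lie in $\cV$ structurally, while the \emph{output} creates a language $f^{-1}(L)$, with $L \in \cV(B^*)$, that escapes $\cV(A^*)$. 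Concretely, I expect to use functions that ``delay'' or ``duplicate'' output in a way that turns a $\cV$-language on the output side into, say, a non-$\cV$ counting condition or a non-piecewise-testable condition on the input side.

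For the aperiodic cases I would argue as follows. For $\vCom$ (commutative) and $\vAb$ (commutative + $a^\omega = 1$): take $\tau$ over a one-letter input alphabet — every one-letter input automaton is commutative, so $\tau$ is structurally in $\vCom$ and in $\vAb$ — computing something like $f(a^n) = a^n b^n$ (realizable if we allow the transducer to write, say, on a two-letter output and the function is rational; one can instead use $f(a^n)=(ab)^n$ to stay rational with a single-state-loop transducer outputting $ab$). Then for a suitable commutative language $L$ on the output, $f^{-1}(L)$ encodes a non-commutative / non-$\vAb$ condition on $n$; for $\vAb$ the cleanest is to make $f^{-1}$ of a group-like output language be a language not recognized by an abelian group. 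For $\vJ, \vR, \vL, \vDA$: these fail because they are not closed under inverse morphisms-with-output. I would take the identity-like transducers whose transition monoids are $\vJ$-, $\vR$-, $\vL$-, or $\vDA$-trivial but whose output functions realize a renaming/merging that collapses the piecewise/ordered structure. For instance a transducer reading $a$ and $b$ with a two-state $\vR$-trivial (or $\vL$-trivial, resp.) structure, outputting a word whose $\vR$-class genuinely depends on information the input automaton does not retain; checking via the Syncing Lemma that condition (2) fails on the $\omega$-equation defining the variety. For $\vDA$ I would exploit exactly the phenomenon flagged in the final remark of the excerpt: $u=(ab)^\omega$, $v=(ab)^\omega a (ab)^\omega$ have nonempty equalizer in $\vDA$ but the transducer's output can move the relevant letter outside $\cts(x)$, breaking the defining equation $x^\omega z x^\omega = x^\omega$ after application of $\ext f$.

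For the group cases $\vGnil$: here structurally-in-$\vGnil$ means the transition monoid is a nilpotent group, but nilpotent groups are not closed under the block/wreath product, so there is a $\vGnil$-transducer computing a function $f$ such that, for some $L$ recognized by a nilpotent group, $f^{-1}(L)$ is recognized only by a non-nilpotent (e.g.\ solvable or non-solvable) group. The standard witness is an iterated-product construction: a function that, from an input word over a group alphabet, outputs a word encoding a commutator-style product, so that pulling back a linear-character language yields a higher-class-of-nilpotency (or outright non-nilpotent) language. I would verify non-continuity again through the Preservation Lemma, checking that $\ext f$ does not preserve one of the equations $[\,\ldots\,] = 1$ cutting out $\vGnil$.

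The main obstacle I anticipate is not any single case but the bookkeeping: for each variety I must (a) exhibit an explicit transducer, (b) verify its transition monoid lies in $\cV$ (routine but must be done), and (c) pin down an explicit $L \in \cV(B^*)$ together with an explicit profinite equation of $\cV$ violated by $\ext f$ at that $L$, i.e.\ produce $s,t$ and an equation $(u=v)\in E$ with $\ext f(sut) \neq_\cV \ext f(svt)$. Step (c), especially for $\vDA$ and $\vGnil$, is where the real content sits: I would spend most of the effort choosing the outputs so that the violated equation is transparent, leaning on Lemma~\ref{lem:omega} to compute $\ext f$ on $\omega$-terms and on Lemma~\ref{lem:apeq}/the equalizer-set descriptions to certify failure cleanly. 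For $\vJ, \vR, \vL, \vAb, \vCom$ I expect short, self-contained two-state (or one-state-loop) examples; I would present those in full and treat $\vDA$ and $\vGnil$ with slightly more care, citing folklore non-closure of these varieties under block product as the conceptual reason the constructions must exist.
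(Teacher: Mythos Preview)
Your overall strategy is sound, but the concrete proposal for $\vCom$ and $\vAb$ has a genuine gap. Over a one-letter input alphabet \emph{every} regular language is commutative, so no function $\{a\}^* \to B^*$ can ever fail to be $\vCom$-continuous; you must use at least two input letters. Worse, your specific candidate $f(a^n)=(ab)^n$ is a monoid morphism, and varieties are \emph{by definition} closed under inverse morphisms, so this $f$ is automatically $\cV$-continuous for every variety~$\cV$ and cannot witness failure for $\vAb$ either. (Incidentally, not every one-letter automaton has transition monoid in $\vAb$: you need the letter to act as a permutation, and then the realized function is essentially a morphism on the cyclic part anyway.) The paper's example for $\vAb$ and $\vGnil$ instead uses the two-letter input $\{a,b\}$ and a genuine $\mathbb{Z}/2\mathbb{Z}$-transducer: $b$ swaps two states, $a$ loops on both, but $a$ outputs a fresh letter $x$ on one state and $a$ on the other. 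This is structurally in $\vAb$, yet $\tau(aba)=xba$ while $\tau(baa)=baa$; since $aba =_{\vAb} baa$, the Preservation Lemma kills $\vAb$-continuity in one line. For $\vCom$ (and simultaneously $\vJ$, $\vR$) the paper uses the two-state transducer that erases everything except the last input letter: its transition monoid is idempotent and commutative, yet $\tau^{-1}(\{a\}) = A^*a \notin \vCom(A^*)\cup\vR(A^*)$.

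Your $\vDA$ plan via the equalizer-set remark is plausible but indirect; the paper's route is cleaner and worth knowing. The $\vL$-realizable function that deletes the first letter of each maximal block of identical letters pulls the bounded Dyck language $D^{(k-1)}$ back to $D^{(k)}$, so a single transducer simultaneously witnesses non-continuity for $\vL$, for $\vDA$, and indeed for every level of the dot-depth hierarchy. Your $\vGnil$ sketch (``commutator-style product'') is in the right spirit but still schematic; the paper simply reuses the $\mathbb{Z}/2\mathbb{Z}$-example above and argues that the preimage of a mod-$3$ count of the letter $x$ falls outside $\vGnil$.
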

\begin{proof}
  We devise simple counter examples with $A = \{a, b\}$.

  \proofstep{The $\vJ, \vR$ and $\vCom$ cases}%
  Recall that $A^*a \notin \vR(A^*) \cup \vCom(A^*)$.  The minimal unambiguous
  two-state transducer $\tau$ that erases all of its input except for the last
  letter is a $(\vJ \cap \vCom)$_transducer; indeed, $a$ acts in the same way as
  $b$ and they are idempotent on the transducer.  However,
  $\tau^{-1}(a) = A^*a$.

  \proofstep{The $\vR$ and $\vDA$ cases}%
  Consider the \emph{Dyck language} $D$ over $A$; this is the (nonregular)
  language of well-parenthesized expressions where $a$ is the opening and $b$
  the closing parenthesis.  Write $D^{(k)}$ for the Dyck language where
  parentheses are nested at most $k$~times, for instance
  $D^{(0)} = 1, D^{(1)}={(ab)}^*$ and $D^{(2)} = {(a{(ab)}^*b)}^*$.  These languages
  have great importance in algebraic language theory, as they separate each
  level of the \emph{dot-depth hierarchy}~\cite{brzozowski-knast78}.  It holds
  in particular that $D^{(1)} \notin \vDA(A^*)$.

  Let $\tau$ be the rational function that removes the first letter of each
  block of $a$'s and each block of $b$'s; naturally, $\tau$ is $\vL$_realizable.
  However, $\tau^{-1}(D^{k-1}) = D^k$, showing not only that $\tau$ is not
  continuous for \vDA, but also not continuous for \emph{any} level of the
  dot-depth hierarchy.

  \proofstep{The $\vGnil$ and $\vAb$ cases}%
  Consider the two-state transducer $\tau$ where $a$ loops on both states, and a
  $b$ on one state goes to the other.  When $a$ is read on the first state, it
  produces a $x$, while all the other productions are the identity.  This is an
  $\vAb$_transducer.  However, $\tau(aba) = xba \neq baa = \tau(baa)$, hence it
  is not $\vAb$_continuous by the Preservation Lemma, since $aba =_\vAb baa$.
  For $\vGnil$, let $L$ be the language over $\{a, b, x\}$ with a number of $x$
  congruent to $0$ modulo $3$.  It can be shown that
  $\tau^{-1}(L) \notin \vGnil(A^*)$, intuitively since this language needs to
  differentiate between those $a$'s that are an even number of $b$'s away from
  the beginning of the word, and those which are not.
\end{proof}

The converse concern, that is, whether all $\cV$_continuous rational
functions are $\cV$_realizable, was mentioned by
\reutschu~\cite{reutenaeur-schutzenberger95} for $\cV = \vA$.
\begin{prop}\label{prop:conttrans}
  For $\cV \in \{\vJ, \vL, \vR, \vDA, \vA, \vAb, \vCom\}$, there are
  $\cV$_continuous rational functions that are not $\cV$_realizable.
\end{prop}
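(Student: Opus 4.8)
The proposition asks, for each aperiodic or abelian-type variety $\cV$ in the list, to exhibit a rational function that is $\cV$_continuous but not $\cV$_realizable. The cleanest strategy is to produce, for each $\cV$, a rational function $f$ whose \emph{syntactic} behaviour on profinite equations is trivially controllable (so that the Preservation Lemma, or the Syncing Lemma, immediately gives $\cV$_continuity), while the transition monoid of \emph{any} transducer computing $f$ necessarily falls outside $\cV$. A first observation that makes several cases uniform: if $f$ has a \emph{finite domain} then $f$ is $\cV$_continuous for \emph{every} variety that separates words (indeed for Reg), since $f^{-1}(L)$ is always finite, hence lies in $\cV(A^*)$ when $\cV$ separates words; yet a transducer computing a function with, say, a long linearly-ordered chain of reachable states on a single letter $a$ will have $a$ acting non-idempotently or non-commutatively, forcing its transition monoid out of small varieties. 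So for $\vJ,\vR,\vL,\vDA,\vCom,\vAb$ one can likely get away with a carefully chosen finite-domain example — e.g.\ the partial function defined only on $\{a^n\}$ for $n\le N$ large, which cannot be realized by a $\vJ$- or $\vCom$-transducer because reading $a$ repeatedly visits $N{+}1$ distinct states, so the image of $a$ in the transition monoid has order $>$ anything an idempotent-and-commutative monoid allows after the appropriate $\omega$-power collapse.

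For the genuinely infinite-domain cases, and in particular for $\cV=\vA$ (the case singled out by Reutenauer and Schützenberger), I would use an identity-like function on an infinite regular domain chosen so that the domain itself is $\vA$ but requires a counting transition monoid. Concretely, take $A=\{a,b\}$ and let $f$ be the \emph{identity} restricted to a regular language $D$ whose minimal \emph{input} automaton is not aperiodic — for instance $D = (aa)^*$, or $D=(ab)^*$ with a parity constraint — noting that the identity on any regular domain $D$ is $\cV$_continuous exactly when $D\in\cV(A^*)$ (since $f^{-1}(L)=L\cap D$), and $(aa)^*$-type languages are \emph{not} aperiodic, so this particular $D$ won't do for $\vA$. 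The right move for $\vA$ is instead to keep $D$ aperiodic (so $f^{-1}(L)=L\cap D\in\vA$ always, giving $\vA$_continuity trivially) but make the \emph{output-reading} force a cyclic transition monoid: e.g.\ a transducer that copies its input but whose state must remember a running parity of some auxiliary count in order to decide, at the end, whether to output a marker — arranged so the function is still well-defined and equals a plain copying map on its domain, yet every unambiguous transducer for it has a transition with a non-trivial cycle. The Preservation Lemma is what certifies continuity: one checks $\widehat f(s\cdot u\cdot t)=_\cV\widehat f(s\cdot v\cdot t)$ directly for each defining equation $(u=v)$ of $\vA$, using Lemma~\ref{lem:omega} to compute $\widehat f$ on $\omega$-terms, and the computation is trivial because $f$ acts as a quotient-respecting map (or the identity) on its domain.

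**Main obstacle.** The hard part is the lower bound: showing that \emph{no} unambiguous transducer for $f$ is a $\cV$_transducer. Continuity is a property of the function, but $\cV$_realizability quantifies over all transducers computing it, so one needs a structural invariant of the function that obstructs every presentation. The natural tool is the \emph{transition monoid of the minimal (bimachine / canonical) presentation} of a rational function, or a pumping argument: if some word $w$ must be "read" through $k$ pairwise-distinguishable state-configurations in any transducer for $f$ — distinguishable because the future output depends on the $w$-count modulo or up to some threshold — then the image of $w$ in any transition monoid has the corresponding order, contradicting the $\omega$-power identity defining $\cV$ (e.g.\ $w^{\omega+1}=w^\omega$ for $\vA$, or $w^\omega=1$ together with commutativity for $\vAb$, or nilpotency for $\vGnil$). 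So the crux of each case is designing $f$ so that (i) it is provably continuous via a one-line Preservation-Lemma check, and (ii) the output-dependence on some letter-count is unavoidable, which one argues by exhibiting, for each $k$, two inputs agreeing up to the relevant equivalence in $\cV$ on the \emph{domain side} but with distinct images — i.e.\ showing $f$ itself violates the would-be transition-monoid identity even though it respects the \emph{language} equations. I would organize the write-up as one short paragraph per bullet ($\vJ/\vR/\vCom$ together via finite domain, $\vL/\vDA$ via a nested-block variant, $\vA$ via the parity-marker transducer, $\vAb$ and $\vGnil$ via small abelian/nilpotent counterexamples), each exhibiting the transducer, the one-line continuity argument, and the pumping obstruction to realizability.
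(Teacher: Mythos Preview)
There is a genuine gap: the finite-domain strategy cannot produce a non-$\cV$_realizable function for any of the aperiodic varieties. If $f$ has finite domain $D\subseteq A^*$, realize it by the subsequential ``prefix-tree plus sink'' transducer (states are the prefixes of words in $D$, together with a sink; outputs are pushed to $\rho$). In this automaton every sufficiently long word acts as the constant-to-sink map, so the transition monoid has a zero and is $\vJ$-trivial; in particular it lies in $\vJ\subseteq\vR,\vL,\vDA,\vA$. Thus every finite partial function is $\vJ$_realizable, and a fortiori $\cV$_realizable for each aperiodic $\cV$ on the list. Your intuition that ``a long chain of states forces $a$ to act non-idempotently'' is beside the point: none of these varieties require idempotence, only the eventual stabilisation $x^{\omega+1}=x^\omega$, which the chain-plus-sink monoid satisfies. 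For $\vCom$ and $\vAb$ the finite-domain idea fails for the opposite reason: these two varieties do \emph{not} separate words, so finite preimages need not lie in $\cV(A^*)$; for instance any $f$ with $f(ab)\neq f(ba)$ and $\{f(ab)\}\in\vCom$ has $f^{-1}(\{f(ab)\})$ containing $ab$ but not $ba$, hence outside $\vCom$. Your sketch for $\vA$ (``append a marker depending on a parity'') is not $\vA$_continuous either: the preimage of $B^*c$ would be the non-aperiodic set of odd-length words. Finally, $\vGnil$ is not part of this proposition.

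The paper's argument is much shorter and handles all five aperiodic varieties with a single example: over the unary alphabet $\{a\}$, let $\tau$ erase every second $a$. Continuity is immediate because over one letter each of $\vJ,\vR,\vL,\vDA,\vA$ consists exactly of the finite and cofinite languages, and $\tau^{-1}$ clearly preserves that dichotomy. Non-realizability follows because $\tau$ is subsequential and its minimal transducer has transition monoid $\mathbb{Z}/2\mathbb{Z}$; by the result of \reutschu cited in the paper, this monoid divides the transition monoid of any realizing transducer, so none can be aperiodic. For $\vAb$ and $\vCom$ the paper gives a separate two-letter example ($w\in aA^*\mapsto(ab)^{|w|}$, $w\in bA^*\mapsto(ba)^{|w|}$) and verifies continuity directly via the Preservation Lemma, while non-commutativity of the underlying automaton is forced because the first letter must be remembered throughout.
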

\begin{proof}
  \proofstep{The aperiodic cases}%
  Let $A = \{a\}$, a unary alphabet.  Consider the transducer $\tau$ that
  removes every second $a$: its minimal transducer not being a $\vA$_transducer,
  it is not $\vA$_realizable (this is a property of subsequential
  transducers~\cite{reutenaeur-schutzenberger95}).  However, all the unary
  languages of $\cV$ are either finite or co-finite, and hence for any
  $L \in \cV(A^*)$, $\tau^{-1}(L)$ is either finite or co-finite, hence belongs
  to $\cV(A^*)$.

  \proofstep{The $\vAb$ and $\vCom$ cases}%
  Over $A = \{a, b\}$, define $\tau$ to map words $w$ in $aA^*$ to ${(ab)}^{|w|}$,
  and words $w$ in $bA^*$ to ${(ba)}^{|w|}$.  Clearly, $a$ and $b$ cannot act
  commutatively on the transducer.  Now $\tau(ab) =_\vCom \tau(ba)$, and
  moreover $\tau(x^\omega) =_\vAb {(ab)}^\omega =_\vAb 1 = \tau(1)$, hence $\tau$
  is continuous for both $\vAb$ and $\vCom$ by the Preservation Lemma.
\end{proof}


We delay the positive answers to that question, namely for
$\vGnil, \vGsol, \vG$, to Corollary~\ref{cor:grpconttotrans} as they constitute
our main lever towards the decidability of continuity for these classes.

\subsection{Variety inclusion and inclusion of classes of continuous functions}\label{sec:inc}

In this section, we study the consequence of variety (non)inclusion on the 
inclusion of the related classes of continuous rational functions.  This is
reminiscent of the notion of \emph{heredity} studied by~\cite{pin-silva11},
where a function is $\cV$_hereditarily continuous if it is $\cW$_continuous for
each subvariety $\cW$ of $\cV$.  Variety noninclusion provides the simplest
study case here:


\begin{prop}
  Let $\cV$ and $\cW$ be two varieties.  If $\cV \not\subseteq \cW$ then
  there are $\cV$_continuous rational functions that are not
  $\cW$_continuous.
\end{prop}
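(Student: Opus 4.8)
The plan is to exploit the fact that $\cV \not\subseteq \cW$ gives us a concrete witness language $L \in \cV(A^*) \setminus \cW(A^*)$ for some alphabet $A$, and to build a $\cV$_continuous rational function whose inverse image of some $\cW$_language is exactly $L$ (or something not in $\cW$). The cleanest way to do this is to make the function depend only trivially on its argument in the ``hard'' direction. Concretely, first I would fix $L \in \cV(A^*) \setminus \cW(A^*)$. Then consider the identity function $\mathrm{id}\colon A^* \to A^*$, which is obviously rational (a one-state transducer) and $\cV$_continuous for every variety $\cV$, since $\mathrm{id}^{-1}(K) = K$. If $L$ itself already lies in $\cW(B^*)$ for the relevant output alphabet this would not immediately work, but here $L \notin \cW(A^*)$, so $\mathrm{id}^{-1}(L) = L \notin \cW(A^*)$, witnessing that $\mathrm{id}$ is not $\cW$_continuous. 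So in fact the statement is almost immediate with $f = \mathrm{id}_{A^*}$.

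The one subtlety is that the varieties involved are defined over \emph{all} alphabets simultaneously, and one needs $L \in \cV(A^*)$ for a \emph{specific} alphabet $A$; since $\cV \not\subseteq \cW$ means by definition that there is some alphabet $A$ with $\cV(A^*) \not\subseteq \cW(A^*)$, such an $L$ and $A$ exist. The second subtlety, which I would address explicitly, is that ``rational function'' in this paper means a partial word-to-word function computed by an unambiguous transducer; the identity is total and trivially of this form, so there is nothing to check. Thus the key steps, in order, are: (1) unpack $\cV \not\subseteq \cW$ to obtain $A$ and $L \in \cV(A^*) \setminus \cW(A^*)$; (2) take $f = \mathrm{id}_{A^*}$, observe it is rational; (3) note $f^{-1}(K) = K$ for all $K$, so $f$ is $\cV$_continuous (indeed $\cU$_continuous for every variety $\cU$); (4) apply this with $K = L$: $f^{-1}(L) = L \notin \cW(A^*)$, so $f$ is not $\cW$_continuous.

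There is essentially no obstacle here — the result is a sanity check confirming that variety noninclusion is detected at the level of continuous functions, and the identity function does the job. If the authors want a function with distinct input and output alphabets or a non-degenerate example, one could instead embed $A$ into a larger alphabet, but that complication is unnecessary for the statement as phrased. The only thing to be mildly careful about is invoking the correct notion: $\cV$_continuity requires $f^{-1}(K) \in \cV(A^*)$ for every $K \in \cV(B^*)$ where $B$ is the output alphabet; with $B = A$ and $f = \mathrm{id}$ this reduces to the tautology $K \in \cV(A^*) \Rightarrow K \in \cV(A^*)$, while the failure of $\cW$_continuity needs the single bad language $L$. I would present the argument in two short sentences and move on.
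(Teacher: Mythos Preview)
Your argument has a genuine gap. To show that $f$ is \emph{not} $\cW$_continuous, you must exhibit a language $K \in \cW(A^*)$ with $f^{-1}(K) \notin \cW(A^*)$. You take $K = L$, but $L \notin \cW(A^*)$ by construction, so $L$ is not an admissible test language. In fact the identity function is $\cW$_continuous for \emph{every} variety $\cW$, since $\mathrm{id}^{-1}(K) = K$; your candidate therefore cannot work.

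The paper's fix is small but essential: take $f$ to be the identity \emph{restricted to domain $L$}. This is still rational (intersect the underlying automaton with an automaton for $L$), and now $f^{-1}(K) = K \cap L$. Since $L \in \cV(A^*)$ and varieties are closed under intersection, $f$ is $\cV$_continuous. For the failure of $\cW$_continuity one uses $K = A^*$, which is in $\cW(A^*)$ for any variety, and $f^{-1}(A^*) = L \notin \cW(A^*)$. The whole point is that partiality lets you smuggle $L$ in as the preimage of a trivially-$\cW$ language; the total identity cannot do this.
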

\begin{proof}
  Let $L \in \cV(A^*)$ be such that $L \notin \cW(A^*)$.  Define
  $f\colon A^* \to A^*$ as the identity function with domain $L$.  Clearly,
  as $f^{-1}(K) = K \cap L$, the function $f$ is $\cV$_continuous.  However,
  $f^{-1}(A^*) = L \notin \cW(A^*)$ and $A^* \in \cW(A^*)$, thus $f$ is not
  $\cW$_continuous.
\end{proof}

The remainder of this section focuses on a dual statement:

\begin{center}
  \vspace{.5em}
  \emph{If $\cV \subsetneq \cW$, are all $\cV$_continuous rational
    functions $\cW$_continuous?}
\end{center}

\subsubsection{The group cases}

We first focus on group varieties.  Naturally, if 1. $\cV$_continuous rational
functions are $\cV$_realizable and 2. $\cW$_realizable rational functions are
$\cW$_continuous, this holds.  Appealing to the forthcoming
Corollary~\ref{cor:grpconttotrans} for point 1 and
Proposition~\ref{prop:transtocont} for point 2, we then get:
\begin{prop}
  For $\cV, \cW \in \{\vGnil, \vGsol, \vG\}$ with $\cV \subsetneq \cW$, all
  $\cV$_continuous rational functions are $\cW$_continuous.  This however fails
  for $\cV = \vAb$ and for any $\cW \in \{\vGnil, \vGsol, \vG\}$.
\end{prop}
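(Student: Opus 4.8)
The plan is to split the statement into its two halves and handle them separately, reusing the machinery already assembled in the paper.

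\textbf{The positive part.} For $\cV, \cW \in \{\vGnil, \vGsol, \vG\}$ with $\cV \subsetneq \cW$, I would argue exactly as the sentence preceding the proposition suggests. Take a $\cV$_continuous rational function $f$. By Corollary~\ref{cor:grpconttotrans} (which we are allowed to invoke), $f$ is $\cV$_realizable, say by a $\cV$_transducer $\tau$. Since $\cV \subseteq \cW$, every set of equations defining $\cV(A^*)$ is ``harder'' than one defining $\cW(A^*)$; concretely, any $\cW$_equation $u = v$ is also a $\cV$_equation, so a $\cV$_transducer automatically satisfies $q.u = q.v$ for all $\cW$_equations, i.e.\ $\tau$ is also a $\cW$_transducer and $f$ is $\cW$_realizable. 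Finally, by Proposition~\ref{prop:transtocont}, since $\vGsol$ and $\vG$ — and indeed $\vGnil$ is irrelevant here because if $\cW = \vGnil$ then $\cV \subsetneq \cW$ forces $\cV$ to be something not in our list — are closed under block product, any $\cW$_realizable rational function is $\cW$_continuous. The only subtlety is that Proposition~\ref{prop:transtocont} lists only $\vA, \vGsol, \vG$, not $\vGnil$; but the only pairs $\cV \subsetneq \cW$ inside $\{\vGnil, \vGsol, \vG\}$ with $\cW \in \{\vGsol, \vG\}$, so $\cW$ is always covered. I should state this case split explicitly so the reader sees $\vGnil$ never occurs as the larger variety.

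\textbf{The negative part.} For $\cV = \vAb$ and $\cW \in \{\vGnil, \vGsol, \vG\}$, I need an $\vAb$_continuous rational function that is not $\cW$_continuous for every such $\cW$ at once; since $\vGnil \subseteq \vGsol \subseteq \vG$, it suffices to produce one function that is $\vAb$_continuous but not $\vGnil$_continuous. The natural candidate is the function already used in the proof of Proposition~\ref{prop:conttrans} (the ``$\vAb$ and $\vCom$ cases''): over $A = \{a,b\}$, map $w \in aA^*$ to $(ab)^{|w|}$ and $w \in bA^*$ to $(ba)^{|w|}$. That proof already shows it is $\vAb$_continuous via the Preservation Lemma (using $\tau(x^\omega) =_\vAb (ab)^\omega =_\vAb 1$). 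To see it is not $\vGnil$_continuous, I would exhibit a $\vGnil$ language $L$ over $\{a,b\}$ whose preimage is not in $\vGnil(\{a,b\}^*)$; equivalently, use the Preservation Lemma negatively by finding a $\vGnil$_equation $u = v$ violated by $\ext\tau$. A clean choice: in any group variety $x^\omega = 1$, so $a^\omega = 1 =_\vGnil b^\omega$ is not the issue, but the defining equations of nilpotency distinguish longer commutator-type words; the point is that $\tau$ records in its output the alternation $(ab)$ vs $(ba)$, which a nilpotent group can detect through $b$-parity. Concretely I expect something like: $\ext\tau(a^\omega b) = (ab)^\omega (ab) \cdot (\text{tail from }b)$ while $\ext\tau(b a^\omega) = (ba)^\omega\cdots$, and these differ modulo a suitable nilpotency equation even though $a^\omega b =_\vGnil b a^\omega$.

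\textbf{Main obstacle.} The positive direction is essentially bookkeeping; the real work is pinning down the counterexample witness for $\vGnil$. The hard part will be choosing the right $\vGnil$ language $L$ (or equivalently the right defining equation of $\vGnil$) and verifying $\tau^{-1}(L) \notin \vGnil(A^*)$ rigorously — this requires knowing a usable equational or automata-theoretic description of $\vGnil$, since unlike $\vAb$ its equations are not as immediately manipulable. I would likely fall back on the Preservation Lemma: find profinite $u =_\vGnil v$ with $\ext\tau(u) \neq_\vGnil \ext\tau(v)$, computing $\ext\tau(u)$ and $\ext\tau(v)$ via Lemmas~\ref{lem:eval} and~\ref{lem:omega}, and check the resulting output profinite words are separated by a nilpotent-group language (e.g.\ counting some letter modulo a prime along a subword). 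Because $\vAb = \vG \cap \vCom \subsetneq \vGnil$, a witness for $\vGnil$ is a fortiori a witness for $\vGsol$ and $\vG$, so one example closes all three cases.
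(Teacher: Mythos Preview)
Your positive half is correct and matches the paper's intended one-line argument.

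The negative half has a genuine logical slip and an unnecessary complication. You write that it suffices to show non-$\vGnil$-continuity because ``a witness for $\vGnil$ is a fortiori a witness for $\vGsol$ and $\vG$.'' This is backwards. From $\vGnil \subseteq \vGsol$ one gets that $u =_{\vGsol} v$ implies $u =_{\vGnil} v$, not the converse; so a Preservation-Lemma witness $(u =_{\vGnil} v,\ \ext\tau(u) \neq_{\vGnil} \ext\tau(v))$ need not give $u =_{\vGsol} v$, and hence says nothing about $\vGsol$-continuity. (If anything, the contrapositive of the positive half you just proved lets you go the \emph{other} way: non-$\vG$-continuous implies non-$\vGsol$-continuous implies non-$\vGnil$-continuous.)

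The paper sidesteps all of this by choosing an equation valid in \emph{every} group variety simultaneously, namely $b^\omega \cdot a = a$ (since $x^\omega = 1$ in any group variety). With the same $\tau$ you picked, $\tau(a) = ab$, while $\ext\tau(b^\omega a) = (ba)^{\omega+1} =_\cW ba$ for each $\cW \in \{\vGnil, \vGsol, \vG\}$; and $ab \neq_\cW ba$ in each of these (they all contain non-abelian groups). One line, all three cases at once. Your own candidate equation $a^\omega b = b\, a^\omega$ is in fact of this same universal-group type and would yield exactly $ab$ versus $ba$ if you carried the computation through---so the ``hard part'' you anticipate (hunting for nilpotency-specific commutator equations) is not needed at all.
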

\begin{proof}
  It remains to show the case $\cV = \vAb$.  This is in fact the same example as
  in the proof of Proposition~\ref{prop:conttrans}, to wit, over $A = \{a, b\}$,
  the rational function $\tau$ that maps $w \in aA^*$ to ${(ab)}^{|w|}$, and words
  $w \in bA^*$ to ${(ba)}^{|w|}$.  Indeed, we saw that this function is continuous
  for $\vAb$, but we have that $\tau(a) = ab$ on the one hand, and
  $\tau(b^\omega a) = {(ba)}^\omega ba =_\cW ba$, but $ab \neq_\cW ba$.  The
  Preservation Lemma then shows that $\tau$ is not continuous for $\cW$.
\end{proof}

\begin{prop}\label{prop:abtocom}
  All $\vAb$_continuous rational functions are $\vCom$_continuous.
\end{prop}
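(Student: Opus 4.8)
The plan is to use the Preservation Lemma. Since $\vAb$ is defined by the equations $ab = ba$ and $a^\omega = 1$ (for $a, b$ ranging over the alphabet), while $\vCom$ is defined by just $ab = ba$, and since $\vAb \subsetneq \vCom$, a $\vCom$-language satisfies a \emph{subset} of the equations a $\vAb$-language must satisfy. So to show a $\vAb$-continuous rational function $\tau$ is $\vCom$-continuous, it suffices, by the Preservation Lemma, to check that for every commutativity equation $ab = ba$ and all words $s, t$, we have $\ext\tau(s\cdot ab\cdot t) =_\vCom \ext\tau(s\cdot ba\cdot t)$, given that $\tau$ already preserves \emph{all} the $\vAb$-equations (including the ones involving $\omega$). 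First I would invoke the Syncing Lemma, or more directly, observe that $\tau^{-1}(B^*) \in \vAb(A^*) \subseteq \vCom(A^*)$ since $\vAb$-languages are $\vCom$-languages, so point (1) of the Syncing Lemma for $\vCom$ is automatic.

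For the commutativity equation itself: since $ab =_\vAb ba$, the hypothesis that $\tau$ is $\vAb$-continuous gives $\ext\tau(s\cdot ab\cdot t) =_\vAb \ext\tau(s\cdot ba\cdot t)$ for all $s, t \in A^*$. The heart of the matter is then the following purely ``output-side'' claim: for profinite words $w_1, w_2 \in \pro{B^*}$, if $w_1 =_\vAb w_2$ then $w_1 =_\vCom w_2$. This would be false in general (e.g.\ $x^\omega =_\vAb 1$ but $x^\omega \neq_\vCom 1$), so the real work is to show that the \emph{particular} profinite words arising as $\ext\tau(s\cdot ab\cdot t)$ and $\ext\tau(s\cdot ba\cdot t)$ are of a restricted form on which $=_\vAb$ collapses to $=_\vCom$. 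By Lemma~\ref{lem:eval}, each such value factors as a product of a finite ``prefix word'' $\Tif{\tau,\nop,p}(s)$, a middle piece $\Tif{\tau,p,p'}(ab)$ which is an actual finite word (since $ab$ is a finite word of length two, the middle is a bounded-length output), and a finite ``suffix word'' $\Tif{\tau,p',\nop}(t)$ — hence both values are genuine \emph{finite} words in $B^*$, not truly profinite ones. And for finite words $w_1, w_2 \in B^*$, $w_1 =_\vCom w_2$ holds iff they are permutations of each other (have the same Parikh image), while $w_1 =_\vAb w_2$ is a strictly weaker condition (same Parikh image modulo the exponents, i.e.\ the free-abelian-group image). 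So I need to upgrade from ``equal in the free abelian group'' to ``equal as multisets of letters.''

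This upgrade is where I expect the main obstacle, and the way around it is the same computation already used in the proof of the $\vAb$ cases of Proposition~\ref{prop:conttrans}: I would also feed $\tau$ the $\omega$-equations that $\vAb$-continuity guarantees. Concretely, fix a letter $a \in A$ and consider $\ext\tau(s\cdot a^\omega\cdot t)$; by Lemma~\ref{lem:omega} this is $s' \cdot y^{\omega-1}\cdot t'$ for some finite words, and since $a^\omega =_\vAb 1$, $\vAb$-continuity forces $\ext\tau(s\cdot a^\omega\cdot t) =_\vAb \ext\tau(s\cdot t)$. Comparing Parikh images modulo torsion, the word $y$ (the ``pumped'' part of the output) must have Parikh image that is torsion in the free abelian group, i.e.\ zero, i.e.\ $y = 1$; so in fact $\tau$ \emph{cannot pump any output} on an input loop, which pins down the structure of $\tau$ tightly enough that, reading a finite input $s\cdot ab\cdot t$ versus $s\cdot ba\cdot t$ through the (now known to be essentially loop-free on outputs) transducer, the two outputs have literally the same multiset of letters, hence are $=_\vCom$. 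I would then conclude via the Preservation Lemma that $\tau$ is $\vCom$-continuous. The subtlety to be careful about is that $\tau$ may be genuinely nondeterministic/unambiguous rather than subsequential, so ``loop-free on outputs'' must be phrased at the level of the transition structure and the $\Tif{\tau,q,q}$ outputs on input cycles; but the Parikh-image argument via $a^\omega =_\vAb 1$ works verbatim at that level.
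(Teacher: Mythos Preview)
Your overall strategy via the Preservation Lemma is exactly the paper's, and your reduction to the observation that $\tau(s\cdot ab\cdot t)$ and $\tau(s\cdot ba\cdot t)$ are genuine finite words in $B^*$ is correct and is the whole point. But the sentence ``while $w_1 =_\vAb w_2$ is a strictly weaker condition (same Parikh image modulo the exponents)'' is false for finite words, and this error sends you on an unnecessary and incorrect detour. For finite words $w_1,w_2\in B^*$, the relations $=_\vAb$ and $=_\vCom$ \emph{coincide}: if $|w_1|_b\neq|w_2|_b$ for some letter $b$, pick $n>\max(|w_1|_b,|w_2|_b)$ and map $B^*$ to $\mathbb{Z}/n\mathbb{Z}$ by $b\mapsto 1$, other letters $\mapsto 0$; this is a morphism onto a finite abelian group separating $w_1$ from $w_2$, so $w_1\neq_\vAb w_2$. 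Thus $w_1=_\vAb w_2$ already forces equal Parikh images, i.e.\ $w_1=_\vCom w_2$. The paper phrases this as ``these varieties separate the same words,'' and the proof ends there.

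Your attempted upgrade via the $\omega$-equations is also wrong on its own terms. The claim that $\vAb$-continuity forces $y=1$ (no output pumped on input loops) fails already for the identity function on $\{a\}^*$: it is trivially $\vAb$-continuous, yet $\ext\tau(a^\omega)=a^\omega$ with $y=a\neq 1$. The equation $s'\cdot y^{\omega-1}\cdot t' =_\vAb \tau(s\cdot t)$ in $\vAb$ simply says $\pkh(s')-\pkh(y)+\pkh(t')=\pkh(\tau(s\cdot t))$ in $\mathbb{Z}^B$, which places no torsion constraint on $\pkh(y)$ whatsoever. So drop the entire second half of your plan: once you have that the two outputs are finite words and that they are $=_\vAb$, you are done.
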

\begin{proof}
  Indeed, if $u =_\vAb v$ with $u, v$ words, then $u =_\vCom v$, since these
  varieties separate the same words.  As $\vCom$ is defined using equations on
  words, this directly shows the claim by the Preservation Lemma.
\end{proof}

\subsubsection{The aperiodic cases}

We now turn to aperiodic varieties.  For less expressive varieties, the
property fails:
\begin{prop}\label{prop:apnon}
  For $\cV \in \{\vJ, \vL, \vR\}$ and $\cW \in \{\vL, \vR, \vDA, \vA\}$ with
  $\cV \subsetneq \cW$, there are $\cV$_continuous rational functions that
  are not $\cW$_continuous.
\end{prop}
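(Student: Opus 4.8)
The plan is to settle all pairs $(\cV,\cW)$ with $\cV\subsetneq\cW$, $\cV\in\{\vJ,\vL,\vR\}$ and $\cW\in\{\vL,\vR,\vDA,\vA\}$, using two rational functions on a unary input alphabet together with two elementary reductions. First, over the unary alphabet $\{a\}$ all of $\vJ(\{a\}^*), \vR(\{a\}^*), \vL(\{a\}^*), \vDA(\{a\}^*)$ and $\vA(\{a\}^*)$ coincide with the finite-or-cofinite languages (they are all contained in $\vA$, and all contain every finite unary language); so a rational function $h$ with unary domain is $\cV$_continuous, for any such $\cV$, iff $h^{-1}(L)$ is finite or cofinite for every $L\in\cV$. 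Second, if some $L_0\in\vDA(A^*)$ has $h^{-1}(L_0)\notin\vA$, then, since $\vDA\subseteq\vA$, $h$ is continuous neither for $\vDA$ nor for $\vA$: one witness in $\vDA$ disposes of two target varieties at once. Throughout I write $A=\{a,b\}$, and I use freely that $aA^*\in\vR(A^*)\setminus\vL(A^*)$ and $A^*a\in\vL(A^*)\setminus\vR(A^*)$, which follows directly from the equational definitions of $\vR$ and $\vL$ in the preliminaries (both languages are ``begin/end with $a$'' and are checked against $(xy)^\omega x=(xy)^\omega$ and $y(xy)^\omega=(xy)^\omega$).

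\proofstep{The workhorse $f$}
Let $f\colon\{a\}^*\to A^*$ send $a^n$ to the length-$n$ prefix of $(ab)^\omega$, i.e.\ to the alternating word of length $n$ starting with $a$; it is computed by a two-state deterministic transducer, hence rational. I claim $f$ is $\vJ$_continuous and $\vR$_continuous. Both $\vJ$ and $\vR$ are defined by identity sets that contain $(xy)^\omega x=(xy)^\omega$, so if $L\in\cV(A^*)$ with $\cV\in\{\vJ,\vR\}$ and syntactic morphism $\eta\colon A^*\to M$, then $M$ satisfies $(\eta(a)\eta(b))^\omega\eta(a)=(\eta(a)\eta(b))^\omega$; writing $e=\eta(ab)^\omega$ for the idempotent power, this reads $e\,\eta(a)=e$. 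Since $\cV\subseteq\vA$, for $n$ large we get $\eta(f(a^{2k}))=\eta(ab)^k=e$ and $\eta(f(a^{2k+1}))=\eta(ab)^k\eta(a)=e\,\eta(a)=e$; thus $\eta\circ f$ is eventually constant on $\{a\}^*$, so $f^{-1}(L)$ is finite or cofinite and lies in $\cV(\{a\}^*)$. Conversely, $A^*a\in\vL(A^*)\subseteq\vDA\subseteq\vA$, while $f^{-1}(A^*a)=\{a^n\mid n\text{ odd}\}$ is neither finite nor cofinite; hence $f$ is continuous for none of $\vL,\vDA,\vA$. This already yields $(\vJ,\vL)$, $(\vJ,\vDA)$, $(\vJ,\vA)$, $(\vR,\vDA)$ and $(\vR,\vA)$.

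\proofstep{The mirror function $g$}
For the remaining pairs $(\vJ,\vR)$, $(\vL,\vDA)$ and $(\vL,\vA)$, let $g\colon\{a\}^*\to A^*$ send $a^n$ to $f(a^n)^{R}$, the alternating word of length $n$ \emph{ending} with $a$. This $g$ is realized by an unambiguous (non-subsequential) transducer that guesses the parity of $n$, outputs the alternating word accordingly with delay $1$, and accepts only when the guess is confirmed; so $g$ is rational. For any $L$ one has $g^{-1}(L)=\{a^n\mid f(a^n)^{R}\in L\}=\{a^n\mid f(a^n)\in L^{R}\}=f^{-1}(L^{R})$. The defining identity set of $\vJ$ is closed under reversal, and reversing $(xy)^\omega x=(xy)^\omega$ produces $y(xy)^\omega=(xy)^\omega$, the defining identity of $\vL$, so $L\in\vL(A^*)$ implies $L^{R}\in\vR(A^*)$; with the previous step this gives that $g$ is $\vJ$_continuous and $\vL$_continuous. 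But $aA^*\in\vR(A^*)\subseteq\vDA\subseteq\vA$ and $g^{-1}(aA^*)=\{a^n\mid n\text{ odd}\}$ is again neither finite nor cofinite, so $g$ is continuous for none of $\vR,\vDA,\vA$. This settles $(\vJ,\vR)$, $(\vL,\vDA)$ and $(\vL,\vA)$, exhausting the list.

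\proofstep{Where the work is}
The only genuinely delicate point is the positive half of the claim about $f$: rather than argue directly with piecewise testable or $\mathcal{R}$-trivial languages, one passes to the syntactic monoid and notes that the single identity $(xy)^\omega x=(xy)^\omega$, specialized to $x=a$, $y=b$, is precisely what forces the odd-length alternating prefixes to carry the same syntactic value as the even-length ones, so that $\eta\circ f$ stabilizes; the $\vL$ case of $g$ is its reversal dual. Everything else---rationality of the two transducers, the membership facts $aA^*\in\vR\setminus\vL$ and $A^*a\in\vL\setminus\vR$ from the equational definitions, and the fact that over the unary alphabet ``not finite or cofinite'' already excludes membership in $\vL$, $\vDA$ and $\vA$---is routine.
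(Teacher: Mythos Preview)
Your proof is correct and uses essentially the same two witness functions as the paper: your $f$ is exactly the paper's $\tau$ (mapping $a^{2n}\mapsto(ab)^n$, $a^{2n+1}\mapsto(ab)^na$), and your $g$ is the reversal of $f$, where the paper instead takes the close variant $a^{2n}\mapsto(ab)^n$, $a^{2n+1}\mapsto b(ab)^n$.

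The genuine difference is in how continuity is established. The paper computes $\tau(a^\omega)=(ab)^\omega$ and $\tau(a^{\omega+1})=(ab)^\omega a$ and invokes the Preservation Lemma: these two profinite words are equal in $\vJ$ and $\vR$ but not in $\vL,\vDA,\vA$, which simultaneously yields continuity and non-continuity. You instead argue directly at the level of finite syntactic monoids, showing that the identity $(xy)^\omega x=(xy)^\omega$ specialized to $x=a,y=b$ forces $\eta\circ f$ to be eventually constant, hence $f^{-1}(L)$ finite or cofinite; non-continuity is then witnessed by the explicit language $A^*a$ (resp.\ $aA^*$) rather than a profinite inequality. Your route is more elementary and entirely self-contained---it needs neither profinite words nor the Preservation Lemma---while the paper's route serves its expository goal of exhibiting the Preservation Lemma in action. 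The reversal trick $g^{-1}(L)=f^{-1}(L^{R})$ combined with the $\vR/\vL$ duality is a clean way to avoid repeating the monoid computation for the second function.
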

\begin{proof}
  Define $\tau\colon {\{a\}}^* \to {\{a, b\}}^*$ to be the rational function that
  changes every other $a$ to $b$; that is, $\tau(a^{2n}) = {(ab)}^n$, and
  $\tau(a^{2n+1})={(ab)}^n\cdot a$.  Note that naturally, over a single letter,
  $a\cdot {(aa)}^\omega = {(aa)}^\omega \cdot a = a^{\omega+1}$.  Now
  $\tau(a^\omega) = {(ab)}^\omega$ and $\tau(a^{\omega+1})= {(ab)}^\omega\cdot a$,
  and since these two profinite words are equal in $\vJ$ and $\vR$, the
  Preservation Lemma shows that $\tau$ is continuous for both $\vJ$ and $\vR$.
  However, these two profinite words are not equal in $\vL$, $\vDA$, and $\vA$,
  showing that $\tau$ is continuous for none of those varieties.

  The remaining case, that is, showing the existence of a $\vJ$_continuous
  rational function that is not $\vR$_continuous is done symmetrically, with the
  function mapping $a^{2n}$ to ${(ab)}^n$ and $a^{2n+1}$ to $b\cdot {(ab)}^n$.
\end{proof}

\begin{prop}
  Any $\vDA$_continuous rational function is $\vA$_continuous.
\end{prop}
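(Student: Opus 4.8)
The plan is to invoke the Preservation Lemma with the defining set $\{x^{\omega+1}=x^\omega \mid x\in A^*\}$ of $\vA(A^*)$. Let $\tau$ be an unambiguous transducer realizing a $\vDA$-continuous rational function; it then suffices to show $\ext{\tau}(s\,x^{\omega+1}\,t)=_\vA\ext{\tau}(s\,x^\omega\,t)$ for all words $s,x,t$. The definedness half is immediate: since $\tau$ is $\vDA$-continuous, $\tau^{-1}(B^*)\in\vDA(A^*)\subseteq\vA(A^*)$, so this language satisfies $x^{\omega+1}=x^\omega$ and hence $\ext{\tau}$ is defined at $s\,x^{\omega+1}\,t$ exactly when it is at $s\,x^\omega\,t$; if neither is defined there is nothing to prove, so assume both are.

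The first move is the profinite identity $x^\omega\,x\,x^\omega=x^{\omega+1}$, which follows from $x^\omega x^\omega=x^\omega$ together with $x\,x^\omega=x^\omega x=x^{\omega+1}$. Thus $\ext{\tau}(s\,x^{\omega+1}\,t)=\ext{\tau}(s\,x^\omega\,x\,x^\omega\,t)$. Since $x\in\cts(x)^*$, the equation $x^\omega\,x\,x^\omega=x^\omega$ is an instance of the defining equation $x^\omega z x^\omega=x^\omega$ of $\vDA(A^*)$, so applying the Preservation Lemma to the $\vDA$-continuous $\tau$ yields $\ext{\tau}(s\,x^\omega\,x\,x^\omega\,t)=_\vDA\ext{\tau}(s\,x^\omega\,t)$. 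Writing $U:=\ext{\tau}(s\,x^{\omega+1}\,t)$ and $V:=\ext{\tau}(s\,x^\omega\,t)$, we have obtained $U=_\vDA V$, and the whole problem is reduced to upgrading this to $U=_\vA V$.

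For the upgrade I would exploit that $U$ and $V$ have a very restricted shape. Peeling off $s$ and $t$ with Lemma~\ref{lem:eval} writes $V=\Tif{\tau, \nop, p}(s)\cdot\Tif{\tau, p, p''}(x^\omega)\cdot\Tif{\tau, p'', \nop}(t)$, with $p$ the state reached on $s$ and $p''=p.x^\omega$, and similarly for $U$, which reads one additional block $x$ afterwards; since $\Tif{\tau, p, p''}$ is again an unambiguous transducer, Lemma~\ref{lem:omega} shows that the factor emitted on the $x^\omega$-block has the shape $\sigma\,\gamma^{\omega-1}\,\theta$ for words $\sigma,\gamma,\theta$. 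As $\gamma^{\omega-1}=_\vDA\gamma^\omega$ (by aperiodicity and supercancellativity of $\vDA$, and equally for $\vA$), absorbing all remaining finite factors yields words with $V=_\vDA a_V\,\eta_V^\omega\,b_V$ and $U=_\vDA a_U\,\eta_U^\omega\,b_U$, and the same two identities with $=_\vA$ in place of $=_\vDA$; here $\eta_U$ and $\eta_V$ are the two loop words, possibly distinct. Since the equalizer set of two profinite words depends only on their $=_\vDA$-classes, $U=_\vDA V$ says precisely that $(a_U,a_V,b_U,b_V)\in\equ_\vDA(\eta_U^\omega,\eta_V^\omega)$; Lemma~\ref{lem:apeqomega} — applicable because both $\vDA$ and $\vA$ are aperiodic supercancellative — gives $\equ_\vDA(\eta_U^\omega,\eta_V^\omega)=\equ_\vA(\eta_U^\omega,\eta_V^\omega)$, so $(a_U,a_V,b_U,b_V)\in\equ_\vA(\eta_U^\omega,\eta_V^\omega)$, that is $a_U\,\eta_U^\omega\,b_U=_\vA a_V\,\eta_V^\omega\,b_V$, whence $U=_\vA V$.

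The one genuinely delicate step is this upgrade, and within it the appeal to Lemma~\ref{lem:omega}: what makes it go through is that a transducer reading a profinite power $x^\omega$ can emit only a \emph{single} $\omega$-factor, which places $U$ and $V$ within reach of Lemma~\ref{lem:apeqomega}. For arbitrary profinite words the upgrade would be false — the remark following Lemma~\ref{lem:apeqomega} exhibits $(ab)^\omega$ and $(ab)^\omega\cdot a\cdot(ab)^\omega$ for which $\equ_\vDA$ is nonempty but $\equ_\vA$ is empty — so the write-up has to check with care that, once the finite prefixes and suffixes and the lone extra $x$-block of $x^{\omega+1}$ are stripped off, exactly one $\omega$-factor remains on each side.
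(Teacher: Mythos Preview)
Your proof is correct and follows essentially the same route as the paper's: both arguments reduce the question to showing that the relevant equalizer sets coincide for $\vDA$ and $\vA$, using Lemma~\ref{lem:omega} to obtain the single-$\omega$ form and then Lemma~\ref{lem:apeqomega} for the transfer. The only cosmetic difference is that the paper packages the argument through the Syncing Lemma (and notes directly that $x^\omega=x^{\omega+1}$ is already a $\vDA$ equation), whereas you unfold it via the Preservation Lemma and the identity $x^{\omega+1}=x^\omega x\,x^\omega$; the substance is the same.
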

\begin{proof}
  First note that both $\vDA$ and $\vA$ satisfy the hypotheses of
  Lemma~\ref{lem:apeq}.  Consider a $\vDA$_continuous rational function
  $\tau\colon A^* \to B^*$.  By the Syncing Lemma, to show that it is
  $\vA$_continuous, it is enough to show that 1.~$\tau^{-1}(B^*) \in \vA(A^*)$,
  and 2.~That some input synchronizations of $\tau$, based on equations of the
  form $x^\omega =_\vA x^{\omega + 1}$, belong to an equalizer set of the form
  (by Lemma~\ref{lem:omega}):
  \[
    \equ_\vA(\alpha \cdot y^\omega \cdot \beta,\; \alpha' \cdot z^\omega \cdot
    \beta')  = \{(s, s', t, t') \mid (s\cdot \alpha,\; s' \cdot \alpha', \;\beta\cdot
    t,\; \beta'\cdot t') \in \equ_\vA(y^\omega,\; z^\omega)\}\enspace.
  \]

  Applying the Syncing Lemma on $\tau$ for the variety $\vDA$, we get that
  point~1 is true, since $\tau^{-1}(B^*) \in \vDA(A^*)$.  Similarly, point~2 is
  true since $x^\omega = x^{\omega+1}$ is an equation of $\vDA$, and
  Lemma~\ref{lem:apeqomega} implies that the equalizer set of
  the equation above is the same in $\vDA$ and $\vA$.
\end{proof}

However, this property does not hold beyond \emph{rational} functions:
\begin{prop}
  There are nonrational functions that are continuous for both $\vDA$ and Reg
  but are not $\vA$_continuous.
\end{prop}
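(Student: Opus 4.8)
The plan is to build a function that is Reg-continuous and $\vDA$-continuous by inspecting inverse images of regular languages, yet fails $\vA$-continuity, and which is provably not rational because its output grows too fast (or behaves non-linearly) to be realized by a transducer. A natural template is a unary-input function in the spirit of the Reg-continuity example used earlier: take $A = \{a\}$ and $B = \{a, b\}$, and define $f(a^n)$ so that the ``phase'' of the output tracks some fast-growing function of $n$. Concretely, I would set $f(a^n) = (ab)^{g(n)} \cdot p(n)$ where $g$ is strictly increasing and $p(n) \in \{1, a\}$ encodes the parity of something, chosen so that the $\vA$-equation $a^\omega =_\vA a^{\omega+1}$ is \emph{not} preserved: e.g.\ arrange $f(a^\omega) = (ab)^\omega$ but $f(a^{\omega+1}) = (ab)^\omega \cdot a$, which are distinguishable in $\vA$ but not in $\vDA$ (as already exploited in the proof that $\vDA$-continuity implies $\vA$-continuity for \emph{rational} functions, since $\equ_\vA((ab)^\omega, (ab)^\omega \cdot a)$ is empty while the $\vDA$ version is not). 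Non-rationality is then forced by making $g$ grow faster than any linear function — e.g.\ $g(n) = 2^n$ or $g(n) = n!$ — since a transducer computes a function with output length linearly bounded in the input length.

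The key steps, in order, would be: (1) Fix the construction, say $f(a^{2n}) = (ab)^{g(n)}$ and $f(a^{2n+1}) = (ab)^{g(n)} a$ with $g$ strictly increasing and superlinear; compute $\ext{f}(a^\omega)$ and $\ext{f}(a^{\omega+1})$ explicitly using the fact that limits of $a^{n!}$ restricted to even/odd indices both converge, obtaining $(ab)^\omega$ and $(ab)^\omega \cdot a$ respectively. (2) Verify $\vDA$-continuity via the Preservation Lemma: the only equations of $\vDA(\{a\}^*)$ up to the relevant ones are generated by $a^\omega z a^\omega =_\vDA a^\omega$ for $z \in \{a\}^*$, i.e.\ $a^{\omega+k} =_\vDA a^\omega$; one checks $\ext{f}(s \cdot a^{\omega+k} \cdot t) =_\vDA \ext{f}(s \cdot a^\omega \cdot t)$ for all words $s,t \in \{a\}^*$ by noting that both sides are of the form (finite prefix)$\cdot (ab)^\omega \cdot$(finite suffix) and the tail $(ab)^\omega$ absorbs the bounded discrepancy modulo the $\vDA$-equation on $B^*$. (3) Verify Reg-continuity directly: for a regular $L \subseteq B^*$, $f^{-1}(L)$ is again a finite-or-eventually-periodic-pattern subset of $\{a\}^*$ — more carefully, since every regular language over $\{a,b\}$ only ``sees'' its input through a DFA, and $(ab)^{g(n)}a^\epsilon$ eventually settles into one of finitely many DFA-states as $g(n)$ grows, $f^{-1}(L)$ differs from a union of residue classes of $n$ only on a finite set, hence is regular over $\{a\}^*$. (Alternatively, invoke that $f$ is uniformly continuous for $d_{\mathrm{Reg}}$ because $a^\omega$ and $a^{\omega+1}$-type closeness maps to closeness of the outputs.) (4) Verify non-$\vA$-continuity: $a^\omega =_\vA a^{\omega+1}$ but $\ext{f}(a^\omega) = (ab)^\omega \neq_\vA (ab)^\omega a = \ext{f}(a^{\omega+1})$, since $(ab)^\omega \neq_\vA (ab)^\omega a$ (these are distinguished e.g.\ by the aperiodic language $((ab)^*a) \cdot$ suitable context, or just by $\equ_\vA$ being empty as noted), so the Preservation Lemma fails. (5) Verify non-rationality: $|f(a^n)|$ is not $O(n)$, but every rational (indeed every transducer-computed) function has output length linearly bounded by input length; hence $f$ is not rational.

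I expect step (3), establishing Reg-continuity cleanly, to be the main obstacle, since one must argue about \emph{all} regular languages of $B^*$ rather than a fixed one, and the output alphabet is binary (so the earlier unary trick with $a^i(a^j)^*$ does not directly apply). The cleanest route is probably to show $\ext{f}$ is well-defined and continuous by exhibiting it explicitly on all profinite words of the forms $s\cdot u\cdot t$ and $x^\omega$ (as permitted by Theorem~\ref{thm:ucont} and the surrounding toolbox): every profinite word over $\{a\}$ is either a finite word $a^k$ or of the form $a^k \cdot a^{c\omega - d}$, and $f$ sends these to concrete profinite words over $B$, from which uniform continuity for $d_{\mathrm{Reg}}$ follows by a short estimate. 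A secondary subtlety in step (2) is making sure the $\vDA$-equations on the \emph{output} side $B^*$ — namely $w^\omega z w^\omega =_\vDA w^\omega$ for $z \in \cts(w)^*$ — are strong enough to collapse the bounded-length discrepancies; here the relevant instance is $(ab)^\omega (ab)^k =_\vDA (ab)^\omega$ and $(ab)^\omega a =_\vDA (ab)^\omega$ when working in $\vDA$ (the latter because $a \in \cts(ab)^*$), which is exactly why the construction survives $\vDA$ but not $\vA$.
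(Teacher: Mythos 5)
Your construction has a fatal error at the $\vDA$-continuity step, which is the crux of the whole statement. You arrange $\ext{f}(a^\omega) = (ab)^\omega$ and $\ext{f}(a^{\omega+1}) = (ab)^\omega \cdot a$, and then assert $(ab)^\omega \cdot a =_\vDA (ab)^\omega$ ``because $a \in \cts(ab)^*$.''  This misreads the defining equation of $\vDA$, which is $x^\omega \cdot z \cdot x^\omega = x^\omega$ for $z \in \cts(x)^*$: the idempotent $x^\omega$ must flank $z$ on \emph{both} sides.  In fact $(ab)^\omega \cdot a \neq_\vDA (ab)^\omega$.  The language $\{a,b\}^*a$ lies in $\vL \subseteq \vDA$, and $(ab)^\omega \in \clos{\{a,b\}^*b}$ while $(ab)^\omega \cdot a \in \clos{\{a,b\}^*a}$, so this $\vDA$-language separates them.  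Your candidate $f$ is therefore \emph{not} $\vDA$-continuous; up to replacing $n$ by your superlinear $g(n)$ it is precisely the function used in Proposition~\ref{prop:apnon} to show that a $\vJ$- or $\vR$-continuous rational function need \emph{not} be $\vDA$-continuous.  (You also misquote the paper's remark on equalizer sets: the pair whose $\vDA$-equalizer is nonempty while the $\vA$-one is empty is $\big((ab)^\omega,\ (ab)^\omega \cdot a \cdot (ab)^\omega\big)$, not $\big((ab)^\omega,\ (ab)^\omega \cdot a\big)$.)

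The paper avoids this by making the profinite image of $a^{\omega+1}$ be $(ab)^\omega \cdot a \cdot (ab)^\omega$ rather than $(ab)^\omega \cdot a$: it takes $f(a^{2n}) = (ab)^n$ and $f(a^{2n+1}) = (ab)^n \cdot a \cdot (ab)^n$, so $\ext{f}(a^{\omega+1}) = (ab)^\omega \cdot a \cdot (ab)^\omega$, which equals $(ab)^\omega$ in $\vDA$ by the actual defining equation yet differs from it in $\vA$.  Your approach could be repaired in the same spirit by taking $f(a^{2n+1}) = (ab)^{g(n)} \cdot a \cdot (ab)^{g(n)}$.  Two further observations.  First, the superlinear $g$ is unnecessary: once $\vDA$-continuity and non-$\vA$-continuity hold, non-rationality is immediate from the preceding proposition (a rational $\vDA$-continuous function would be $\vA$-continuous), so the paper never argues via output-length growth.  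Second, your Reg-continuity argument is vague where the paper is crisp: for a regular $L \subseteq B^*$, the paper exhibits a pushdown automaton over $\{a\}$ recognizing $f^{-1}(L)$ (guessing the midpoint nondeterministically on odd-length inputs) and invokes the fact that unary context-free languages are regular.
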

\begin{proof}
  Define $f\colon {\{a\}}^* \to {\{a, b\}}^*$ by $f(a^{2n}) = {(ab)}^n$ and
  $f(a^{2n+1}) = {(ab)}^n\cdot a \cdot {(ab)}^n$.  We first have to check that $f$
  is indeed Reg_continuous.  Given a regular language $L$, we define a
  pushdown automaton over ${\{a\}}^*$ that recognizes $f^{-1}(L)$; since all
  unary context-free languages are regular, by Parikh's theorem, this shows the
  claim.  If the input is of the form $a^{2n}$, then the pushdown automaton may
  check that ${(ab)}^n \in L$, by simulating the automaton for $L$.  If the input
  is of the form $a^{2n+1}$, then the pushdown automaton can guess the middle
  position of the input, and accordingly check that ${(ab)}^n \cdot a \cdot {(ab)}^n
  \in L$, again using the automaton for $L$.  This concludes the construction.

  The function $f$ being Reg_continuous, consider its extension $\ext{f}$.  As
  in Proposition~\ref{prop:apnon}, checking that $f$ is $\vDA$_continuous
  amounts to checking that $\ext{f}(a^{\omega}) =_\vDA \ext{f}(a^{\omega+1})$.  The
  left-hand side being ${(ab)}^\omega$ while the right-hand side is
  ${(ab)}^\omega \cdot a \cdot {(ab)}^\omega$, this holds.  However, these two
  profinite words are equal in $\vDA$ but not in $\vA$, hence this function is
  not $\vA$_continuous, again appealing to the Preservation Lemma.
\end{proof}

\section{Deciding continuity for transducers}\label{sec:dec}
\subsection{Deciding continuity for group varieties}\label{sec:group}

\reutschu showed in~\cite{reutenaeur-schutzenberger95} that a rational function
is $\vG$_continuous iff it is $\vG$_realizable.  Since this is proven
effectively, it leads to the decidability of $\vG$_continuity.  In
Proposition~\ref{prop:transtocont}, we saw that the right-to-left statement also
holds for $\vGsol$; we now show that the left-to-right statement holds for all
group varieties $\cV$ that contain $\vGnil$---this odd-looking condition on
\(\cV\) is here to ensure that the \emph{free group} is embedded, in a precise
sense, in~$\cV$~\cite[\S~6.1.9]{robinson95}.  As
in~\cite{reutenaeur-schutzenberger95}, but with sensibly different techniques,
we show that $\cV$_continuous transducers are plurisubsequential.  The Syncing
Lemma will then imply that such transducers are $\cV$_transducers.  Both
properties rely on an omnibus normal form which uses the three following
notions:
\begin{defi}
  \begin{itemize}
  \item A transducer \(\tau\) is \emph{utilitarian} if for any two states \(p, q\):
  \[\Big[(\exists x, y) \big[\emptyset \;\neq\; (\Tif{\tau, p, \nop} \sync
  \Tif{\tau, q, \nop}) \;\subseteq\; (x, y) \cdot \Id\big]\Big] \Rightarrow p =
  q\enspace.\]%
  \item Within a transducer, a \emph{pan} is a triplet \((q, q', p)\) of the form:
  \begin{center}
    \scalebox{0.7}{\begin{tikzpicture}[>=stealth,bend angle=40]
  \useasboundingbox (-2, 1.4) rectangle (2, -0.5);
  \node (p) [state] {$p$};
  \node (qp) at ($(p)-(2,0)$) [state] {$q'$};
  \node (q) at ($(p)+(1,1)$) [state] {$q$};
  \draw[->,bend right] (q) edge node[pos=0.5,above] {$a$} (p);
  \draw[->] (qp) edge node[pos=0.5,above] {$a$} (p);
  \draw[->,dashed] (p) .. controls +(1,-1) and +(1,-1) .. (q)
  node[pos=0.5,yshift=0em,right] {$a^*$};
\end{tikzpicture}

  \end{center}
  This is regardless of the outputs.  The pan is \emph{proper} if \(q \neq q'\).
\item Within a transducer, and for a letter \(a\), a state \(q\) is
  \emph{\(a\)_recurrent} if \(p.a^\omega = \{p\}\).  The transducer itself is said to be
  \(a\)_recurrent if all of its states are, and recurrent if it is
  \(a\)_recurrent for all letters.
\end{itemize}
\end{defi}

\begin{lem}\label{lem:nf}
  Let $\tau$ be a transducer.  There is a utilitarian transducer
  \(\tau'\) computing the same function.  Additionally:
  \begin{itemize}
  \item If \(\tau\) is \(\cV\)_continuous for some group variety that contains
    \(\vGnil\), then \(\tau'\) has no proper pan;
  \item For any letter \(a\), if \(\tau\) is \(a\)_recurrent, so is \(\tau'\).
  \end{itemize}
\end{lem}
\begin{proofof}{Lemma~\ref{lem:nf}}
  We start with the utilitarianism, then focus on the first of the
  ``additionally'' parts.  The second point therein is easily satisfied by
  construction.  Utilitarianism is shown in three steps.

  \proofstep{Step 1}%
  Define $\tau'$ to be the Cartesian product of $\tau$ with the powerset automaton
  of its reversal.  By construction, we get the following fact, writing \(R_p\) for
  the language of words accepted from \(p\), that is \(\{w \mid p.w \cap F \neq \emptyset\}\):
  \begin{fact}\label{fact:rp}
    For any states $p, p'$ of $\tau'$, $R_p \cap R_{p'} \neq \emptyset \Rightarrow
    R_p = R_{p'}$.
  \end{fact}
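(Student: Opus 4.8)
The plan is to unwind the definition of $\tau'$ until $R_p$ becomes completely explicit, at which point the Fact is a short computation about runs of $M'$. A state of $\tau'$ is a pair $p = (q, m)$ with $q$ a state of $\tau$ and $m$ a state of $M$; and a state of $M$ is the same thing as a state of $M'$, namely a tuple $(m_{r,r'})_{r,r'}$ assigning one state $m_{r,r'}$ of each component automaton $M_{r,r'}$. First I would record that, since $\tau' = \tau \times M$ and the final states of $\tau'$ are the products of those of $\tau$ and of $M$, we have $R_p = R_q \cap R_m$, where $R_q$ is read in the input automaton of $\tau$ and $R_m$ in $M$. Then I would identify $R_m$: as $M$ is the reverse of the deterministic automaton $M'$ and the final states of $M$ are exactly the initial states of $M'$, namely the single tuple $i' = (i_{r,r'})_{r,r'}$ of initial states of the $M_{r,r'}$, a word $w$ belongs to $R_m$ exactly when the run of $M'$ on the reverse of $w$ from $i'$ ends in $m$; and since $M'$ is a product, this happens exactly when, for every pair $(r,r')$, the run of $M_{r,r'}$ on the reverse of $w$ from $i_{r,r'}$ ends in $m_{r,r'}$. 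I expect the only place that needs care to be this bookkeeping — keeping straight whether a run lives in $\tau$, in $M$, in $M'$, or in a single component $M_{r,r'}$, and where each reversal falls; everything after that is forced.

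With $R_p = R_q \cap R_m$ in hand, here is the argument. Suppose $R_p \cap R_{p'} \neq \emptyset$ with $p = (q,m)$ and $p' = (q',m')$, and pick a word $w$ in the intersection. Running $M'$ on the reverse of $w$ from $i'$ ends both in $m$ and in $m'$, so determinism of $M'$ gives $m = m'$ at once. Moreover $w \in R_q \cap R_{q'}$ in $\tau$, so the reverse of $w$ is accepted by the component $M_{q,q'}$; since $M_{q,q'}$ is deterministic and its run on the reverse of $w$ from $i_{q,q'}$ ends in $m_{q,q'}$, the state $m_{q,q'}$ is final in $M_{q,q'}$.

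Finally I would bootstrap this to $R_p = R_{p'}$. Let $w' \in R_p$ be arbitrary. Then $w' \in R_q$ in $\tau$, and the run of $M_{q,q'}$ on the reverse of $w'$ from $i_{q,q'}$ ends in $m_{q,q'}$, which we just saw is final; hence the reverse of $w'$ is accepted by $M_{q,q'}$, i.e.\ $w' \in R_q \cap R_{q'}$, so in particular $w' \in R_{q'}$. Together with $w' \in R_m = R_{m'}$ this yields $w' \in R_{q'} \cap R_{m'} = R_{p'}$. So $R_p \subseteq R_{p'}$, and the symmetric argument gives $R_p = R_{p'}$. The conceptual crux I would highlight is that in a deterministic automaton the set of words routed to a fixed state depends on that state alone: this is what forces $m = m'$, and — acceptance being witnessed by the current state only — what makes the entire class of words routed to $m$ fall inside $R_q \cap R_{q'}$, collapsing both $R_p$ and $R_{p'}$ onto it.
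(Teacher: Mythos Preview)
Your proof is correct and follows essentially the same route as the paper's: decompose $R_p = R_q \cap R_m$, use determinism of $M'$ (equivalently, codeterminism of $M$) to force $m = m'$, then exploit that the $M_{q,q'}$-component of $m$ is final to get $R_m \subseteq R_q \cap R_{q'}$. The only cosmetic difference is that the paper concludes directly via $R_p = R_q \cap R_m = R_m = R_{q'} \cap R_{m'} = R_{p'}$, whereas you show one inclusion and invoke symmetry; both are fine.
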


  \proofstep{Step 2}%
  Write simply $\tau$ for the result of Step~1.  In this step, we make sure that
  outputs are produced ``as soon as possible'', a process known as
  \emph{normalization} (e.g.,~\cite[Section~1.5.2]{lothaire02}) that we sketch
  for completeness.  For every state $q$, write $\pi_q$ for the longest string
  such that $\Tif{\tau, q, \nop}(A^*) \subseteq \pi_q\cdot B^*$.  Now define the
  new output function $(\lambda', \mu', \rho')$ by letting:
  \[\mu'(1, q) = \mu(1, q) \cdot \pi_q,\quad
  \mu'(q, a, q') = \pi_q^{-1}\mu(q, a, q')\cdot \pi_{q'},\quad \mu'(q, 1) =
  \pi_q^{-1}\mu(q, 1)\enspace.\]%
  Write $\tau'$ for $\tau$ equipped with the output function $\mu'$.  For no
  state $q$ there is a letter $b \in B^*$ such that
  $\Tif{\tau, q, \nop}(A^*) \subseteq b\cdot B^*$.

  \proofstep{Step 3}%
  Write again $\tau$ for the result of the previous step.  Naturally, $\tau$ still
  satisfies Fact~\ref{fact:rp}.  Consider two states $p, q$ such that there are
  $x, y \in B^*$ satisfying
  $\emptyset \neq (\Tif{\tau, p, \nop} \sync \Tif{\tau, q, \nop}) \subseteq (x, y)\cdot \Id$.  The first part
  of this assumption implies that $R_p \cap R_q \neq \emptyset$, and thus, by
  Fact~\ref{fact:rp}, $R_p = R_q$.  In other words, $\Tif{\tau, p, \nop}$ and
  $\Tif{\tau, q, \nop}$ have the same domain.  The second part of the assumption
  thus indicates that every production of $p$ (\resp of $q$) starts with $x$
  (\resp with $y$), and Step~2 asserts that $x = y = 1$.  Hence
  $\Tif{\tau, p, \nop}$ and $\Tif{\tau, q, \nop}$ actually compute the same function.
  We can thus merge them into a single state without changing the function
  realized.  Repeating this operation results in a transducer \(\tau'\) that is
  utilitarian.

  \proofstep{No proper pans}%
  Consider a pan $(q, q', p)$ on $a$ in $\tau'$.  As $p$ can be reached from both
  $q$ and $q'$ reading $a$, the product
  $P = \Tif{\tau', q, \nop} \sync \Tif{\tau', q', \nop}$ is nonempty.  Write
  $x = \Tif{\tau', q', p}(a), y = \Tif{\tau', p, q}(a^n), z = \Tif{\tau', q, p}(a)$, for
  some $n$ such that $y$ is defined.  We let $h$ be the longest common suffix of
  $x$ and $z$, and $x = x'\cdot h$ and $z = z' \cdot h$.

  As $\tau$ is $\vG$_continuous, let us apply point 2 of the Syncing Lemma on
  $\tau'$, the equation $(a^\omega = 1)$, from the pair of states $(q', q')$ to
  $(q, q')$.  With $Z = \Tif{\tau, \nop, q'} \sync \Tif{\tau, \nop, q'}$, a nonempty
  subset of the identity, we have that
  $Z \times P \subseteq \equ_\cV(x{(yz)}^{\omega - 1}y, 1)$.  We write, in the following,
  $\nu^{-1}$ for $\nu^{\omega-1}$, to convey the fact that $\nu^{-1}$ is the inverse of
  $\nu$ in $\cV$; that is: $\nu\cdot \nu^{-1} =_\cV 1$ (this analogy naturally carries
  further, since for instance, ${(\nu\eta)}^{-1} =_\cV \eta^{-1}\cdot \nu^{-1}$).  Let
  $(s, s, u, u') \in Z \times P$, then:
  \begin{align*}
    s \cdot u' & =_\cV s \cdot x\cdot {(yz)}^{\omega - 1} \cdot y \cdot u \tag{By the
             Syncing Lemma}\\
           & =_\cV s \cdot x \cdot z^{-1}\cdot y^{-1} \cdot y \cdot u\\
           & =_\cV s \cdot x \cdot z^{-1} \cdot u\\
           & =_\cV s \cdot (x'h) \cdot {(z'h)}^{-1} \cdot u =_\cV s \cdot x'
             \cdot z'^{-1} \cdot u\enspace.
  \end{align*}
  By cancellation, this shows that $u' =_\cV x' \cdot z'^{-1} \cdot u$.  Since
  $u'$ is a word and $x'$ and $z'$ do not share a common suffix, there is a word
  $w$ such that $u = z'\cdot w$ (this is true in the \emph{free group}, which is
  embedded, in a precise sense, in~$\cV$~\cite[\S~6.1.9]{robinson95}).  This
  implies that $u' =_\cV x' \cdot w$ and shows that
  $P \subseteq (z', x')\cdot \Id$, hence that $q = q'$ by the main property of this lemma,
  and the pan is not proper.
\end{proofof}

\begin{lem}\label{lem:nftocV}
  Let $\cV$ be a variety of group languages that contains $\vGnil$.  For any
  $\cV$_continuous unambiguous transducer, there is an equivalent
  plurisubsequential $\cV$_transducer.
\end{lem}
\begin{proofnoqed}
  Let \(\tau\) be a \(\cV\)_continuous unambiguous transducer.  The proof is split in
  three facts:
  \begin{fact}\label{fact:rec}
    There is a utilitarian recurrent transducer \(\tau'\) that defines the same
    function as \(\tau\).
  \end{fact}

  \begin{fact}\label{fact:rectosub}
    Any recurrent transducer is plurisubsequential.
  \end{fact}

  \begin{fact}\label{fact:subtov}
    Any \(\cV\)_continuous utilitarian plurisubsequential transducer is a
    \(\cV\)_transducer.
  \end{fact}

  These facts together naturally imply the lemma.

  \begin{proofof}[Proof of Fact~\ref{fact:rec}]{Fact~\ref{fact:rec}}
    We first apply Lemma~\ref{lem:nf} on \(\tau\).  We turn the resulting
    transducer---which we call \(\tau\) again---into a recurrent transducer one letter at
    a time.  After each letter, we apply again Lemma~\ref{lem:nf}, thus
    obtaining at the end of the process a utilitarian and recurrent
    transducer equivalent to \(\tau\).

    In the following, for any state \(q\), we write $L_q$ for
    $\{w \mid q \in I.w\}$ and $R_q$ for $\{w\mid q.w \cap F \neq \emptyset\}$.  The preimage of
    $\tau$ is denoted by $L=\tau^{-1}(B^*)$. Since $\tau$ is $\cV$_continuous, then
    $L$ is in $\cV$.  Finally, we say that a state $p$ is weakly $a$_recurrent
    if $p\in p.a^w$.

    Let \(a\) be a letter.  We first perform the direct product of the automaton
    with an automaton remembering the last weakly $a$_recurrent state seen.
    Let $p$ be a non weakly $a$_recurrent state.  Now $\tau$ is $\cV$-continuous and has
    no proper pan, hence for any $u,v$ such that \(u\in L_p\) and
    \(v\in R_p\), there exist $q, q'$ such that:
    \def\tto#1{\stackrel{#1}{\longrightarrow}}%
    \[I \tto{u} q \tto{a^\omega} q \tto{a^n} q' \tto{v} F\enspace.\]%
    By the same argument used in showing the absence of proper pan in
    Lemma~\ref{lem:nf}, it holds that $p=q'$.  Since $p$ is connected to some
    weakly $a$_recurrent state, there is some $k > 0$ such that the above path
    can be decomposed as follows:
    \[I \tto{u} q \tto{a^{\omega}} q \tto{a^{\omega - k}} q'' \tto{a^k} p \tto{v}
    F\enspace.\]%
    Since $p$ contains the information of the last weakly $a$_recurrent state
    seen, the choice of $q''$ is independent from $u$ and $v$.  Hence the choice
    of $q$ too is independent from both $u$ and $v$.  In particular, this shows
    that $L_p\subseteq L_q$.  Furthermore, for any word in $u\in L_q$ and any word
    $v\in R_q$, we have $ua^\omega v \in \clos{L}$, and thus $uv \in L$ as well.
    Hence there exists a state $r$ such that:
    \[I \tto{u} r \tto{v} F\enspace.\]
    By the same argument as above, we necessary have $r=p$, proving that
    $L_p=L_q$.

    We now merge together all states $p$, $q$ satisfying the property that
    $L_p=L_q$.  Since, by Lemma~\ref{lem:nf}, Fact~\ref{fact:rp} we have either
    $R_p\cap R_q = \emptyset$ or $R_p = R_q$, merging $p$ and $q$ will not change the
    function computed.  After these merges, all states are $a$-recurrent.

    We now ensure that merging these states preserves that states were recurrent
    for other letters.  Assume that the transducer was \(b\)_recurrent.  Consider
    $p$ and $q$ to be merged; since $L_p=L_q$, they are both in a cycle of $b$'s
    of the exact same length, say $n$.  Let $p'\in p.b^k$ and $q'\in q.b^k$.  We
    finish this proof by showing that $L_{p'} = L_{q'}$, which implies that $p'$
    and $q'$ are also merged during the process.

    For any $u\in L_{p'}$, we have $ub^{n-k} \in L_{p} = L_{q}$. Hence, $ub^n$ is in
    $L_{q'}$.  Since the transducer is $b$-recurrent, then necessarily the
    unique state that can reach $q$ by reading $n$ letters $b$ backwards is
    itself, proving that $u\in L_{q'}$ and so $L_{p'} \subseteq L_{q'}$. Symmetrically,
    $L_{q'}\subseteq L_{p'}$, entailing $L_{p'}=L_{q'}$ and concluding the proof.
  \end{proofof}

  \begin{proofof}[Proof of Fact~\ref{fact:rectosub}]{Fact~\ref{fact:rectosub}}
    Let \(a\) be some letter and \(q\) be a state of the transducer.  Consider two
    states \(p, p'\) in \(q.a\).  Since \(q.a^\omega=\{q\}\), it holds that
    \(p.a^{\omega-1} = \{q\}\), and thus that \(p' \in p.a^\omega\).  Hence \(p =p'\), since
    \(p\) is \(a\)_recurrent.
  \end{proofof}

  \begin{proofof}[Proof of Fact~\ref{fact:subtov}]{Fact~\ref{fact:subtov} and
      Lemma~\ref{lem:nftocV}}
    Let \(\tau\) be a \(\cV\)_continuous utilitarian plurisubsequential
    transducer.  Consider an equation $u =_\cV v$, a state
    $q$ of $\tau$, and let $p = q.u$ and $p' = q.v$.  We show that $p=p'$,
    concluding this point.  We rely on the Syncing Lemma, since $\tau$ is
    $\cV$_continuous; it ensures in particular that:
    \begin{align}
      (\Tif{\tau, \nop, q} \sync \Tif{\tau, \nop, q}) \times
      (\Tif{\tau, p, \nop} \sync \Tif{\tau, p', \nop}) \subseteq
      \equ_\cV(u', v') \quad\text{with } u' = \Tif{\tau, q, p}(u), v' =
      \Tif{\tau, q, p'}(v)\enspace.\label{eqn:p}
    \end{align}
    Let $(s, s, t_1, t_2)$ be in the left-hand side.  We have that
    $s\cdot u' \cdot t_1 =_\cV s \cdot v' \cdot t_2$, thus
    $u' \cdot t_1 =_\cV v' \cdot t_2$ (here and in the following, we derive equivalent
    equations by appealing to the fact that the \emph{free group} is embedded,
    in a precise sense, in~$\cV$~\cite[\S~6.1.9]{robinson95}).  Now consider
    another tuple $(s', s', t_1', t_2')$ again in the left-hand side of
    Equation~(\ref{eqn:p}).  It also holds that
    $u' \cdot t_1' =_\cV v' \cdot t_2'$, hence we obtain that
    $t_1\cdot t_2^{-1} =_\cV t_1'\cdot t_2'^{-1}$.  This is in turn equal
    in~$\cV$ to some $\alpha\cdot \beta^{-1}$ such that $\alpha$ and $\beta$ are words that do not
    share the same last letter.  This shows that $t_1 = \alpha \cdot t$ and
    $t_2 = \beta \cdot t$ for some word $t$, and similarly for $t_1'$ and
    $t_2'$.  More generally:
    $(\Tif{\tau, p, \nop} \sync \Tif{\tau, p', \nop}) \subseteq (\alpha, \beta) \cdot \Id$, and since
    \(\tau\) is utilitarian, $p=p'$.
  \end{proofof}
\end{proofnoqed}

As an immediate corollary:
\begin{cor}\label{cor:grpconttotrans}
  For $\cV \in \{\vGnil, \vGsol, \vG\}$, any $\cV$_continuous rational function
  is $\cV$_realizable.
\end{cor}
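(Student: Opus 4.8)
The plan is to observe that this corollary is essentially immediate from Lemma~\ref{lem:nftocV}, so the work is just to assemble the pieces. Let $f\colon A^* \to B^*$ be a $\cV$-continuous rational function. By definition of rational functions, $f$ is realized by some unambiguous transducer $\tau$; since the function it computes is $\cV$-continuous, $\tau$ is a ``$\cV$-continuous unambiguous transducer'' in the sense required by Lemma~\ref{lem:nftocV}.

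Next I would check the hypotheses of that lemma for each target variety: $\vGnil$, $\vGsol$, and $\vG$ are all varieties of group languages, and each contains $\vGnil$ (trivially for $\vGnil$ itself, and because $\vGnil \subseteq \vGsol \subseteq \vG$ otherwise). Hence Lemma~\ref{lem:nftocV} applies to $\tau$: applying the dual of Lemma~\ref{lem:nf} and then its first part yields a transducer $\tau''$ that is a plurisubsequential $\cV$-transducer. Both construction steps are function-preserving — by the statements of Lemma~\ref{lem:nf} they only adjoin a (co)deterministic automaton to the underlying automaton and merge states computing the same partial function — so $\tau''$ still realizes $f$.

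It then follows that $f$ is realized by a $\cV$-transducer, i.e., that $f$ is $\cV$-realizable, which is exactly the claim. (The additional information that $\tau''$ can be taken plurisubsequential is a strengthening not needed for the corollary as stated, but it is what will be reused in the decidability argument.)

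There is no genuine obstacle remaining at this level: the entire content of the statement has been discharged into Lemma~\ref{lem:nftocV} and the normal-form Lemma~\ref{lem:nf}. The one point worth a line of justification is that the transducer transformations involved are equivalences, and this is already guaranteed where those lemmas are stated.
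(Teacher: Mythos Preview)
Your proposal is correct and matches the paper's approach exactly: the paper presents this corollary as immediate from Lemma~\ref{lem:nftocV}, with no further proof given. Your added remarks verifying the hypotheses and noting that the normal-form constructions of Lemma~\ref{lem:nf} preserve the realized function are precisely the routine checks one would supply if writing the argument out in full.
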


\begin{thm}
  Let $\cV$ be a decidable variety of group languages that includes $\vGnil$ and
  that is closed under inverse $\cV$_realizable rational functions.  It is
  decidable, given an unambiguous transducer, whether it realizes a
  $\cV$_continuous function.  This holds in particular for $\vGsol$ and~$\vG$.
\end{thm}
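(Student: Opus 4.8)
The plan is to turn the decision problem into the single, effectively checkable question ``is a certain normal form of the input transducer a $\cV$-transducer?''. The starting point is that, under the hypotheses, $\cV$-continuity and $\cV$-realizability coincide for functions given by unambiguous transducers: Proposition~\ref{prop:transtocont} gives that every $\cV$-realizable rational function is $\cV$-continuous (this is exactly where the closure of $\cV$ under inverse $\cV$-realizable rational functions is used), while Lemma~\ref{lem:nftocV} supplies the converse in an \emph{effective, refined} form, to which I now turn.

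Given an unambiguous transducer $\tau$ realizing $f$, I would apply the constructions of Lemma~\ref{lem:nf}: first its dual part, then its first part, obtaining an equivalent transducer $\tau''$. Every ingredient here --- Cartesian products and reversals of finite automata, the normalization of Step~2, and the finitely many state merges of Step~3 --- is plainly effective, so $\tau''$ is computable from $\tau$. Lemma~\ref{lem:nftocV} states that if $f$ is $\cV$-continuous then $\tau''$ is a (plurisubsequential) $\cV$-transducer; conversely, if $\tau''$ is a $\cV$-transducer then $f$ is $\cV$-realizable by definition, hence $\cV$-continuous by Proposition~\ref{prop:transtocont}. Therefore $f$ is $\cV$-continuous \emph{iff} $\tau''$ is a $\cV$-transducer, and the whole problem reduces to deciding the latter.

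Finally, by the transition-monoid characterization of $\cV$-transducers recalled in the footnote on $\cV$-transducers, ``$\tau''$ is a $\cV$-transducer'' is equivalent to the (finite, computable) transition monoid of $\tau''$ lying in the pseudovariety of monoids associated with $\cV$; for $\vGsol$ and for $\vG$ this membership problem is decidable, which gives the two advertised instances. Chaining the three steps yields the theorem. The genuinely hard part has already been done, namely Lemma~\ref{lem:nftocV} (and behind it Facts~\ref{fact:nofork} and~\ref{fact:allom} together with the two-sided normal form of Lemma~\ref{lem:nf}); the only delicate point left is to observe that the last test is \emph{effective}, i.e.\ that membership in $\cV$ is decidable --- which is why $\vGsol$ and $\vG$ are singled out, and why $\vGnil$, failing even the closure hypothesis, falls outside the scope.
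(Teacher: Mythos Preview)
Your proposal is correct and follows essentially the same route as the paper: reduce $\cV$-continuity to the question ``is the normal form $\tau''$ of Lemma~\ref{lem:nf} a $\cV$-transducer?'' via Lemma~\ref{lem:nftocV} and Proposition~\ref{prop:transtocont}, then test that property. You are in fact more explicit than the paper about the effectiveness of the normal-form construction and about why the final membership test is decidable (the paper simply asserts ``this latter property being testable''); your care in restricting the full justification to $\vGsol$ and $\vG$, where monoid membership is clearly decidable, is appropriate.
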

\begin{proof}
  Lemma~\ref{lem:nftocV} together with Proposition~\ref{prop:transtocont} shows
  that a transducer is $\cV$_continuous iff its equivalent transducer
  effectively computed by Lemma~\ref{lem:nftocV} is a $\cV$_transducer.  This
  latter property being testable, the result follows.
\end{proof}

\subsection{Deciding continuity for aperiodic varieties}
We saw in Section~\ref{sec:struct} that the approach of the previous section
cannot work: there is no correspondence between continuity and realizability for
aperiodic varieties.  Herein, we use the Syncing Lemma to decide continuity in
two main steps.  First, note that all of our aperiodic varieties are defined by
an infinite number of equations for each alphabet.  The Syncing Lemma would thus
have us check an infinite number of conditions; our first step is to reduce this
to a finite number, which we stress through the forthcoming notion of
``pertaining triplet'' of states.  Second, we have to show that the inclusion of
the second point of the Syncing Lemma can effectively be checked.  This will be
done by simplifying this condition, and showing a decidability property on
rational relations.

We will need the following technical result in combinatorics on words in the
proof of the forthcoming Lemma~\ref{lem:apred}:
\begin{lem}\label{lem:uvxyst}
  Let $u, v, x, y, s, t \in A^*$ be words satisfying:
  \begin{enumerate}
  \item $u\cdot v,\; x\cdot y \in s^*$;
  \item $v\cdot u,\; y\cdot x \in t^*$;
  \item $s$ and $t$ are primitive.
  \end{enumerate}
  There exist $z, z' \in A^*$ such that:
  \begin{enumerate}
  \item $s = z\cdot z'$ and  $t = z' \cdot z$;
  \item $u,\; x \in s^*\cdot z$;
  \item $v,\; y \in t^*\cdot z'$.
  \end{enumerate}
\end{lem}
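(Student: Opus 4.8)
The plan is to exploit that the four hypotheses place $uv$, $vu$, $xy$, $yx$ on a single ``necklace'': $uv$ and $xy$ are powers of the primitive word $s$, while $vu$ (\resp $yx$) is the cyclic conjugate of $uv$ (\resp $xy$) obtained by moving its prefix $u$ (\resp $x$) to the end, and these conjugates are by hypothesis powers of the primitive word $t$. The main ingredient is the standard fact (see, e.g., \cite[Proposition~1.3.3]{lothaire97}) that the $|s|$ cyclic conjugates of a primitive word $s$ are pairwise distinct; consequently, for any $n \geq 1$ the cyclic conjugate of $s^n$ obtained by moving a length-$j$ prefix to the end is exactly $\sigma_j^{\,n}$, where $\sigma_j$ denotes the cyclic conjugate of $s$ by $j \bmod |s|$ positions, and two such conjugates of $s^n$ coincide iff the shift amounts agree modulo $|s|$. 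I first treat the case where $u, v, x, y$ are all nonempty (which is the situation in the application to Lemma~\ref{lem:apred}); the remaining degenerate cases are immediate by choosing appropriately which of $z, z'$ is empty, as indicated below.

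\emph{Step~1: pinning down the offset.} Write $uv = s^p$ and $xy = s^{p'}$, with $p, p' \geq 1$. Moving the prefix $u$ of $s^p$ to its end yields $vu$, which by hypothesis equals $t^q$; comparing primitive roots — the primitive root of that conjugate of $s^p$ being $\sigma_{|u|}$ — gives $t = \sigma_{|u|}$, $q = p$, and in particular $|s| = |t|$. Applying the same computation to the pair $(x, y)$ gives $t = \sigma_{|x|}$. Since the cyclic conjugates of the primitive word $s$ are pairwise distinct, this forces $|u| \equiv |x| \pmod{|s|}$; write $r$ for this common residue, $0 \leq r < |s|$.

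\emph{Step~2: reading off the factorization.} Let $z$ be the length-$r$ prefix of $s$ and $z'$ the complementary suffix, so that $s = z\cdot z'$ and $t = \sigma_r = z'\cdot z$. Now $u$ is the prefix of $s^p = (zz')^p$ of length $|u| \equiv r \pmod{|s|}$, hence $u = s^{(|u|-r)/|s|}\cdot z \in s^*\cdot z$, and the same argument gives $x \in s^*\cdot z$. Dually, $v$ is the suffix of $(zz')^p$ of length $p|s| - |u| \equiv |z'| \pmod{|s|}$, which has the form $z'\cdot(zz')^{j} = (z'z)^{j}\cdot z' = t^{j}\cdot z'$ for the appropriate $j \geq 0$, so $v \in t^*\cdot z'$, and likewise $y \in t^*\cdot z'$. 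This yields the three required conclusions. (When one of $u, v, x, y$ is empty, one of these length computations lands on a boundary; in each such case $t = s$ is forced as the common primitive root and one of $z, z'$ can be taken empty, which is checked directly.)

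The conceptual heart is Step~1: the point that is not bookkeeping is recognizing that the two pairs $(u, v)$ and $(x, y)$ must be ``phase-locked'' on the necklace, i.e.\ $|u| \equiv |x| \pmod{|s|}$ — this is exactly what makes a single factorization $s = z\cdot z'$ serve both pairs simultaneously, and it is where primitivity of $s$ (hence distinctness of its conjugates) is essential. Once this synchronization is established, Step~2 is routine manipulation of prefixes and suffixes of powers of a primitive word.
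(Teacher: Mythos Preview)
Your proof is correct and takes essentially the same approach as the paper's: both arguments factor $u$ (and $x$) as a power of $s$ times a proper prefix $z$, identify $t$ as the corresponding conjugate $z'z$ of $s$, and then invoke the fact that a primitive word has $|s|$ pairwise distinct conjugates (the paper cites \cite[Proposition~1.3.2]{lothaire97}) to conclude that $u$ and $x$ force the same split $s = z\cdot z'$. Your phrasing via cyclic shifts $\sigma_j$ and residues modulo $|s|$ is just a notational variant of the paper's direct factorization $u = s^c\cdot z$, $v = z'\cdot s^{c'}$.
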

\begin{proof}
  Write $u = s^c\cdot z$ and $v = z'\cdot s^{c'}$ such that $s = z\cdot z'$.  It
  follows that $v\cdot u \in {(z'\cdot z)}^*$, and since $z' \cdot z$ is
  primitive, we have that $t = z' \cdot z$.  We can do the same with $x$ and
  $y$, letting $x = s^{\dot c} \cdot \dot z$ and
  $y = \dot z' \cdot s^{\dot c'}$, with $s = \dot z \cdot \dot z'$.  The same
  reasoning then shows that $t = \dot z' \cdot \dot z$.  Since $s$ has precisely
  $|s|$ conjugates (by~\cite[Proposition~1.3.2]{lothaire97}), we have that
  $\dot z = z$ and $\dot z' = z'$, and the properties of the Lemma follow.
\end{proof}

\begin{defi}\label{def:pert}
  A triplet of states $(p, q, q')$ is \emph{pertaining} if there are words $s,
  u, t$ and an integer $n$ such that:
  \begin{center}
    \begin{tikzpicture}[>=stealth,bend angle=40]
  \node (q0) {$I$};
  \node (p) at ($(q0.center)+(3,0.5)$)[anchor=south,state]  {$p$};
  \node (q) at ($(q0.center)+(2.,-1)$)[anchor=south,state ]  {$q$};
  \node (qp) at ($(q0.center)+(4.,-1)$)[anchor=south,state]  {$q'$};
  \node (F) at ($(q0.center)+(6,0.0)$)[]  {$F$};
  \draw[->] (q0) -- (p) node[pos=0.5, above] {$s\mid x_1$};
  \draw[->] (q0) -- (q) node[pos=0.4, below] {$s\mid y_1$};
  \draw[->] (p) -- (F) node[pos=0.5, above] {$t\mid x_2$};
  \draw[->] (qp) -- (F) node[pos=0.6, below] {$t\mid y_2$};
  \path[->] (q) edge[bend left] node[above] {$u\mid \beta'$} (qp) ;
  \path[->] (qp) edge[bend left] node[below] {$u^{n-1}\mid \beta''$}(q);
  \path[->] (p) edge[loop,looseness=7] node[above] {$u^n\mid \beta$} (p);
\end{tikzpicture}
  \end{center}
  where $\cdot$ means ``any word.'' Further, a pertaining triplet is
  \emph{empty} if, in the above picture, $\beta = \beta'\beta'' = 1$ and
  \emph{full} if both words are nonempty; it is \emph{degenerate} if only one of
  $\beta$ or $\beta'\beta''$ is empty.
\end{defi}

It is called ``pertaining'' as the second point of the Syncing Lemma elaborates
on properties of such a triplet, in particular, since $u^\omega = u^{\omega+1}$
is an equation of \vA.  The following characterization of $\vA$_continuity is
then made \emph{without appeal} to equations or profinite words:

\begin{lem}\label{lem:apred}
  A transducer $\tau\colon A^* \to B^*$ is $\vA$_continuous iff all of the
  following hold:
  \begin{enumerate}
  \item $\tau^{-1}(B^*) \in \vA(A^*)$;
  \item For all full pertaining triplets $(p, q, q')$, there exist
    $x, y \in B^*$ and $\rho_1, \rho_2 \in {(B^*)}^2$ such that
    $\Tif{\tau, \nop, p} \sync \Tif{\tau, \nop, q} \subseteq \Id\cdot \big((x^*,
    x^*)\rho_1^{-1}\big) \quad\text{and}\quad
    \Tif{\tau, p, \nop}
    \sync \Tif{\tau, q', \nop} \subseteq \big(\rho_2^{-1}(y^*, y^*)\big)\cdot
    \Id$;
  \item For all empty pertaining triplets $(p, q, q')$, we have that
    $(\Tif{\tau, \nop, p} \sync \Tif{\tau, \nop, q}) \cdot (\Tif{\tau, p, \nop}
    \sync \Tif{\tau, q', \nop}) \subseteq \Id$;
  \item No pertaining triplet is degenerate.
  \end{enumerate}
\end{lem}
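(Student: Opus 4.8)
The plan is to prove Lemma~\ref{lem:apred} by massaging the Syncing Lemma instantiated at the variety $\vA$, whose defining equations are $x^{\omega+1}=x^\omega$ for $x\in A^*$. The first observation is that point~1 of the Syncing Lemma is verbatim point~1 here, so nothing needs doing there. For points 2--4, I would start from the fact that an equation $u^\omega = u^{\omega+1}$ can equivalently be written $u^\omega = u^\omega\cdot u^\omega$ (in $\vA$), and that what the Syncing Lemma asks, for each equation and each choice of states $p,q$ with $p'\in p.u^\omega$, $q'\in q.u^\omega$ and so on, is an inclusion of an input synchronization into an equalizer set $\equ_\vA(u',v')$ where $u',v'$ are the outputs of the two sides along the $u^\omega$-part.

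First I would reduce the infinitely many equations to the finitely many ``pertaining triplets.'' The key point is that evaluating $\Tif{\tau,q,q'}(u^\omega)$ only depends on the cycle structure on $u$ inside $\tau$, and by the analysis of Lemma~\ref{lem:omega} the relevant data is exactly: a state $p$ with an $a$-cycle (the looping $u^n$ on $p$), together with the two states $q,q'$ reached ``just before'' and ``just after'' the relevant position when the two sides of the equation $u^\omega = u^\omega u^\omega$ are compared --- this is precisely the picture in Definition~\ref{def:pert}. So I would argue that the Syncing Lemma condition for all $\vA$-equations, over all state choices, is equivalent to the same condition ranging only over pertaining triplets. Concretely, given a pertaining triplet, the outputs $u'=\Tif{\tau,p,p}(u^\omega)=s\cdot y^{\omega-1}\cdot t$ and $v'$ obtained from the $q\to q'\to q$ detour differ, and the equalizer set $\equ_\vA(u',v')$ is exactly of the form described by Lemma~\ref{lem:apeq} (since $\vA$ is aperiodic and supercancellative): a product of a ``left factor'' constrained to lie in $\Id\cdot((x^*,x^*)\rho_1^{-1})$ and a ``right factor'' in $(\rho_2^{-1}(y^*,y^*))\cdot\Id$, \emph{when it is nonempty}, and empty otherwise.

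The heart of the proof is then the case split on whether the $\omega$-term outputs are trivial. If the triplet is \emph{full} ($\beta=\beta'\beta''=1$ is false, i.e.\ both nonempty), then $u'\neq v'$ as profinite words but $\equ_\vA(u',v')$ may still be nonempty, and Lemma~\ref{lem:apeq} gives exactly the shape in point~2; so requiring the Syncing inclusion is equivalent to requiring the two input synchronizations to land in $\Id\cdot((x^*,x^*)\rho_1^{-1})$ and $(\rho_2^{-1}(y^*,y^*))\cdot\Id$ respectively. If the triplet is \emph{empty} ($\beta=\beta'\beta''=1$), both sides output the empty word along the $\omega$-part, so $u'=v'=1$ (up to the $s,t$ prefixes/suffixes which are common), the equation degenerates to equality of the $s$- and $t$-parts, and $\equ_\vA(1,1)=\Id\times\Id$ essentially, giving point~3. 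If the triplet is \emph{degenerate} (exactly one of $\beta,\beta'\beta''$ empty), then $u'$ and $v'$ have different lengths in a way that makes $\equ_\vA(u',v')$ empty while the input synchronization on the relevant side is nonempty (it always contains the pair coming from the loop itself), so the Syncing inclusion fails outright --- hence point~4, that no degenerate pertaining triplet may exist. Assembling these three cases shows the Syncing Lemma condition holds iff points 2--4 hold, completing the equivalence.

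I expect the main obstacle to be the bookkeeping in the reduction from arbitrary $\vA$-equations and arbitrary state pairs to pertaining triplets: one must check that \emph{every} instance of the Syncing condition for $u^\omega = u^{\omega+1}$ is subsumed by a pertaining triplet (using that $u$ may be replaced by a suitable power so its action on $\tau$ is idempotent-like, and invoking Lemma~\ref{lem:uvxyst} to line up the primitive roots of the various loop outputs), and conversely that pertaining triplets do not impose anything \emph{stronger} than the Syncing condition. The combinatorics-on-words lemma (Lemma~\ref{lem:uvxyst}) is tailored precisely to this: it lets one pass between the $s$-periodicity and $t$-periodicity of the loop outputs when comparing the two branches of the triplet, which is what makes the ``there exist $x,y,\rho_1,\rho_2$'' in point~2 uniform across the triplet rather than branch-dependent. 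Everything else is a direct application of Lemmas~\ref{lem:omega}, \ref{lem:apeq}, and \ref{lem:apeqomega} together with the Syncing Lemma.
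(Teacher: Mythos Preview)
Your outline matches the paper's proof in the ``only if'' direction: the Syncing Lemma gives the inclusion into $\equ_\vA(u',v')$, and Lemma~\ref{lem:apeq} yields the product shape of that equalizer, from which points~2--4 follow by the case split you describe. (One small slip: $\equ_\vA(1,1)$ is not $\Id\times\Id$; it is $\{(s,s',t,t')\mid st=s't'\}$, which is exactly what point~3 encodes.)

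The gap is in the ``if'' direction for full triplets. You write that ``requiring the Syncing inclusion is \emph{equivalent} to requiring the two input synchronizations to land in $\Id\cdot((x^*,x^*)\rho_1^{-1})$ and $(\rho_2^{-1}(y^*,y^*))\cdot\Id$,'' appealing to Lemma~\ref{lem:apeq}. But Lemma~\ref{lem:apeq} only tells you the equalizer has that shape for \emph{specific} $x_0,y_0,\rho_{1,0},\rho_{2,0}$ determined by $\beta$ and $\beta'\beta''$; point~2 merely asserts the existence of \emph{some} $x,y,\rho_1,\rho_2$, which a priori need not coincide with those. So from point~2 alone you cannot conclude that the synchronizations sit inside the equalizer. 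The paper does \emph{not} go through the equalizer here: instead, given arbitrary $p,q,p',q'$, it locates the pertaining triplet $(P,Q,Q')$, rewrites the target as $\alpha\beta^\omega\gamma =_\vA \alpha'\beta'^\omega\gamma'$, and then uses point~2 only to obtain the containing sets $L,R$. The actual work is to feed the specific pairs $(\alpha\beta^k,\alpha'\beta'^k)\in L$ and $(\beta^k\gamma,\,b_2\beta'^{k-1}\gamma')\in R$ into a prefix/suffix analysis (the paper's Fact~3), deduce that suitable conjugates of $\beta$ and $\beta'$ share a primitive root, and finally invoke Lemma~\ref{lem:uvxyst} to align these roots and conclude the profinite equality directly---with a separate treatment of the four prefix/suffix cases. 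This combinatorial core is what you label ``bookkeeping,'' but it is the substance of the argument, not a formality; your plan to ``apply Lemma~\ref{lem:apeq}'' does not replace it.
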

\begin{proofof}{Lemma~\ref{lem:apred}}
  \proofstep{Only if}%
  Suppose $\tau$ is \vA_continuous, and let us appeal to the Syncing Lemma.
  Point 1 is then immediate.  Point 2 is a direct consequence of the second
  point of the Syncing Lemma and of Lemma~\ref{lem:apred}.

  We shall now check point 3, by contradiction.  Let $(p, q, q')$ be an empty
  pertaining triplet; we use the notations of Definition~\ref{def:pert}.  Then
  by functionality of $\tau$, we have that
  $\tau(s \cdot u^\omega \cdot t) = x_1 \cdot x_2$ and
  $\tau(s \cdot u^{\omega+1} \cdot t) = y_1 \cdot y_2$.  By the Preservation
  Lemma, and since
  $s \cdot u^\omega \cdot t =_\vA s \cdot u^{\omega+1} \cdot t$, it should hold
  that $x_1\cdot x_2 = y_1 \cdot y_2$, proving point 3.

  Point 4 is proven using similar ideas as point 3: with $(p, q, q')$ a
  degenerate pertaining tuple, and using the same notations as above, either the
  production of $s \cdot u^\omega \cdot t$ going through $p$ is a not a finite
  word while the production of $s \cdot u^{\omega+1} \cdot t$ through $q, q'$
  is, or vice-versa.  In both cases, it is not possible for these productions to
  be equal in $\vA$, hence if such a case happens, $\tau$ cannot be
  $\vA$_continuous.

  \proofstep{If}%
  We again rely on the Syncing Lemma, the first point of which being satisfied by
  hypothesis.  Let $u^\omega = u^{\omega+1}$ be an equation of $\vA$ with $u$ a
  word; the set of such equations defines $\vA(A^*)$.  Let $p, q, p', q'$ be
  states such that $p' \in p.u^\omega$ and $q' \in q.u^{\omega+1}$, and let
  $s, t$ be words with $p,q \in q_0.s$ and $p'.t, q'.t \in F$.  To conclude and
  apply the Syncing Lemma, we need to show that:
  \begin{align}
    \Tif{\tau, \nop, p}(s) \cdot \Tif{\tau, p, p'}(u^\omega) \cdot \Tif{\tau,
    p', \nop}(t) =_\vA \Tif{\tau, \nop, q}(s) \cdot \Tif{\tau, q,
    q'}(u^{\omega+1}) \cdot \Tif{\tau, q', \nop}(t)\enspace.\label{eqn:m}
  \end{align}
  (This is a direct consequence of the way profinite words are evaluated in a
  transducer, as per Lemma~\ref{lem:eval}.)

  Consider a large number $N = n!$, so that $p' \in p.u^N$ and $q' \in 
  q.u^{N+1}$.  With a large enough $N$, there must be two states $P$ and $Q$,
  and integers $i, j$ with $i + j = N$, such that:
  \begin{itemize}
  \item $P \in p.u^i$, $p' \in P.u^j$, and $P \in P.u^N$ (i.e., $P$ is
    ``between'' $p$ and $p'$, and belongs to a loop);
  \item $Q \in q.u^i$, $q' \in Q.u^{j+1}$, and $Q \in Q.u^N$.
  \end{itemize}
  (That such a pair exists can easily be seen on the product automaton of $\tau$
  by itself: The path from $(p, q)$ to $(p', q'')$ with $q' \in q''.u$ reading
  $u^N$ must go twice through the same pair of states $(P, Q)$, and this pair
  respects the above requirements.)

  Now define the following words:

  \def\bb#1#2{\makebox[#1][l]{#2}}
  \begin{itemize}
  \item \bb{2.7cm}{$\alpha = \Tif{\tau, \nop, P}(s\cdot u^i)$,}
    \bb{2.5cm}{$\beta = \Tif{\tau, P, P}(u^N)$,}
    $\gamma = \Tif{\tau, P, \nop}(u^j \cdot t)$,
  \item \bb{2.7cm}{$\alpha' = \Tif{\tau, \nop, Q}(s\cdot u^i)$,}
  \bb{2.5cm}{$\beta' = \Tif{\tau, Q, Q}(u^N)$,}
    $\gamma' = \Tif{\tau, Q, \nop}(u^{j+1} \cdot t)$.
  \end{itemize}
  Using the same reasoning as Lemma~\ref{lem:omega}, and the unambiguity of
  $\tau$, Equation (\ref{eqn:m}) is equivalent to:
  \[\alpha\cdot \beta^{\omega - N} \cdot \gamma =_\vA \alpha' \cdot
    \beta'^{\omega - N} \cdot \gamma'\enspace.\]
  Naturally, since $\alpha\cdot \beta^{\omega - N} \cdot \gamma =_\vA
  \alpha\cdot\beta^\omega\cdot\gamma$, and similarly for the right-hand side,
  Equation (\ref{eqn:m}) is equivalent to:
  \[\alpha\cdot \beta^\omega \cdot \gamma =_\vA \alpha' \cdot
    \beta'^\omega \cdot \gamma'\enspace.\]

  To make use of the hypotheses of the present Lemma, define $Q'$ to be in $Q.u$
  and such that $Q \in Q'.u^{N-1}$: $(P, Q, Q')$ is thus pertaining.  The
  situation is then:
  \begin{center}
    \begin{tikzpicture}[>=stealth,bend angle=40]
  \node (q0) {$I$};
  \node (p) at ($(q0.center)+(2,0.5)$)[anchor=south, state]  {$p$};
  \node (pp) at ($(q0.center)+(8,0.5)$)[anchor=south, state]  {$p'$};

  \node (q) at ($(q0.center)+(2.,-1)$)[anchor=south, state, ]  {$q$};
  \node (qp) at ($(q0.center)+(8.,-1)$)[anchor=south, state]  {$q'$};
  \node (P) at ($(q0.center)+(5,0.5)$)[anchor=south, state]  {$P$};
  \node (Q) at ($(q0.center)+(4.,-1)$)[anchor=south, state, ]  {$Q$};
  \node (Qp) at ($(q0.center)+(6.,-1)$)[anchor=south, state]  {$Q'$};
  \node (F) at ($(q0.center)+(10,0.0)$)[]  {$F$};
  \draw[->] (q0) -- (p) node[pos=0.5, above] {$s\mid\cdot$};
  \draw[->] (q0) -- (q) node[pos=0.4, below] {$s\mid\cdot$};
  \draw[->] (p) -- (P) node[pos=0.5, above] {$u^i\mid\cdot$};
  \draw[->] (Qp) -- (qp) node[pos=0.5, below] {$u^j\mid\cdot$};
  \draw[->] (q) -- (Q) node[pos=0.5, below] {$u^i\mid\cdot$};
  \draw[->] (P) -- (pp) node[pos=0.5, above] {$u^j\mid\cdot$};
  \draw[->] (pp) -- (F) node[pos=0.5, above] {$t\mid\cdot$};
  \draw[->] (qp) -- (F) node[pos=0.6, below] {$t\mid\cdot$};
  \path[->] (Q) edge[bend left] node[above] {$u\mid \beta'$} (Qp) ;
  \path[->] (Qp) edge[bend left] node[below] {$u^{N-1}\mid \beta''$}(Q);
  \path[->] (P) edge[loop,looseness=7] node[above] {$u^N\mid \beta$} (P);
\end{tikzpicture}
  \end{center}

  Since by hypothesis this triplet cannot be degenerate, either both of $\beta$ and
  $\beta'$ are empty, or none are.  Suppose they are both empty, then the
  hypothesis on empty triplets shows that:
  \[\Tif{\tau, \nop, P}(s \cdot u^i) \cdot \Tif{\tau, P, \nop}(u^{N+j} \cdot t) =
    \Tif{\tau, \nop, Q}(s \cdot u^i) \cdot \Tif{\tau, Q', \nop}(u^{N+j} \cdot
    t)\enspace.\]%
  The left-hand side evaluates to $\alpha \cdot \gamma$.  Since
  $\Tif{\tau, Q', \nop}(u^{N+j} \cdot t) = \Tif{\tau, Q', Q}(u^{N-1}) \cdot
  \Tif{\tau, Q, \nop}(u^{j+1} \cdot t) = \gamma'$, the right-hand side evaluates
  to $\alpha' \cdot \gamma'$, and Equation~(\ref{eqn:m}) is thus satisfied.

  Let us thus suppose that both $\beta$ and $\beta'$ are nonempty.  We divide
  $\beta'$ into $b_1b_2$ such that $b_1 = \Tif{\tau, Q, Q'}(u)$ and
  $b_2 = \Tif{\tau, Q', Q}(u^{N-1})$.  Now let $x, y \in B^*$ and
  $\rho_1, \rho_2 \in {(B^*)}^2$ be the (pairs of) words provided by point~2 for
  the triplet $(P, Q, Q')$.  Define
  $L = \Id \cdot \big((x^*, x^*)\rho_1^{-1}\big)$ and
  $R = \big(\rho_2^{-1}(y^*, y^*)\big) \cdot \Id$.  For any $k \geq 1$, and
  letting $\eta = s \cdot u^{i+k\times N}$ and
  $\eta' = u^{k \times N +j}\cdot t$, it holds by hypothesis that:
  \begin{itemize}
  \item
    $(\Tif{\tau, \nop, P}(\eta), \Tif{\tau, \nop, Q}(\eta)) = (\alpha \cdot
    \beta^k,\; \alpha' \cdot \beta'^k) \in L$;\hfill(a)
  \item
    $(\Tif{\tau, P, \nop}(\eta), \Tif{\tau, Q', \nop}(\eta)) = (\beta^k
    \cdot \gamma,\; b_2\cdot \beta'^{k-1}\cdot\gamma') \in R$.\hfill(b)
  \end{itemize}

  \noindent
  Let us first emphasize an easy property of $L$ and $R$:
  \begin{fact}\label{fact:pre}
    If $(w\cdot w', w \cdot w'') \in L$ with $|w'|, |w''| > |x|$, then
    $(w', w'') \in L$.  Moreover, if $(w, w') \in L$, then $w$ is a prefix of
    $w'$ or vice-versa.

    Similarly, if $(w'\cdot w, w'' \cdot w) \in R$ with $|w'|, |w''| > |y|$,
    then $(w', w'') \in R$.  Moreover, if $(w, w') \in R$, then $w$ is a suffix
    of $w'$ or vice-versa.
  \end{fact}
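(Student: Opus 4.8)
The plan is to establish the two assertions about $L$ directly from the definitions, and then obtain the analogous statements about $R$ by the mirror argument --- reading words right-to-left and replacing $x$ by $y$ --- since $R$ has exactly the shape of $L$ with prefixes and suffixes exchanged. Write $\rho_1 = (r_1, r_2)$; unfolding the definition, a pair $(a, b)$ lies in $L = \Id \cdot \big((x^*, x^*)\rho_1^{-1}\big)$ exactly when $a$ and $b$ have a common prefix $\sigma$ with $(\sigma^{-1}a)\,r_1 \in x^*$ and $(\sigma^{-1}b)\,r_2 \in x^*$. Two elementary facts about words will carry the proof: any word $e$ with $e\,r_i \in x^*$ is a prefix of a power of $x$; and any two prefixes of powers of $x$ are comparable, being both prefixes of $x^k$ for $k$ large enough.

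The ``moreover'' clause is then immediate: if $(w, w') \in L$ with witness $\sigma$, then $\sigma^{-1}w$ and $\sigma^{-1}w'$ are prefixes of powers of $x$, hence comparable, hence so are $w$ and $w'$. For the cancellation clause I would fix a witness $\sigma$ for $(w w', w w'') \in L$ and split on $|\sigma|$. If $|\sigma| \ge |w|$, write $\sigma = w\pi$; then $w' = \pi e_1$ and $w'' = \pi e_2$ with $e_i\,r_i \in x^*$, so $\pi$ already witnesses $(w', w'') \in L$, and the length hypotheses are not even used. If $|\sigma| < |w|$, put $\pi = \sigma^{-1}w$, a nonempty prefix of a power of $x$ (since $\pi w'\,r_1 \in x^*$), decompose $\pi = x^c p$ with $|p| < |x|$, and set $q = p^{-1}x$, so that $x = pq$. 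From $\pi w'\,r_1 \in x^*$ one obtains $w'\,r_1 = q\,x^{m}$, and likewise $w''\,r_2 = q\,x^{m'}$; since $|q| \le |x| < |w'|, |w''|$ and each of $w', w''$ is a prefix of a word starting with $q$, the word $q$ is a common prefix of $w'$ and $w''$, and $q^{-1}(w'\,r_1) = x^m \in x^*$, $q^{-1}(w''\,r_2) = x^{m'} \in x^*$, so $q$ witnesses $(w', w'') \in L$.

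I expect the only real friction to be in this last case: one cannot simply cancel the prefix $w$, because the witness $\sigma$ delivered by $(ww', ww'') \in L$ may be strictly shorter than $w$, so a fresh witness must be built from the periodicity of $x$ --- and it is precisely there that the hypotheses $|w'|, |w''| > |x|$ are needed, to guarantee that $w'$ and $w''$ are long enough to contain the candidate common prefix $q$. The few degenerate situations ($x = 1$, or $r_1$ or $r_2$ not a suffix of any power of $x$, so that $L$ equals $\Id$ or is empty) are settled in a line, and the $R$-statements then follow by the symmetric argument.
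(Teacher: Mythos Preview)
Your proof is correct and follows essentially the same line as the paper's: a case split on whether the witness $\sigma$ for $(ww', ww'') \in L$ is at least as long as $w$, and in the short-witness case using the decomposition $x = pq$ to produce the new witness $q$ (the paper's $\chi'$). The ``moreover'' clause and the reduction of $R$ to the mirror argument are handled the same way; your explicit mention of the degenerate cases ($x=1$, empty $L$) is a small addition the paper leaves implicit.
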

  \begin{proofof}{Fact~\ref{fact:pre}}
    We only show this for $L$, the case for $R$ being similar.

    For the first part of the statement, the hypothesis ensures the existence of
    a word $z$, integers $n', n''$, and two prefixes $x', x''$ of $x$ such that
    $w \cdot w' = z \cdot x^{n'} \cdot x'$ and
    $w \cdot w'' = z \cdot x^{n''} \cdot x''$.  If $w$ is a prefix of $z$, the
    property is easy to verify.  In the other cases, $w = z\cdot x^n\cdot \chi$
    for some integer $n < n', n''$ (strictness coming from the hypothesis) and
    $x = \chi\chi'$.  Hence $w' = \chi'\cdot x^{n'-n-1} \cdot x'$ and
    $w'' = \chi'\cdot x^{n''-n-1} \cdot x'$, and thus both belong to $L$.  The
    case of $R$ is similar.

    For the second part of the statement, $w$ and $w'$ start with a common word
    $z$, then some repetitions of $x$, and a prefix of $x$.  Clearly, one has to
    be a prefix of the other.
  \end{proofof}

  \def\subp{_\text{\normalfont p}}\def\subs{_\text{\normalfont s}}
  We first focus on the consequences of (a).  First, since either
  $\alpha\cdot\beta^k$ is a prefix of $\alpha'\cdot\beta'^k$ or vice-versa, we
  have that either $\alpha$ is a prefix of $\alpha'\cdot\beta'^k$, or $\alpha'$
  a prefix of $\alpha\cdot\beta^k$, for some $k$.  Suppose for instance that
  $\alpha' = \alpha\cdot\beta^c \cdot \beta\subp$, with
  $\beta = \beta\subp \cdot \beta\subs$; the other case will be treated later.
  Appealing to Fact~\ref{fact:pre}, for $k$ big enough, factoring out $\alpha'$
  yields that $({(\beta\subs\beta\subp)}^{k-c-1}\beta\subs, \beta'^k) \in L$.
  Hence ${(\beta\subs\beta\subp)}^*$ and $\beta'^*$ share common prefixes of
  unbounded length, implying that $\beta\subs\beta\subp$ and $\beta'$ are powers
  of a same primitive word $z_1$ (by, e.g.,~\cite[Proposition~1.3.5]{lothaire97}).

  Now similarly focusing on (b), we obtain that $\gamma$ is a suffix of
  $\beta'^k\cdot\gamma'$ or $\gamma'$ is a suffix of $\beta^k\cdot\gamma$, for
  some $k$.  Suppose for instance that
  $\gamma = \beta'\subs \cdot \beta'^{c'} \cdot \gamma'$, with
  $\beta' = \beta'\subp \cdot \beta'\subs$, again delaying the other case.  It
  follows, just as above, that $\beta$ and $\beta'\subs\beta'\subp$ are powers
  of a same primitive word $z_2$.  Noting that ${(\eta^c)}^\omega = \eta^\omega$,
  for any $\eta$, Equation~(\ref{eqn:m}) is thus equivalent to:
  \[\alpha \cdot z_2^\omega \cdot \beta'\subs\cdot\beta'^{c'} \cdot \gamma'
    =_\vA \alpha\cdot \beta^c \cdot   \beta\subp \cdot z_1^\omega \cdot
    \gamma'\enspace.\]

  Lemma~\ref{lem:uvxyst} indicates that there exist words $z, z'$ such that
  $z_1 = z\cdot z'$, $z_2 = z' \cdot z$, and
  $\beta'\subs \in z_2^*\cdot z', \beta\subp \in z' \cdot z_1^*$.  By
  eliminating $\alpha$ and $\gamma'$ we thus obtain that there are some integers
  $n_1, n_2$ such that Equation~(\ref{eqn:m}) is equivalent to
  $z_2^\omega\cdot z' \cdot z_1^{n_1 \times c'} =_\vA z_2^{n_2 \times c} \cdot
  z' \cdot z_1^\omega$, which clearly holds as both sides evaluate to
  ${(z' \cdot z)}^\omega \cdot z'$.

  \proofstep{Remaining cases}%
  We made two suppositions: $\alpha'$ is a prefix of $\alpha$, and $\gamma$ is a
  suffix of $\gamma'$.  The case where $\alpha$ is a prefix of $\alpha'$ and
  $\gamma'$ a suffix of $\gamma$ is entirely symmetric.  Let us keep our
  supposition on $\alpha'$ and assume that $\gamma'$ is a suffix of $\gamma$;
  the last remaining case is similar to this one.

  \def\betab{\dot{\beta}} Let us thus write
  $\gamma' = \betab\subs \cdot \beta^{c'} \cdot \gamma$, with
  $\beta = \betab\subp\cdot\betab\subs$.  We then obtain, factoring out
  $\gamma'$ this time, that $(\beta^{k-c-1}\betab\subp, \beta'^k) \in R$.  This
  implies that $(\betab\subs\betab\subp)$ and $\beta'$ are powers of the same
  primitive word, which can only be $z_1$.  Writing $z_2$ for the primitive root
  of $\beta$, Lemma~\ref{lem:uvxyst} shows the existence of words $z, z'$ such
  that $z_1 = z \cdot z'$, $z_2 = z' \cdot z$, and
  $\beta\subp \in z_2^*\cdot z', \betab\subs \in z \cdot z_2^*$.  By eliminating
  $\alpha$ and $\gamma$, we similarly obtain that Equation~(\ref{eqn:m}) is
  equivalent, for some $n_1, n_2$, to
  $z_2^\omega =_\vA z_2^{n_1 \times c} \cdot \beta\subp \cdot z_1^\omega \cdot
  \betab\subs \cdot z_2^{n_2 \times c'}$.  Then both sides evaluate to
  $z_2^\omega$, hence Equation~(\ref{eqn:m}) holds.
\end{proofof}

\begin{exa}
  We show that the transducer of Proposition~\ref{prop:conttrans} is
  $\vA$_continuous.  Let $\tau$ be:
  \begin{center}
    \begin{tikzpicture}[baseline={(current bounding box.center)},>=stealth,bend angle=40, scale=0.75,every node/.style={transform shape}]
  \node (P)  [draw,circle] {$p$};
  \node (Pin) at (P.west) [xshift=-1.5em] {};
  \node (Q) at ($(P.center)+(2.,0)$) [draw,circle]  {$q$};
  \node (Qout) at (Q.east) [xshift=1.5em] {};
  \node (Pout) at (P.east) [xshift=1.5em] {};
  \path[->] (P) edge[bend left] node[above] {$a\mid a$} (Q) ;
  \path[->] (Q) edge[bend left] node[below] {$a\mid 1$}(P);
  \path[->] (Pin) edge(P);
  \path[->] (P) edge (Pout);
  \path[->] (Q) edge (Qout);

\end{tikzpicture}
  \end{center}

  First, the function is total, hence the first point of Lemma~\ref{lem:apred}
  is satisfied.  Second, there are no empty nor degenerate pertaining triplets,
  hence the third and fourth points are satisfied.  Now the full pertaining
  triplets are $(p, p, p)$, $(p, p, q)$, $(q, q, q)$, and $(q, q, p)$.  We check
  that the pertaining triplet $(p, p, q)$ satisfies the second condition of
  Lemma~\ref{lem:apred}, the other cases being similar or clear.  The first half
  of the condition is immediate.  Now
  $\Tif{\tau,p,\nop} \sync \Tif{\tau,q,\nop} = \{(a^{\lfloor n+1/2\rfloor},
  a^{\lfloor n/2\rfloor}) \mid n \geq 0\}$ which satisfies the condition.
  \fullstop
\end{exa}

We now show that the property of Lemma~\ref{lem:apred} is indeed decidable:
\begin{prop}\label{prop:dectrans}
  It is decidable, given a rational relation $R \subseteq A^* \times A^*$,
  whether there is a word $x \in A^*$ and a pair $\rho \in
  {(A^*)}^2$, such that $R \subseteq \Id \cdot \big((x^*, x^*)\rho^{-1})$.
\end{prop}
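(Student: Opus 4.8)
The plan is to first pin down the shape of the target relations, then reduce the existential over $x$ and $\rho$ to a finite search, and finally decide each resulting instance. For the first part, write $\rho=(\rho_1,\rho_2)$. Since $(x^*,x^*)\rho^{-1}=\{(u,v)\mid u\rho_1\in x^*,\ v\rho_2\in x^*\}$, this set is empty unless each $\rho_i$ is a suffix of some power of $x$, in which case $\{u\mid u\rho_i\in x^*\}=x^*\pi_i$ where $\pi_i$ is the prefix of $x$ of length $|x|-(|\rho_i|\bmod|x|)$; adjoining extra full periods of $x$ to $\rho_i$ does not change this. Hence $\Id\cdot\big((x^*,x^*)\rho^{-1}\big)$ is empty or equals $T_{x,\pi_1,\pi_2}:=\{(wx^i\pi_1,\ wx^j\pi_2)\mid w\in A^*,\ i,j\ge 0\}$ for prefixes $\pi_1,\pi_2$ of $x$. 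Moreover, replacing $x$ by its primitive root only enlarges the available family: if $x=\pi^k$ and $\pi_1=\pi^a\pi_1'$, $\pi_2=\pi^b\pi_2'$ with $\pi_1',\pi_2'\preceq\pi$, then $T_{x,\pi_1,\pi_2}\subseteq T_{\pi,\pi_1',\pi_2'}$. So it suffices to decide whether $R\subseteq T_{x,\pi_1,\pi_2}$ for some \emph{primitive} $x$ and prefixes $\pi_1,\pi_2$ of $x$.

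The core of the argument is a combinatorial description of $T_{x,\pi_1,\pi_2}$. First, every pair $(wx^i\pi_1,wx^j\pi_2)$ is prefix-comparable: for $i\le j$ the first word is a prefix of the second (because $\pi_1\preceq x$), and symmetrically for $i\ge j$. A finer analysis, distinguishing $\pi_1=\pi_2$, $\pi_1\prec\pi_2$ and $\pi_2\prec\pi_1$, shows that the common prefix $w$ can always be chosen \emph{canonically} (essentially, $w$ is the first word stripped of its $\pi_1$-, resp.\ $\pi_2$-, suffix), which \emph{decouples} the two coordinates. Concretely, for $\pi_1\prec\pi_2$ one obtains
\[
  T_{x,\pi_1,\pi_2}=\{(p,q)\mid p\in A^*\pi_1,\ p\preceq q,\ p^{-1}q\in D\},
\]
where $D=\{\pi_1^{-1}\pi_2\}\cup(\pi_1^{-1}x)\,x^*\,\pi_2$ is a regular language computable from $x,\pi_1,\pi_2$; the case $\pi_2\prec\pi_1$ is symmetric, and $\pi_1=\pi_2=\pi$ yields a union of two such descriptions (with $D=(\pi^{-1}x\pi)^*$ and a $\pi$-suffix constraint on both coordinates). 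This is where combinatorics on words enters; the tools are the primitivity/conjugacy arguments already used for equalizer sets, together with Lemma~\ref{lem:uvxyst}.

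Next, the existential over the target is reduced to a finite search. If $R$ is finite and a solution exists, then either some valid $x$ has length at most the maximal length $N$ of a word occurring in $R$, or else every coordinate of every pair of $R$ is shorter than $x$, forcing $i=j=0$ throughout and making $\pi_1,\pi_2$ suffixes of fixed words of $R$ with a matching common prefix; in all cases $(x,\pi_1,\pi_2)$ ranges over an effectively bounded set. If $R$ is infinite, a transducer for it has a loop producing a pair $(\beta_1,\beta_2)\ne(1,1)$; pumping it gives $(\alpha_1\beta_1^n\gamma_1,\ \alpha_2\beta_2^n\gamma_2)\in R$ for all $n$, and for these to lie in $T_{x,\pi_1,\pi_2}$ a long block of $\beta_1^n$ (or of $\beta_2^n$) must overlap a long block of $x^{i_n}$, so by the Fine--Wilf theorem the primitive word $x$ is a conjugate of the primitive root of $\beta_1$ or of $\beta_2$. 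There are finitely many simple loops, hence finitely many candidate primitive $x$, each of bounded length, hence finitely many triples $(x,\pi_1,\pi_2)$ to try.

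Finally, one decides a single instance $R\subseteq T_{x,\pi_1,\pi_2}$. By the combinatorial description this splits into (i) a containment of $R$ in a \emph{recognizable} relation, namely the constraint $p\in A^*\pi_1$ (and, for $\pi_1=\pi_2$, its analogue on $q$), which is decidable since rational-in-recognizable containment is decidable (complement the recognizable relation, intersect with the rational $R$, test emptiness); and (ii) a test ``$R\subseteq\{(p,pd)\mid d\in D\}$'' for the regular language $D$ — a union of two such tests when $\pi_1=\pi_2$, handled after splitting $R$ by which coordinate is longer. For (ii) one first decides $R\subseteq{}\preceq$ and then checks that the regular language $\{p^{-1}q\mid(p,q)\in R\}$, which is effectively computable once $R\subseteq{}\preceq$ is known, is contained in $D$; both are decidable by routine automaton constructions on a transducer for $R$. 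Running over the finitely many candidates gives the procedure. The main obstacle is the combinatorial step of paragraph two: establishing the exact form of $T_{x,\pi_1,\pi_2}$, and in particular that the common prefix $w$ can always be normalized so that the two coordinates become independent — without this decoupling, ``$R\subseteq T_{x,\pi_1,\pi_2}$'' would remain a containment between two genuinely two-dimensional rational relations, which need not be decidable. A secondary point requiring care is the pumping argument, which must guarantee that the candidate set for $x$ is actually finite rather than merely that each valid $x$ is periodic.
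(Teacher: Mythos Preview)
Your approach is structurally reasonable but has a genuine gap in the search for candidate~$x$ (your fourth paragraph). The Fine--Wilf step does not go through: pumping a loop gives $(\alpha_1\beta_1^n\gamma_1,\alpha_2\beta_2^n\gamma_2)\in R$, but membership in $T_{x,\pi_1,\pi_2}$ does \emph{not} force a long $x^*$-block to appear inside either component. The witnessing decomposition $(w_n x^{i_n}\pi_1,\,w_n x^{j_n}\pi_2)$ may let $w_n$ absorb all of $\beta_k^n$, keeping $i_n,j_n$ bounded. Concretely, over $A=\{a,b\}$ take $R=\{(w,\,w\cdot ab)\mid w\in A^*\}$. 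Its one-state transducer has simple loops with outputs $(a,a)$ and $(b,b)$, so your candidate primitive words are $a$ and $b$. Neither works: for instance $(1,ab)\notin T_{a,\pi_1,\pi_2}$ for any prefixes $\pi_1,\pi_2$ of $a$. Yet $R\subseteq \Id\cdot\big((ab)^*,(ab)^*\big)$, so the correct answer is ``yes'' with $x=ab$; your procedure would say ``no''. The defect is not Fine--Wilf itself but the premise that a long overlap must occur.

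There is also a smaller error in your description of $T_{x,\pi_1,\pi_2}$: when $\pi_1\prec\pi_2$ it is not true that every $(p,q)\in T_{x,\pi_1,\pi_2}$ satisfies $p\preceq q$. With $x=ab$, $\pi_1=1$, $\pi_2=a$ one has $(ab,a)\in T_{x,\pi_1,\pi_2}$ (take $w=1,i=1,j=0$) but $ab\not\preceq a$. The correct decoupling is: $(p,q)\in T_{x,\pi_1,\pi_2}$ iff $p\in A^*\pi_1$, $q\in A^*\pi_2$, and, writing $p'=p\pi_1^{-1}$, $q'=q\pi_2^{-1}$, one of $p'^{-1}q'$, $q'^{-1}p'$ lies in $x^*$. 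This is fixable, but your step~(ii) then relies on two facts you call routine---that $R\subseteq{\preceq}$ is decidable for rational $R$, and that $\{p^{-1}q\mid (p,q)\in R\}$ is an effectively computable regular language---neither of which is immediate and each deserves an argument.

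For comparison, the paper avoids all of this. It first solves the subproblem ``given $x$, is $R\subseteq\Id\cdot(x^*,x^*)$?'' by composing $R$ componentwise with the rational function that strips the maximal $x^*$-suffix and then testing inclusion in $\Id$ (classically decidable). For the existential, it does not enumerate loops: it runs the $R\subseteq\Id$ test, and from a returned witness $(u,v)\in R\setminus\Id$ extracts finitely many candidates for $x\rho_1^{-1}$ (suffixes of $u$) and then for $x$ (roots of the difference after stripping), arguing that if a valid $(x,\rho)$ exists, one of these candidates succeeds. This sidesteps both the Fine--Wilf issue and the combinatorial analysis of $T_{x,\pi_1,\pi_2}$.
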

\begin{proof}
  We rely on the classical result that it is decidable whether a rational
  relation is included in the identity~\cite[p.~650]{sakarovitch09}.

  We first tackle a related, simpler decision problem: Given a rational relation
  $R \subseteq (A^* \times A^*)$ and a word $x \in A^*$, check whether
  $R \subseteq \Id \cdot (x^*, x^*)$.  Write $f\colon A^* \to A^*$ for the
  function that removes the longest suffix in $x^*$ of its argument, and note
  that $f$ is a rational function.  Closure under inverse and composition of
  rational relations implies that $R' = \{(f(u), f(v)) \mid (u, v) \in R\}$ is a
  rational relation computable from $R$.  We have that $R' \subseteq \Id$ if and
  only if $R \subseteq \Id \cdot (x^*, x^*)$, hence the decision problem at hand
  is equivalent to checking whether $R' \subseteq \Id$, which is decidable.

  We now reduce the main decision problem to the previous one.


  First, we note that if a solution \((x, \rho)\) exists, then there is another
  solution \((\dot x, \dot \rho)\) with one component of \(\dot\rho\) empty.
  Indeed, write $x' = x\rho_1^{-1}, x''=x\rho_2^{-1}$.  Assume $x'$ is also a
  prefix of $x''$ (the symmetric case being similar).  We may thus write
  $x = x'y$ and $x'' = x'z$, and have that
  $R \subseteq \Id\cdot\big({(yx')}^*, {(yx')}^*z\big)$, showing that
  $\dot x := yx'$ and $\dot \rho := (1, z^{-1}yx')$ fit the requirements.

  We thus task ourselves with finding a solution \((x, \rho)\) with $\rho_2$
  empty (the symmetric case being similar).  We first check that
  $R \subseteq \Id$.  If this is not the case, we can compute a pair
  $(u, v) \in R \setminus \Id$ (again by~\cite[p.~650]{sakarovitch09}).  All the
  suffixes of $u$ are candidates for $x\rho_1^{-1}$; we go through all these
  candidates $x'$ (including the empty word).  We say that \(x'\) is a \emph{valid
    choice} if it stems from a valid solution \((x, \rho)\).

  Next, we verify that all pairs $(u, v) \in R$ are such that $u$ ends with $x'$
  (this is decidable, e.g., since
  $R \cap \big((A^*x')\comp \times A^*\big) = \emptyset$ is
  decidable~\cite[Proposition 2.6, Proposition 8.2]{berstel79}).  If this is not
  the case, then $x'$ is not a valid choice.  Otherwise, let us write
  $R' = R\cdot {(x',1)}^{-1}$, a rational relation.

  We now check again that $R' \subseteq \Id$; if it is the case, we are done and
  values of \((x, \rho)\) can be deduced.  Otherwise, we are given a pair
  $(u, v) \in R' \setminus \Id$.  Now if \(x'\) is a valid choice, then either
  $u$ is a prefix of $v$, or vice-versa.  In the former case, write
  $v = u\cdot z$; if \(x'\) is a valid choice, then $z \in x^*$, and this provides
  us with candidates for $x$: all the possible roots $z'$ of $z$.  We may now
  test that one such $z'$ starts with $x'$, and check whether
  $R \subseteq \Id\cdot(z'^*, z'^*)$ using the above decision problem.  If this
  holds, then there do exist an $x$ and a $\rho$ satisfying
  $R \subseteq \Id\cdot\big((x^*, x^*)\rho^{-1}\big)$.  Moreover, if such words
  exist, this procedure will find them.
\end{proof}

\begin{rem}\label{rk:guillon}
  In general, the problem of deciding, given a rational relation $R$ and a
  \emph{recognizable} relation $K$, whether $R \subseteq \Id \cdot K$, is
  undecidable.  Indeed, testing $R \cap \Id = \emptyset$ is
  undecidable~\cite{berstel79}, and equivalent to testing:
  \[R \subseteq \Id \cdot \big((A^+ \times \{1\}) \cup (\{1\} \times A^+) \cup
    \bigcup_{a \neq b \in A} (a\cdot A^* \times b\cdot A^*)\big)\enspace,\] the right-hand side
  being of the form $\Id \cdot K$.
  \fullstop
\end{rem}

\begin{thm}
  It is decidable, given an unambiguous transducer, whether it realizes an
  $\vA$_continuous function.
\end{thm}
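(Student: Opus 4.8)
The plan is to decide the four conditions of Lemma~\ref{lem:apred}, which together characterize $\vA$_continuity without any reference to equations or profinite words. Condition~1 is handled immediately: $\tau^{-1}(B^*)$ is the domain of $\tau$, hence an effectively computable regular language, and membership of a regular language in $\vA(A^*)$ is decidable by the classical aperiodicity test on the syntactic monoid.

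The remaining conditions quantify over pertaining triplets and over their type, so the next step is to show that the set of pertaining triplets, together with which of them admit an empty, a full, and a degenerate witness, is finite and computable. Whether $(p, q, q')$ is pertaining is the existence of $s$, $u$, $t$ and $n$ realizing the diagram of Definition~\ref{def:pert}, which unravels into: $p$ and $q$ are simultaneously reachable from an initial state on a common input; $p$ and $q'$ are simultaneously co-reachable to a final state on a common input; and there are a word $u$ and an integer $n$ with $u^n$ looping on $p$, $u$ sending $q$ to $q'$, and $u^{n-1}$ sending $q'$ back to $q$. Each clause is a reachability or loop-detection question in a suitable product of two or three copies of $\tau$, and by additionally tracking whether a nonempty output symbol has been produced one also decides, for each pertaining triplet, whether $\beta$ and $\beta'\beta''$ of Definition~\ref{def:pert} can be made empty and whether they can be made nonempty. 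If some pertaining triplet admits a degenerate witness, condition~4 fails and $\tau$ is not $\vA$_continuous; otherwise condition~4 holds, and we proceed.

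For conditions~2 and~3 I would then work state-triple by state-triple, noting that the inclusions to be checked depend only on $p$, $q$, $q'$ (not on the witness), and that the input synchronizations $\Tif{\tau, \nop, p} \sync \Tif{\tau, \nop, q}$ and $\Tif{\tau, p, \nop} \sync \Tif{\tau, q', \nop}$ are rational relations effectively computable from $\tau$. For a triplet admitting a full witness, condition~2 asks whether the first of these relations is contained in some $\Id\cdot\big((x^*, x^*)\rho^{-1}\big)$ and whether the component-wise reversal of the second --- again a rational relation --- is contained in one of the same shape; this is exactly the problem shown decidable in Proposition~\ref{prop:dectrans}. For a triplet admitting an empty witness, condition~3 asks whether the component-wise concatenation of the two relations above, which is rational by closure of rational relations under product, is included in the identity, and inclusion of a rational relation in the identity is decidable.

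The genuinely delicate point sits upstream of this argument rather than in it: inclusion of a rational relation in an arbitrary recognizable relation is undecidable (Remark~\ref{rk:guillon}), so what makes the theorem go through is precisely that Lemma~\ref{lem:apred} manages to phrase $\vA$_continuity in terms of finitely many inclusions of the very restricted shape $\Id\cdot\big((x^*, x^*)\rho^{-1}\big)$, for which Proposition~\ref{prop:dectrans} supplies a decision procedure. With those two results in hand, the present statement is the routine act of assembling them, together with the structural enumeration of pertaining triplets sketched above.
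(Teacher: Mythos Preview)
Your proposal is correct and follows essentially the same route as the paper: reduce to the four conditions of Lemma~\ref{lem:apred}, enumerate the finitely many pertaining triplets and their types via product-automaton reachability, and discharge the inclusions using Proposition~\ref{prop:dectrans} (and its mirror via reversal) for full triplets and the decidability of inclusion in $\Id$ for empty triplets. Your write-up is in fact more explicit than the paper's on how to compute the pertaining triplets and their empty/full/degenerate status, and you correctly observe that the inclusions in conditions~2 and~3 depend only on the states, not on the witnessing $s,u,t,n$.
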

\begin{proof}
  This is a consequence of Lemma~\ref{lem:apred}: Given a transducer, one can
  list all its pertaining triplets, and whether they are empty, full, or degenerate.
  For full pertaining triplets, the property of Lemma~\ref{lem:apred} is
  checked with Proposition~\ref{prop:dectrans} and the same Proposition applied
  on the reverse of the transducer.  The property for empty triplets can be
  checked since the inclusion of a rational relation in $\Id$ is decidable.
\end{proof}

The rest of this section focuses on conditions \emph{à la} Lemma~\ref{lem:apred}
for $\vJ$, $\vR$, $\vL$, and $\vDA$.  In each of these cases, we define the
proper notion of ``pertaining'' and rewrite the conditions of
Lemma~\ref{lem:apred} to match the defining equations.  Since the proofs are
simple variants of that of Lemma~\ref{lem:apred}, we omit them; we note that in
each case, the conditions are effectively verifiable.

\subsubsection{The case of $\vJ$}
We use a different set of equations to define $\vJ$, that can easily be proved
to be equivalent to the one given in the Preliminaries.  Specifically, $\vJ$ is
defined over any alphabet $A$ by the set of equations $x^\omega = y\cdot
x^\omega \cdot z$, with $y, z \in \cts(x)$.  The definition of ``pertaining''
then reads as follows:

\begin{defi}
  For two alphabets $C, D$, a quadruplet of states $(p, q, q', q'')$ is
  \emph{$(C, D)$_pertaining} if there are words $s, u, t$ with $\cts(u) = C$,
  words $z, z' \in C^*$, and an integer $n$ such that:
  \begin{center}
    \begin{tikzpicture}[>=stealth,]

  \node (q0) {$I$};
  \node (p) at ($(q0.center)+(4,0.5)$)[anchor=south, state]  {$p$};
  \node (q) at ($(q0.center)+(1,-1.75)$)[anchor=south, state ]  {$q$};
  \node (qp) at ($(q0.center)+(4,-1.75)$)[anchor=south, state]  {$q'$};
  \node (qpp) at ($(q0.center)+(7,-1.75)$)[anchor=south, state]  {$q''$};
  \node (F) at ($(q0.center)+(8,0.0)$)[]  {$F$};
  \draw[->] (q0) -- (p) node[pos=0.5, above] {$s\mid\cdot$};
  \draw[->] (q0) -- (q) node[pos=0.4, below, xshift=-7pt] {$s\mid\cdot$};
  \draw[->] (p) -- (F) node[pos=0.5, above] {$t\mid\cdot$};
  \draw[->] (qpp) -- (F) node[pos=0.6, below,xshift=10pt] {$t\mid\cdot$} ;
  \path[->] (q) edge node (z1) [above] {$z \mid \cdot$} (qp) ;

  \path[->] (p) edge[loop,looseness=5] node [above] {$u^n\mid \beta $} (p);
 (q);
  \path[->] (qp) edge[loop below,looseness=5,out=-50, in=-130] node[below]
  {$u^n\mid \beta'$} (qp);
  \path[->] (qp) edge node (z2)[above] {$z'\mid  \cdot$} (qpp);
  \node (inz1) at (z1.north west) [ rotate=90, yshift=-7pt, xshift=2pt, color=black!70] {$\in$};
  \node (linz1) at (inz1.east) [ yshift=3pt,xshift=5pt,color=black!70] {$\cts(u)^*$};
  \node (inz2) at (z2.north west) [ rotate=90, yshift=-7pt, xshift=2pt, color=black!70] {$\in$};
  \node (linz2) at (inz2.east) [ yshift=3pt,xshift=5pt,color=black!70] {$\cts(u)^*$};
\end{tikzpicture}

  \end{center}
  and moreover $\cts(\beta) \cup \cts(\beta') = D$.  The pertaining quadruplet
  is \emph{empty} if $D = \emptyset$; it is \emph{full} if
  $\cts(\beta) = \cts(\beta') \neq \emptyset$, and \emph{degenerate} otherwise.
\end{defi}

\begin{lem}
  A transducer $\tau\colon A^* \to B^*$ is $\vJ$_continuous iff all of the
  following hold:
  \begin{enumerate}
  \item $\tau^{-1}(B^*) \in \vJ(A^*)$;
  \item For all full $(C, D)$_pertaining quadruplets $(p, q, q', q'')$:
    \begin{align*}
      \Tif{\tau, \nop, p} \sync \Big(\Tif{\tau, \nop, q} \cdot \big(\eps, \Tif{\tau,
      q, q'}(C^*)\big) \cdot \Tif{\tau, q', q'}\Big) & \subseteq \Id \cdot (D^*,
                                                       D^*)\enspace\text{and}\\
      \Tif{\tau, p, \nop} \sync \Big(\Tif{\tau, q', q'} \cdot \big(\eps,
      \Tif{\tau, q', q''}(C^*)\big) \cdot \Tif{\tau, q'', \nop}\Big) & \subseteq
                                                                       (D^*,
                                                                       D^*)\cdot \Id\enspace;
    \end{align*}
  \item For all empty $(C, D)$_pertaining quadruplets $(p, q, q', q'')$:
    \[(\Tif{\tau, \nop, p} \cdot \Tif{\tau, p, \nop}) \sync
      \Big(\Tif{\tau, \nop, q} \cdot \big(\eps, \Tif{\tau, q, q'}(C^*)\big)
      \cdot \Tif{\tau, q', q'} \cdot \big(\eps, \Tif{\tau, q', q''}(C^*)\big)
      \cdot \Tif{\tau, q'', \nop}\Big) \subseteq \Id\enspace;\]
  \item No pertaining quadruplet is degenerate.
  \end{enumerate}
\end{lem}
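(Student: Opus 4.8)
The plan is to replay the proof of Lemma~\ref{lem:apred} with two systematic substitutions: the equation family $x^{\omega}=x^{\omega+1}$ of $\vA$ is replaced by the (iterated) family $w^{\omega}=z\cdot w^{\omega}\cdot z'$ with $z,z'\in\cts(w)^{*}$ --- which still defines $\vJ(A^{*})$ --- and the ``primitive root'' rigidity of the aperiodic toolbox is replaced by ``content'' ($\cts$) rigidity. The backbone is again the Syncing Lemma: $\tau$ is $\vJ$_continuous iff $\tau^{-1}(B^{*})\in\vJ(A^{*})$ (point~(1)) and, for every such equation and every admissible choice of states, an input-synchronization is included in the corresponding equalizer set. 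One instance of the second point consists of a state $p$ carrying a $w$_loop that outputs $\beta$, together with a chain $q,q',q''$ in which $q\to q'$ reads some $z\in\cts(w)^{*}$, $q'$ carries a $w$_loop outputting $\beta'$, and $q'\to q''$ reads some $z'\in\cts(w)^{*}$, with $p,q$ reachable from a common input $s$ and $p,q''$ leading to $F$ on a common input $t$; this is exactly what the definition above packages as a $(C,D)$_pertaining quadruplet, with $C=\cts(w)$ and $D=\cts(\beta)\cup\cts(\beta')$. The crucial simplification over the aperiodic case is that, in $\vJ$, the profinite word $w^{\omega}$ depends only on $\cts(w)$ (indeed $w^{\omega}=_{\vJ}w'^{\omega}$ whenever $\cts(w)=\cts(w')$, which follows from $w^{\omega}z''w^{\omega}=_{\vJ}w^{\omega}$ for $z''\in\cts(w)^{*}$ by passing to a limit), so the ``conjugacy of roots'' bookkeeping of Lemma~\ref{lem:apred} collapses to pure content bookkeeping; this is what turns the factors ``$(x^{*},x^{*})\rho^{-1}$'' of Lemma~\ref{lem:apred} into the coarser ``$(D^{*},D^{*})$''.

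For the ``only if'' direction, assume $\tau$ is $\vJ$_continuous; point~(1) is immediate from the Syncing Lemma. Given a $(C,D)$_pertaining quadruplet, use Lemmas~\ref{lem:eval} and~\ref{lem:omega} to write $\tau(s\cdot w^{\omega}\cdot t)$ and $\tau(s\cdot z\cdot w^{\omega}\cdot z'\cdot t)$ as $\sigma_{1}\cdot\xi^{\omega-1}\cdot\sigma_{2}$ and $\sigma_{1}'\cdot\xi'^{\omega-1}\cdot\sigma_{2}'$, where $\xi$ has content $\cts(\beta)$ and $\xi'$ has content $\cts(\beta')$; the Preservation Lemma makes these equal in $\vJ$. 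A profinite word $\sigma\cdot\xi^{\omega-1}\cdot\sigma'$ with $\xi\neq 1$ admits $a^{k}$ as a scattered subword for every $k$ (with $a\in\cts(\xi)$), hence lies in the closure of every $\vJ$_language ``at least $k$ occurrences of $a$''; a fixed finite word does not, so no finite word is $\vJ$_equal to it. Thus if exactly one of $\beta,\beta'$ is empty the two sides cannot be $\vJ$_equal, forcing point~(4). If both are empty both outputs are finite words, and their equality, unwound through the synchronized $s$ and $t$ segments and through the choices of $z,z'$ (encoded by the $(\eps,\Tif{\tau,q,q'}(C^{*}))$ and $(\eps,\Tif{\tau,q',q''}(C^{*}))$ factors), is precisely the inclusion in $\Id$ of point~(3). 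If both are nonempty then $\xi^{\omega-1}=_{\vJ}\xi'^{\omega-1}$ (both have content $D$), so one must solve $\sigma_{1}\cdot D^{\omega}\cdot\sigma_{2}=_{\vJ}\sigma_{1}'\cdot D^{\omega}\cdot\sigma_{2}'$; since the central $D^{\omega}$ decouples the two ends, the equalizer of these profinite words factors, as a relation on quadruples, as $\bigl(\Id\cdot(D^{*},D^{*})\bigr)\times\bigl((D^{*},D^{*})\cdot\Id\bigr)$, and feeding in the synchronized $s$ and $t$ parts gives exactly point~(2).

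For the ``if'' direction, replay the ``if'' part of Lemma~\ref{lem:apred}. Given an equation $w^{\omega}=z\cdot w^{\omega}\cdot z'$, states $p,q$ reached from a common input $s$, states $p'\in p.w^{\omega}$, $q'\in q.(z\cdot w^{\omega}\cdot z')$, and a common input $t$ from $p',q'$ to $F$, pump on the product automaton of $\tau$ with itself to locate an internal state $P$ on a $w$_loop of the left branch and internal states $Q'$ (on a $w$_loop) and $Q''$ on the right branch, with some $Q$, so that $(P,Q,Q',Q'')$ is a $(C,D)$_pertaining quadruplet --- the original $z,z'$ and the leading/trailing $w$_segments being absorbed into the quadruplet's connecting words, since $z,z',w^{i}\in\cts(w)^{*}$. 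Point~(4) forbids the degenerate case; in the empty case point~(3) gives the required output equality directly; in the full case the two sides of the target equation read $\alpha\cdot D^{\omega}\cdot\gamma$ versus $\alpha'\cdot D^{\omega}\cdot\gamma'$ (using again that $w^{\omega}$ depends only on content), and point~(2) asserts exactly that $\alpha,\alpha'$ agree modulo a $D^{*}$_tail and $\gamma,\gamma'$ modulo a $D^{*}$_head, whence the equality holds in $\vJ$. In all cases the Syncing Lemma then yields $\vJ$_continuity. Finally, all four conditions are effectively checkable: inclusion in $\Id$ is decidable~\cite{sakarovitch09}, and inclusion of a rational relation in $\Id\cdot(D^{*},D^{*})$ reduces to one in $\Id$ after composing with the rational function stripping the longest $D^{*}$_suffix (and symmetrically for $(D^{*},D^{*})\cdot\Id$), just as in the proof of Proposition~\ref{prop:dectrans}, while there are only finitely many pertaining quadruplets to inspect.

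The main obstacle is the full case: pinning down the $\vJ$_version of the equalizer-set description --- the role played for $\vA$ by Lemmas~\ref{lem:apeq}, \ref{lem:apeqomega} and~\ref{lem:uvxyst}. The point is that $\vJ$ is \emph{not} supercancellative, so one cannot cancel letters or extract primitive roots; instead everything is routed through the content-absorption identity $w^{\omega}z''w^{\omega}=_{\vJ}w^{\omega}$ and the fact that $\vJ$_equality of (profinite) words is governed by scattered-subword sets, which is precisely why $w^{\omega}$ depends only on $\cts(w)$ and why the rigid factors of Lemma~\ref{lem:apred} coarsen to $(D^{*},D^{*})$. Once this factorization is in place, the degeneracy bookkeeping (point~(4)) and the verification, in the ``if'' direction, that the pumped states really form a pertaining quadruplet are routine adaptations of the $\vA$ proof.
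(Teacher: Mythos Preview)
Your sketch follows the approach the paper intends (the paper omits the proof, calling it a ``simple variant'' of Lemma~\ref{lem:apred}), and your key structural observations are correct: in $\vJ$ the idempotent $w^{\omega}$ depends only on $\cts(w)$, and the equalizer of the content-$D$ idempotent with itself factors as $\bigl(\Id\cdot(D^{*},D^{*})\bigr)\times\bigl((D^{*},D^{*})\cdot\Id\bigr)$. The latter deserves a line of justification (strip the maximal $D^{*}$ suffix from each left word and the maximal $D^{*}$ prefix from each right word; the resulting normalized words are recovered from the scattered-subword set of $u\cdot e_D\cdot v$, forcing equality), but it is true.

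There is, however, a real gap in your treatment of point~(4). You argue that degeneracy means ``exactly one of $\beta,\beta'$ is empty'' and rule this out via the infinite-versus-finite scattered-subword argument. But by the definition of a $(C,D)$_pertaining quadruplet, \emph{full} requires $\cts(\beta)=\cts(\beta')\neq\emptyset$; hence \emph{degenerate} also includes the case where both $\beta$ and $\beta'$ are nonempty yet $\cts(\beta)\neq\cts(\beta')$. Your ``only if'' argument silently assumes this cannot occur when you write ``$\xi^{\omega-1}=_{\vJ}\xi'^{\omega-1}$ (both have content $D$)'' for the nonempty case. The missing step is a small variant of the one you already gave: if $a\in\cts(\beta)\setminus\cts(\beta')$, then the left output $\sigma_{1}\cdot\xi^{\omega-1}\cdot\sigma_{2}$ admits $a^{k}$ as a scattered subword for every $k$, while the right output admits only boundedly many $a$'s (since $a\notin\cts(\xi')$ and the remaining factors are finite words), so the two cannot be $\vJ$_equal. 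Without this case, point~(4) is not established, and your subsequent full-case analysis---which relies on $\cts(\xi)=\cts(\xi')=D$---is left unjustified.
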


\subsubsection{The case of $\vR$}

Again, we slightly diverge from the usual equations for $\vR$, as presented in
the Preliminaries.  Indeed, $\vR$ is also defined, over any alphabet $A$, by
$x^\omega = x^\omega \cdot y$ with $y \in \cts(x)$.  We turn to the definition
of ``pertaining:''

\begin{defi}
  For an alphabet $C$, a triplet of states $(p, q, q')$ is
  \emph{$C$_pertaining} if there are words $s, u, t$ with $\cts(u) = C$,
  words $z \in C^*$, and an integer $n$ such that:
  \begin{center}
    \begin{tikzpicture}[>=stealth,bend angle=40]
  \node (q0) {$I$};
  \node (p) at ($(q0.center)+(3,0.5)$)[anchor=south,state]  {$p$};
  \node (q) at ($(q0.center)+(1.5,-1.5)$)[anchor=south,state ]  {$q$};
  \node (qp) at ($(q0.center)+(4.5,-1.5)$)[anchor=south,state]  {$q'$};
  \node (F) at ($(q0.center)+(6,0.0)$)[]  {$F$};
  \draw[->] (q0) -- (p) node[pos=0.5, above] {$s\mid\cdot$};
  \draw[->] (q0) -- (q) node[pos=0.4, below, xshift=-5pt] {$s\mid\cdot$};
  \draw[->] (p) -- (F) node[pos=0.5, above] {$t\mid\cdot$};
  \draw[->] (qp) -- (F) node[pos=0.6, below,xshift=10pt] {$t\mid\cdot$};
  \path[->] (q) edge node (z1) [above] {$z\mid \cdot$} (qp) ;
    \node (inz1) at (z1.north west) [ rotate=90, yshift=-7pt, xshift=2pt, color=black!70] {$\in$};
  \node (linz1) at (inz1.east) [ yshift=3pt,xshift=5pt,color=black!70] {$\cts(u)^*$};
  \path[->] (p) edge[loop,looseness=5] node[above] {$u^n\mid \beta$} (p);
  \path[->] (q) edge[loop below, looseness=6,out=-50, in=-130] node[below] {$u^n\mid \beta'$} (q);

\end{tikzpicture}
  \end{center}
  The pertaining triplet is \emph{empty} if, in the above picture,
  $\beta = \beta' = 1$; it is \emph{full} if none of $\beta, \beta'$ is empty,
  and \emph{degenerate} otherwise.
\end{defi}

\begin{lem}
  A transducer $\tau\colon A^* \to B^*$ is $\vR$_continuous iff all of the
  following hold:
  \begin{enumerate}
  \item $\tau^{-1}(B^*) \in \vR(A^*)$;
  \item For all full $C$_pertaining triplets $(p, q, q')$, there exist $x \in
    B^*$ and $\rho \in {(B^*)}^2$ such that both inclusions hold:
    \begin{align*}
      \Tif{\tau, \nop, p} \sync \Tif{\tau, \nop, q} & \subseteq \Id\cdot
                                                      \big((x^*, x^*) \rho^{-1}\big)\enspace,\\
      \Tif{\tau, p, \nop}
      \sync \Big(\Tif{\tau, q, q} \cdot \big(\eps, \Tif{\tau, q, q'}(C^*)\big)
      \cdot \Tif{\tau, q', \nop}\Big) & \subseteq \big({\cts(x)}^*, {\cts(x)}^*\big)
                                        \cdot \Id\enspace;
    \end{align*}
  \item For all empty $C$_pertaining triplets $(p, q, q')$:
    \[(\Tif{\tau, \nop, p} \cdot \Tif{\tau, p, \nop}) \sync
      \Big(\Tif{\tau, \nop, q} \cdot \big(\eps, \Tif{\tau, q, q'}(C^*)\big)
      \cdot \Tif{\tau, q', \nop}\Big) \subseteq \Id\enspace;\]
  \item No pertaining triplet is degenerate.
  \end{enumerate}
\end{lem}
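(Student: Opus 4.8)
The plan is to adapt, essentially verbatim, the proof of Lemma~\ref{lem:apred}, replacing the aperiodicity equations $x^\omega = x^{\omega+1}$ by the equation scheme $x^\omega = x^\omega \cdot y$ with $y \in \cts(x)$, which (as is routine to check against the equations of the Preliminaries) defines $\vR(A^*)$ over every alphabet. As there, everything is routed through the Syncing Lemma. The one structural difference with the $\vA$ case is that $\vR$ is \emph{not} supercancellative, so Lemma~\ref{lem:apeq} is unavailable; it is replaced by the ``one-sided'' absorption property of $\vR$: for a word $z$ one has $x^\omega \cdot z =_\vR x^\omega$ exactly when $\cts(z) \subseteq \cts(x)$, and more generally $s \cdot u^\omega \cdot t =_\vR s' \cdot v^\omega \cdot t'$ forces genuine matching of $s$ against $s'$ (up to the primitive-root data of $u, v$) but only a $\cts$-level matching of $t$ against $t'$. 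This is precisely what produces the asymmetry in point~2 of the statement, where the left input synchronization must land in $\Id\cdot((x^*,x^*)\rho^{-1})$ while the right one need only land in $(\cts(x)^*,\cts(x)^*)\cdot\Id$.

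\emph{Only if.} Assume $\tau$ is $\vR$-continuous and invoke the Syncing Lemma; point~1 of the statement is its first clause. Given a $C$-pertaining triplet $(p,q,q')$ with witnesses $s, u, t, z, n$ and loop outputs $\beta$ on $p$ and $\beta'$ on $q$ (together with the output of the $z$-edge from $q$ to $q'$), consider the profinite inputs $s \cdot u^\omega \cdot t$ routed through $p$ and $s \cdot u^\omega \cdot z \cdot t$ routed through $q \to q'$; since $z \in \cts(u)^*$ these are equal in $\vR$, so by the Preservation Lemma their $\tau$-images are $\vR$-equal. Evaluating these images with the $\omega$-term bookkeeping of Lemma~\ref{lem:omega} and Lemma~\ref{lem:eval}, and reading off what $\vR$-equality of the resulting $s\cdot\nu^{\omega-1}\cdot t$-shaped words means via the absorption property above, yields exactly the inclusions of point~2 in the full case and of point~3 in the empty case. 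For point~4, if a pertaining triplet is degenerate then one of the two images is a genuinely infinite (non-word) profinite element while the other is a finite word; no such pair is $\vR$-equal, contradicting the Preservation Lemma.

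\emph{If.} Conversely, assume the four conditions and verify clause~2 of the Syncing Lemma for each equation $x^\omega = x^\omega \cdot y$, $y \in \cts(x)$. Given states $p, q, p', q'$ and words $s, t$ with $p' \in p.x^\omega$, $q' \in q.(x^\omega y)$, $p, q \in q_0.s$, and $p'.t, q'.t$ final, a pumping argument on the square automaton $\tau \times \tau$ extracts states $P, Q, Q'$ lying on $x$-loops, with $Q' \in Q.x$, $Q \in Q'.x^{N-1}$, and such that $(P, Q, Q')$ is $C$-pertaining for $C = \cts(x)$, the $z$-edge absorbing the trailing $y$ and the loop offsets (note $x^{j}y \in \cts(x)^*$). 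Exactly as in Lemma~\ref{lem:apred}, the equation to be proved reduces to $\alpha\cdot\beta^\omega\cdot\gamma =_\vR \alpha'\cdot\beta'^\omega\cdot\gamma'$. If $(P,Q,Q')$ is empty, condition~3 gives this at once; if it is full, condition~2 forces $\alpha, \alpha'$ to be compatible up to powers of a common primitive word and $\gamma, \gamma'$ to be compatible at the $\cts$-level, and a one-sided analogue of the combinatorics-on-words Lemma~\ref{lem:uvxyst} then reduces both sides to the same $x^\omega$-form after right-absorption. The prefix/suffix case distinctions are handled as in Lemma~\ref{lem:apred}, and condition~4 guarantees the degenerate case never occurs. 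Effectiveness follows since there are finitely many pertaining triplets, the $\Id\cdot((x^*,x^*)\rho^{-1})$-inclusion is decidable by Proposition~\ref{prop:dectrans}, and the $(\cts(x)^*,\cts(x)^*)\cdot\Id$- and $\Id$-inclusions reduce to inclusion of a rational relation in the identity.

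\emph{Main obstacle.} The delicate part is the ``if'' direction: pinning down the \emph{exact} shape of the inclusions in point~2 — in particular why only $\cts(x)^*$, not $x^*$, is needed on the right, which is the footprint of one-sidedness of $\vR$ — and then carrying out the combinatorial reduction so that the residual equation is visibly true in $\vR$, where ``true in $\vR$'' means both sides collapse to the same $x^\omega$-type word under right-absorption rather than being literally equal. The several prefix/suffix subcases compound this; checking that the $C$-pertaining notion captures precisely the triplets produced by the Syncing Lemma, so that the finite test is sound and complete, is the other point requiring care.
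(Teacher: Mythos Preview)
Your approach is exactly what the paper does: it omits the proof entirely, stating only that it is a simple variant of Lemma~\ref{lem:apred} (the $\vA$ case), and you carry out precisely that adaptation, correctly identifying the one-sidedness of $\vR$ as the source of the asymmetric shape of condition~2. One minor slip: in the ``if'' direction you describe $Q'$ via the $\vA$-style loop ($Q' \in Q.x$, $Q \in Q'.x^{N-1}$), but in the $\vR$-pertaining definition the third state is simply the target of the $z$-edge and carries no loop --- the natural choice is to take $P = p'$, $Q \in q.x^\omega$, and the pertaining third state equal to the Syncing Lemma's $q'$, with $z = y \in \cts(x)^*$.
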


\noindent
Note that the $x$ of Proposition~\ref{prop:dectrans} can be effectively found.
The case of $\vL$ can be simply seen as the reversal of the previous case.

\subsubsection{The case of $\vDA$}

Similarly, we use a slightly less standard equational definition of $\vDA$.
Indeed, $\vDA$ is also defined, over any alphabet $A$, by $x^\omega = x^\omega
\cdot y \cdot x^\omega$ with $y \in \cts(x)$.  The definition of ``pertaining''
reflects these equations:
\begin{defi}
  For an alphabet $C$, a triplet of states $(p, q, q')$ is \emph{$C$_pertaining} if
  there are words $s, u, t$ with $\cts(u) = C$, a word $z \in C^*$, and an
  integer $n$ such that:

  \begin{center}
    \begin{tikzpicture}[>=stealth,bend angle=40]
  \node (q0) {$I$};
  \node (p) at ($(q0.center)+(3,0.5)$)[anchor=south,state]  {$p$};
  \node (q) at ($(q0.center)+(1.5,-1.5)$)[anchor=south,state ]  {$q$};
  \node (qp) at ($(q0.center)+(4.5,-1.5)$)[anchor=south,state]  {$q'$};
  \node (F) at ($(q0.center)+(6,0.0)$)[]  {$F$};
  \draw[->] (q0) -- (p) node[pos=0.5, above] {$s\mid\cdot$};
  \draw[->] (q0) -- (q) node[pos=0.4, below, xshift=-5pt] {$s\mid\cdot$};
  \draw[->] (p) -- (F) node[pos=0.5, above] {$t\mid\cdot$};
  \draw[->] (qp) -- (F) node[pos=0.6, below,xshift=10pt] {$t\mid\cdot$};
  \path[->] (q) edge node (z1) [above] {$z\mid \cdot$} (qp) ;
    \node (inz1) at (z1.north west) [ rotate=90, yshift=-7pt, xshift=2pt, color=black!70] {$\in$};
  \node (linz1) at (inz1.east) [ yshift=3pt,xshift=5pt,color=black!70] {$\cts(u)^*$};
  \path[->] (p) edge[loop,looseness=5] node[above] {$u^n\mid \beta$} (p);
  \path[->] (q) edge[loop below, looseness=6,out=-50, in=-130] node[below] {$u^n\mid \beta'$} (q);
  \path[->] (qp) edge[loop below,looseness=5,out=-50, in=-130] node[below] {$u^n\mid \beta''$} (qp);

\end{tikzpicture}
  \end{center}

  Further, a pertaining triplet is \emph{empty} if, in the above picture,
  $\beta = \beta' = \beta'' = 1$; it is \emph{left-empty} if only $\beta'$ is
  empty, \emph{right-empty} if only $\beta''$ is empty, \emph{full} if none of
  $\beta, \beta', \beta''$ is empty, and \emph{degenerate} in the other cases.
\end{defi}

\begin{lem}
  A transducer $\tau\colon A^* \to B^*$ is $\vDA$_continuous iff all of the
  following hold:
  \begin{enumerate}
  \item $\tau^{-1}(B^*) \in \vDA(A^*)$;
  \item For all full $C$_pertaining triplets $(p, q, q')$, there exist
    $x, y \in B^*$ and $\rho_1, \rho_2 \in {(B^*)}^2$ such that these three
    inclusions hold:
    \begin{align*}
      \Tif{\tau, \nop, p} \sync \Tif{\tau, \nop, q} & \subseteq \Id\cdot \big((x^*,
                                                      x^*)\rho_1^{-1}\big)\enspace,\\
      \Tif{\tau, p, \nop}
      \sync \Tif{\tau, q', \nop} & \subseteq \big(\rho_2^{-1}(y^*, y^*)\big)\cdot
                                   \Id\enspace,\\
      \Tif{\tau, q, q'}(C^*) &\subseteq {\cts(x\cdot
                               y)}^*\enspace;
    \end{align*}
  \item For all empty $C$_pertaining triplets $(p, q, q')$:
    \[(\Tif{\tau, \nop, p} \sync \Tif{\tau, \nop, q}) \cdot (\eps, \Tif{\tau, q,
        q'}(C^*)) \cdot (\Tif{\tau, p, \nop} \sync \Tif{\tau, q', \nop}) \subseteq
      \Id\enspace;\]
  \item For all right-empty $C$_pertaining triplets $(p, q, q')$, there exist
    $x, y \in B^*$ and $\rho_1, \rho_2 \in {(B^*)}^2$ such that:
    \[\Tif{\tau, \nop, p} \sync \Tif{\tau, \nop, q} \subseteq \Id\cdot \big((x^*,
      x^*)\rho_1^{-1}\big) \quad\text{and}\quad
      \Tif{\tau, p, \nop} \sync \big((\eps, \Tif{\tau, q, q'}(C^*))\cdot
      \Tif{\tau, q', \nop}\big) \subseteq
      \big(\rho_2^{-1}(y^*, y^*)\big)\cdot \Id\enspace;\]
  \item For all left-empty $C$_pertaining triplets $(p, q, q')$, there exist
    $x, y \in B^*$ and $\rho_1, \rho_2 \in {(B^*)}^2$ such that:
    \[\Tif{\tau, \nop, p} \sync \big(\Tif{\tau, \nop, q} \cdot (\eps, \Tif{\tau, q, q'}(C^*))\big) \subseteq \Id\cdot \big((x^*,
      x^*)\rho_1^{-1}\big) \quad\text{and}\quad
      \Tif{\tau, p, \nop} \sync \Tif{\tau, q', \nop} \subseteq
      \big(\rho_2^{-1}(y^*, y^*)\big)\cdot \Id\enspace;\]
  \item No pertaining triplet is degenerate.
  \end{enumerate}
\end{lem}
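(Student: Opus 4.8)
The plan is to replay the proof of Lemma~\ref{lem:apred} almost verbatim, with the equations $x^\omega = x^\omega \cdot y \cdot x^\omega$, $y \in \cts(x)$, in place of $x^\omega = x^{\omega+1}$. The Syncing Lemma is again the backbone: its first point is literally point~1 of the statement, so everything reduces to analysing its second point at an equation $x^\omega = x^\omega y x^\omega$ and at states $p, p', q, q'$ with $p' \in p.x^\omega$ and $q' \in q.(x^\omega y x^\omega)$. Evaluating $\tau$ on $s \cdot x^\omega \cdot t$ and on $s \cdot x^\omega \cdot z \cdot x^\omega \cdot t$ through Lemmas~\ref{lem:eval} and~\ref{lem:omega} exposes, for $N = n!$ large, a state $P$ sitting in a loop inside the left $x^N$-block and two states $Q, Q'$ sitting in loops inside the two right $x^N$-blocks, joined by a segment from $Q$ to $Q'$ whose label lies in $\cts(x)^*$ (it is a piece of $x$-powers around the letter $y \in \cts(x)$); the outputs $\beta, \beta', \beta''$ of the three loops, together with $C = \cts(x)$ and that segment, carry all the relevant combinatorial data. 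A pumping argument on the product of $\tau$ with itself, exactly as in the proof of Lemma~\ref{lem:apred}, shows that every equation/state instance of the Syncing Lemma is subsumed by some $C$-pertaining triplet and conversely, which is what turns the infinitely many equations into a finite check.

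For the ``only if'' direction I would assume $\vDA$-continuity and apply the Syncing Lemma. By Lemma~\ref{lem:omega} the two outputs have the shapes $s_1 \cdot b^{\omega-1} \cdot t_1$ and $s_2 \cdot (b')^{\omega-1} \cdot \Tif{\tau, q, q'}(z) \cdot (b'')^{\omega-1} \cdot t_2$, and the four regimes full / empty / left-empty / right-empty are precisely the Boolean patterns recording which of $\beta, \beta', \beta''$ are empty, i.e.\ which block contributes a genuinely infinite $\omega$-factor. In the full case both sides carry $\omega$-factors; by Lemma~\ref{lem:apeq} (applicable since $\vDA$ is aperiodic and supercancellative) the equalizer set of the two outputs, when nonempty, has the stated product shape $\Id \cdot \big((x^*, x^*)\rho_1^{-1}\big) \times \big(\rho_2^{-1}(y^*, y^*)\big) \cdot \Id$, so the two input synchronizations being inside it gives the first two inclusions, while the nonemptiness of that equalizer set — the middle factor $\Tif{\tau, q, q'}(z)$ being absorbable between an $x^\omega$ and a $y^\omega$, hence using no further letter — is exactly the third inclusion $\Tif{\tau, q, q'}(C^*) \subseteq \cts(x\cdot y)^*$. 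The empty case gives the identity inclusion of point~3; the left-empty and right-empty cases keep only one of the two $x^\omega$-blocks alive, hence constrain only the pair $(x, \rho_1)$ \resp the pair $(y, \rho_2)$ and push the middle segment onto the surviving side, yielding points~4 and~5. Finally no pertaining triplet can be degenerate: a degenerate one would make exactly one of $\tau(s x^\omega t)$, $\tau(s x^\omega z x^\omega t)$ carry a genuinely infinite profinite factor and the other not, and two such profinite words are never equal in $\vDA$ (which is aperiodic and separates words), contradicting the Preservation Lemma.

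For the ``if'' direction I would argue as in the second half of the proof of Lemma~\ref{lem:apred}: from an equation $x^\omega = x^\omega y x^\omega$ and states witnessing it, a pumping argument on an appropriate product automaton produces a $C$-pertaining triplet $(P, Q, Q')$ governing the situation (with $C = \cts(x)$ and the $Q$-to-$Q'$ segment recording the passage through $y$); feeding the matching listed inclusion into the computation and using Lemma~\ref{lem:uvxyst} to line up the primitive roots of the various loop outputs through a common pair $z, z'$, one reduces the desired $\tau(s x^\omega t) =_\vDA \tau(s x^\omega y x^\omega t)$ to an equality of explicit $\omega$-terms that both collapse to the same profinite word, just as the $\vA$ computation ended with both sides equal to $z_2^\omega$ (or to $(z'z)^\omega z'$). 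The empty, left-empty and right-empty cases are handled just like the ``$\beta, \beta'$ both empty'' branch there, once the appropriate finite data — common prefixes/suffixes, and whether the middle segment is attached on the left or on the right — has been stripped.

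The main obstacle is the combinatorics-on-words bookkeeping in the ``if'' direction. Unlike the $\vA$ case there is now a middle loop on $Q'$ in addition to the outer loop on $P$, so the target equality relates a word carrying three infinite blocks (a prefix, a power of a primitive $z_2$, a middle factor, a power of a primitive $z_3$, then a suffix) to a word carrying a single one (a prefix, a power of a primitive $z_1$, then a suffix); one must verify that the content bound $\Tif{\tau, q, q'}(C^*) \subseteq \cts(x\cdot y)^*$ is exactly what collapses the middle block inside $\vDA$, while the prefix/suffix constraints reconcile $z_1$ against $z_2$ and $z_3$, and that no weaker condition would do. A secondary subtlety is pinning down the split between full, empty, left-empty and right-empty triplets precisely enough that the case analysis underlying Lemma~\ref{lem:apeq} applies unchanged in each.
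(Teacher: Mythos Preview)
Your proposal is correct and follows the paper's own (omitted) approach: the paper explicitly states that the proofs for $\vJ$, $\vR$, $\vL$, and $\vDA$ are ``simple variants of that of Lemma~\ref{lem:apred}'' and omits them, and your sketch is precisely such a variant, using the equations $x^\omega = x^\omega y x^\omega$ and Lemmas~\ref{lem:apeq} and~\ref{lem:uvxyst} exactly as the $\vA$ case does. One small slip: you write ``three infinite blocks'' where the parenthetical that follows describes two; and in the pumping step you should make explicit that the loop state $P$ found when syncing with $Q$ on the prefix side can be reused when syncing with $Q'$ on the suffix side (since once $P$ is in its $u$-loop it stays there, so the second product argument can be taken at multiples of the loop length), but this is the natural adaptation of the product argument in Lemma~\ref{lem:apred} and poses no real difficulty.
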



\subsection[Deciding Com- and Ab-continuity]{Deciding Com- and Ab-continuity}

The case of $\vCom$ and $\vAb$ is comparatively much simpler, in particular
because these varieties are defined using a finite number of equations for each
alphabet.  However, the argument relies on different ideas:
\begin{thm}
  For $\cV = \vCom, \vAb$, it is decidable, given an unambiguous transducer,
  whether it realizes a $\cV$_continuous function.
\end{thm}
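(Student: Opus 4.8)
The plan is to apply the Syncing Lemma, using that for $\cV \in \{\vCom, \vAb\}$ every alphabet $A$ admits a \emph{finite} set of defining equations; the Syncing Lemma's second point then collapses to a finite conjunction of rational-relation inclusions, each of which I show to be decidable.

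Fix an unambiguous transducer $\tau\colon A^* \to B^*$. The first point of the Syncing Lemma, $\tau^{-1}(B^*) \in \cV(A^*)$, is the membership problem of a regular language — the domain of $\tau$ — in $\vCom$ resp.\ $\vAb$, which is classically decidable via the syntactic monoid. For the second point, take as defining set $E = \{ab = ba \mid a, b \in A\}$ for $\vCom$, and $E = \{ab = ba \mid a, b \in A\} \cup \{a^\omega = 1 \mid a \in A\}$ for $\vAb$. For each equation $(u = v) \in E$ and each choice of states $p, q$ and of $p' \in p.u$, $q' \in q.v$ — finitely many choices — one must decide whether
\[
  R_1 \times R_2 \;\subseteq\; \equ_\cV(u', v'), \qquad
  R_1 = \Tif{\tau, \nop, p} \sync \Tif{\tau, \nop, q}, \quad
  R_2 = \Tif{\tau, p', \nop} \sync \Tif{\tau, q', \nop},
\]
where $u' = \Tif{\tau, p, p'}(u)$ and $v' = \Tif{\tau, q, q'}(v)$. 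The word $v'$ is always an effectively computable element of $B^*$; $u'$ is such when $u$ is a word; and when $u = a^\omega$, Lemma~\ref{lem:omega} applied to the unambiguous transducer $\Tif{\tau, p, p'}$ yields, effectively, words $\sigma, \eta, \theta \in B^*$ with $u' = \sigma \cdot \eta^{\omega - 1} \cdot \theta$.

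The crux is that $\equ_\cV(u', v')$ is, in every case, cut out by a single affine constraint on the Parikh vectors of its four components. For words $w, w'$ one has $w =_\cV w'$ iff $w$ and $w'$ have the same Parikh vector in $\bbn^{|B|}$: membership of a word in a $\cV$-language depends only on its Parikh vector, and distinct Parikh vectors are separated by a $\cV$-language. For $\vAb$ one moreover has $\eta^{\omega - 1} =_\vAb \eta^{-1}$, since $\eta^{n!} \to 1$ in every finite abelian group; thus $u' = \sigma \cdot \eta^{\omega - 1} \cdot \theta$ has a well-defined abelianization $\mathrm{ab}(u') := \mathrm{Par}(\sigma) - \mathrm{Par}(\eta) + \mathrm{Par}(\theta) \in \mathbb{Z}^{|B|}$, and profinite words whose abelianizations lie in $\mathbb{Z}^{|B|}$ are $=_\vAb$ iff those abelianizations agree ($\vAb(B^*)$ being the preimages, under the Parikh morphism $B^* \to \mathbb{Z}^{|B|}$, of unions of cosets of finite-index subgroups). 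Writing $\mathrm{ab}$ for $\mathrm{Par}$ on words and $c = \mathrm{ab}(v') - \mathrm{ab}(u') \in \mathbb{Z}^{|B|}$, we obtain $(s, s', t, t') \in \equ_\cV(u', v')$ iff $(\mathrm{Par}(s) - \mathrm{Par}(s')) + (\mathrm{Par}(t) - \mathrm{Par}(t')) = c$. Hence $R_1 \times R_2 \subseteq \equ_\cV(u', v')$ holds iff $\Delta_1 + \Delta_2 \subseteq \{c\}$, where $\Delta_i = \{\mathrm{Par}(x) - \mathrm{Par}(y) \mid (x, y) \in R_i\} \subseteq \mathbb{Z}^{|B|}$.

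It remains to decide $\Delta_1 + \Delta_2 \subseteq \{c\}$. Each $R_i$ is a rational relation, namely an input synchronization of rational functions, hence the image of a regular language under a monoid morphism into $B^* \times B^*$; composing with the Parikh morphism $B^* \times B^* \to \bbn^{2|B|}$ and with the linear map $(a, b) \mapsto a - b$, Parikh's theorem shows $\Delta_i$ to be an effectively computable semilinear subset of $\mathbb{Z}^{|B|}$. Since a Minkowski sum of semilinear sets is effectively semilinear and containment of a semilinear set in a singleton is decidable, the test goes through; running it over the finitely many equations and state choices decides the second point of the Syncing Lemma, and together with the first point this decides $\cV$-continuity. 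I expect the only delicate point to be the third paragraph's bookkeeping — identifying $\equ_\cV(u', v')$ with a single Parikh-linear set — which for $\vAb$ hinges on the identity $\eta^{\omega - 1} =_\vAb \eta^{-1}$ turning the otherwise profinite abelianization of $u'$ into an honest integer vector.
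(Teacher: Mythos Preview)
Your proposal is correct and follows essentially the same route as the paper: apply the Syncing Lemma with the finite defining equation sets for $\vCom$ and $\vAb$, then reduce the resulting inclusion tests to decidable questions about semilinear sets via Parikh images, handling the $a^\omega = 1$ equations by using that $\eta^{\omega-1}$ acts as the inverse of $\eta$ in $\vAb$. Your presentation is somewhat more explicit (the vector $c$ and the test $\Delta_1 + \Delta_2 \subseteq \{c\}$), but the underlying argument is the same.
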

\begin{proof}
  We apply the Syncing Lemma.  Its first point is clearly decidable.  We reduce
  its second point to decidable properties about semilinear sets (see,
  e.g.,~\cite{ginsburg66}).  We also rely on the notion of Parikh image, that is,
  the mapping $\pkh\colon A^* \to \bbn^A$ such that $\pkh(w)$ maps $a \in A$ to
  the number of $a$'s in the word $w$.

  Since every $\vAb$_continuous function is $\vCom$_continuous
  (Proposition~\ref{prop:abtocom}), the conditions to test for $\vAb$_continuity
  are included in those for $\vCom$_continuity---this can also be seen as a
  consequence of the fact that if $u, v$ are words,
  $\equ_\vAb(u, v) = \equ_\vCom(u,v)$.

  Let $\tau\colon A^* \to B^*$ be a given transducer.  Consider an equation
  $ab = ba$ and four states $p, p', q, q'$ of $\tau$.  Write
  $u = \Tif{\tau, p, p'}(ab)$ and $v = \Tif{\tau, q, q'}(ba)$.  We ought to
  check, by the Syncing Lemma, the inclusion in
  $\equ_\vCom(u, v) = \{(s, s', t, t') \mid s\cdot u \cdot t =_\vCom s' \cdot v
  \cdot t'\}$ of some input synchronization.  Now this set is the set of
  $(s, s', t, t')$ such that
  $\pkh(s\cdot u \cdot t) = \pkh(s' \cdot v \cdot t')$, and is thus defined by a
  simple semilinear property.  The input synchronizations themselves, e.g.,
  $\Tif{\tau, \nop, p} \sync \Tif{\tau, \nop, q}$, are rational relations, and
  their component-wise Parikh image is thus a semilinear set.  Since the
  inclusion of semilinear sets is decidable, the inclusion of the second point
  of the Syncing Lemma is also decidable.

  For $\vAb$, we should additionally check the equations $a^\omega = 1$.  The
  reasoning is similar.  Consider three states $(p, p', q)$, and write
  $x \cdot u^{\omega-1} \cdot y$ for $\Tif{\tau, p, p'}(a^\omega)$.  By
  commutativity and the fact that $u^{\omega-1}$ acts as an inverse of $u$ in
  the equations holding in $\vAb$, we have that
  $(s, s', t, t') \in \equ_\vAb(x\cdot u^{\omega-1} \cdot y, 1)$ iff
  $s\cdot t =_\vAb s'\cdot u \cdot t'$.  This again reduces the inclusion of the
  second point of the Syncing Lemma to a decidable semilinear property.
\end{proof}

\section{Discussion}

We presented a study of continuity in functional transducers, on the one hand
focused on general statements (Section~\ref{sec:contapp}), on the other hand on
continuity for classical varieties.  The heart of this contribution resides in
decidability properties (Section~\ref{sec:dec}), although we also addressed
natural and related questions in a systematic way
(Section~\ref{sec:intermezzos}).  We single out two main research directions.

First, there is a sharp contrast between the genericity of the Preservation and
Syncing Lemma and the technicality of the actual proofs of decidability of
continuity.  To which extent can these be unified and generalized?  We know of
two immediate extensions: 1.\ the generic results of Section~\ref{sec:contapp}
readily apply to Boolean algebras of languages closed under quotient, a
relaxation of the conditions imposed on varieties, and 2.\
Proposition~\ref{prop:transtocont} and Lemma~\ref{lem:nftocV} can be shown to
also hold for the varieties $\vG_p$ of languages recognized by $p$-groups, hence
$\vGp$_continuity is decidable for transducers.  Beyond these two points, we do
not know how to show decidability for $\vGnil$ (which is the \emph{join} of the
$\vGp$), and the surprising complexity of the equalizer sets for some Burnside
varieties (e.g., the one defined by $x^2 = x^3$, see the Remark on
page~\pageref{rk:almeida}) leads us to conjecture that continuity may be
undecidable in that case, hence that no unified way to show the decidability of
continuity exists.

Second, the notion of continuity may be extended to more general settings.  For
instance, departing from regular languages, it can be noted that every recursive
function is continuous for the class of recursive languages.  Another natural
generalization consists in studying $(\cV, \cW)$_continuity, that is, the
property for a function to map $\cW$_languages to $\cV$_languages by inverse
image.  This would provide more flexibility for a sufficient condition for
cascades of languages (or stackings of circuits, or nestings of formulas) to be
in a given variety.

\subsubsection*{Acknowledgment.}  We are deeply indebted to Shaull Almagor,
Jorge Almeida (in particular for the Remark on page~\pageref{rk:almeida}), Luc
Dartois, Bruno Guillon (in particular for the Remark on
page~\pageref{rk:guillon}), Ismaël~Jecker, and Jean-Éric Pin for their
insightful comments and kind help.

The first and third authors are partly funded by the DFG Emmy Noether program
(KR~4042/2); the second author is funded by the DeLTA project
(ANR-16-CE40-0007). 

\bibliographystyle{plainurl}
\bibliography{bib}
\end{document}